\algrenewcommand\algorithmicindent{0.7em}
\pgfplotsset{compat=1.11,
        /pgfplots/ybar legend/.style={
        /pgfplots/legend image code/.code={%
        \draw[##1,/tikz/.cd,bar width=3pt,yshift=-0.2em,bar shift=0pt]
                plot coordinates {(0cm,0.8em)};},
},
}
\DeclareMathAlphabet{\mathcalligra}{T1}{calligra}{m}{n}
\algrenewcommand\alglinenumber[1]{\tiny #1:}
\pgfplotsset{every tick label/.append style={font=\Large}}
\pgfplotsset{width=7cm,compat=1.8}
\definecolor{bblue}{HTML}{4F81BD}
\definecolor{rred}{HTML}{C0504D}
\definecolor{ggreen}{HTML}{9BBB59}
\definecolor{ppurple}{HTML}{9F4C7C}
\title{Dynamic Graph Operations: A Consistent Non-blocking Approach}
\author{
	Bapi Chatterjee\thanks{Supported by the European Union's Horizon 2020 
		research and innovation programme under the Marie Skodowska-Curie grant 
		agreement No. 754411 (ISTPlus).}\\
	Institute of Science and Technology\\ 
	Austria \\
	\texttt{bapi.chatterjee@ist.ac.at}\\
	\And
	Sathya Peri\thanks{Partially supported by a research grant from Intel, USA.} \\
	Department of Computer Science \& Engineering\\
	Indian Institute of Technology Hyderbad, India\\
	\texttt{sathya\_p@iith.ac.in}
	\And
	Muktikanta Sa$^\dagger$ \\
	Department of Computer Science \& Engineering\\
	Indian Institute of Technology Hyderbad, India\\
	\texttt{cs15resch11012@iith.ac.in}
}
\date{}
\newcommand{\punt}[1]{}
\newcommand{\cmnt}[1]{}
\algnewcommand{\IIf}[1]{\State\algorithmicif\ #1\ \algorithmicthen}
\algnewcommand{\EndIIf}{\unskip\ \algorithmicend\ \algorithmicif}
\newcommand{\lble} {linearizable\xspace}
\newcommand{\lbty} {linearizability\xspace}
\newcommand{\rbty} {reachability\xspace}
\definecolor{darkblue}{rgb}{0.0, 0.0, 0.55}
\newcommand{\linecomment}[1]{{\scriptsize \textcolor{darkblue}{#1}}}
\newtheorem{theorem}{Theorem}
\newcounter{history}
\newcommand{\secref}[1]{Section~\ref{sec:#1}}
\newcommand{\figref}[1]{Figure~\ref{fig:#1}}
\newcommand{\lineref}[1]{Line~\ref{lin:#1}}
\newcommand{\linref}[1]{Line~\ref{lin:#1}}
\def\Statenolinnum#1{{\def\alglinenumber##1{}\State #1}\addtocounter{ALG@line}{-1}}
\newcommand{\ignore}[1]{}
\newcommand{\myparagraph}[1]{\noindent\textbf{#1.}}
\newcommand{\op} {operation\xspace}
\newcommand{\mth} {method\xspace}
\newcommand{\cc} {correctness-criterion\xspace}
\newcommand{\lp} {LP\xspace}
\newcommand{\tru} {\texttt{true}\xspace}
\newcommand{\cntu} {\texttt{continue}\xspace}
\newcommand{\fal} {\texttt{false}\xspace}
\newcommand{\nul} {\texttt{NULL}\xspace}
\newcommand{\brk} {\texttt{break}\xspace}
\newcommand{\vnodes} {{\tt VNodes}\xspace}
\newcommand{\enodes} {{\tt ENodes}\xspace}
\newcommand{\enode}{{\tt ENode}\xspace}
\newcommand{\vnode}{{\tt VNode}\xspace}
\newcommand{\bfsnode}{{\tt BFSNode}\xspace}
\newcommand{\bfslistnode}{{\tt BFSListNode}\xspace}
\newcommand{\spnode}{{\tt SPNode}\xspace}
\newcommand{\pnode}{{\tt PNode}\xspace}
\newcommand{\bfstree}{{\tt BFS\text{-}tree}\xspace}
\newcommand{\sptree}{{\tt SP\text{-}tree}\xspace}
\newcommand{\vlist} {vertex-list\xspace}
\newcommand{\elist} {edge-list\xspace}
\newcommand{\elists} {edge-lists\xspace}
\newcommand{\vh}{\texttt{vh}\xspace}
\newcommand{\vt}{\texttt{vt}\xspace}
\newcommand{\eh}{\texttt{eh}\xspace}
\newcommand{\addv}{\textsc{PutV}\xspace}
\newcommand{\remv}{\textsc{RemV}\xspace}
\newcommand{\adde}{\textsc{PutE}\xspace}
\newcommand{\reme}{\textsc{RemE}\xspace}
\newcommand{\conv}{\textsc{GetV}\xspace}
\newcommand{\cone}{\textsc{GetE}\xspace}
\newcommand{\opn}{\textsc{Op}\xspace}
\newcommand{\locvplus}{\textsc{locV}\xspace}
\newcommand{\loceplus}{\textsc{locE}\xspace}
\newcommand{\createe} {\textsc{CEnode}\xspace}
\newcommand{\createv}{\textsc{CVnode}\xspace}
\newcommand{\convplus}{\textsc{ConVPlus}\xspace}
\newcommand{\add}{\textsc{Add}\xspace}
\newcommand{\addsp}{\textsc{AddToSPTree}\xspace}
\newcommand{\updatesp}{\textsc{UpdateSPTree}\xspace}
\newcommand{\rem}{\textsc{Remove}\xspace}
\newcommand{\con}{\textsc{Contains}\xspace}
\newcommand{\fadd}{FetchAndAdd\xspace}
\newcommand{\rlx}{\textsc{Rlxd}\xspace}
\newcommand{\checkvisited}{\textsc{ChkVisit}\xspace}
\newcommand{\visitedarray}{\texttt{VisA}\xspace}
\newcommand{\of}{obstruction-free\xspace}
\newcommand{\Nbk}{Non-blocking\xspace}
\newcommand{\nbk}{non-blocking\xspace}
\newcommand{\vcs}{vertices\xspace}
\newcommand{\cas}{compare-and-swap\xspace}
\newcommand{\CAS}{\texttt{CAS}\xspace}
\newcommand{\faa}{fetch-and-add\xspace}
\newcommand{\FAA}{\texttt{FAA}\xspace}
\newcommand{\ds}{data structure\xspace}
\newcommand{\cds}{concurrent data structure\xspace}
\newcommand{\lf}{lock-free\xspace}
\newcommand{\wf}{wait-free\xspace}
\newcommand{\node}{node}
\newcommand{\Node}{Node}
\newcommand{\head}{{\tt Head}\xspace}
\newcommand{\Init}{\textsc{Init\xspace}}
\newcommand{\enext}{{\tt enxt}\xspace}
\newcommand{\vnext}{{\tt vnxt}\xspace}
\newcommand{\bnext}{{\tt nxt}\xspace}
\newcommand{\eleft}{{\tt el}\xspace}
\newcommand{\eright}{{\tt er}\xspace}
\newcommand{\pointv}{{\tt ptv}\xspace}
\newcommand{\ecount}{{\tt ecnt}\xspace}
\newcommand{\lecount}{{\tt ecnt}\xspace}
\newcommand{\tcount}{{\tt cnt}\xspace}
\newcommand{\eweight}{{\tt w}\xspace}
\newcommand{\opitem}{{\tt oi}\xspace}
\newcommand{\opstruct}{{\tt OpItem}\xspace}
\newcommand{\distarray}{{\tt DistA}\xspace}
\newcommand{\help}{Help\xspace}
\newcommand{\eop}[2]{$\langle$#1,#2$\rangle$\xspace}
\newcommand{\ventp}{{\tt VERTEX OR EDGE NOT PRESENT}\xspace}
\newcommand{\ep}{{\tt EDGE PRESENT}\xspace}
\newcommand{\scan}{\textsc{Scan}\xspace}
\newcommand{\collect}{\textsc{Collect}\xspace}
\newcommand{\comparetree}{\textsc{CmpTree}\xspace}
\newcommand{\spcomparetree}{\textsc{SPCmpTree}\xspace}
\newcommand{\comparetreegraph}{\textsc{CmpTGph}\xspace}
\newcommand{\isMarked}{\textsc{isMrkd}\xspace}
\newcommand{\MarkedRef}{\textsc{Mrk}\xspace}
\newcommand{\unMarkedRef}{\textsc{UnMrk}\xspace}
\newcommand{\putv}{\textsc{PutV}\xspace}
\newcommand{\pute}{\textsc{PutE}\xspace}
\newcommand{\Put}{\textsc{Put}\xspace}
\newcommand{\Get}{\textsc{Get}\xspace}
\newcommand{\sptclt}{\textsc{SPTClt}\xspace}
\newcommand{\bfstclt}{\textsc{BFSTClt}\xspace}
\newcommand{\bctclt}{\textsc{BCTClt}\xspace}
\newcommand{\diametertclt}{\textsc{DiaTClt}\xspace}
\newcommand{\checknegcycle}{\textsc{CheckNegCycle}\xspace}
\newcommand{\getsp}{\textsc{SSSP}\xspace}
\newcommand{\getbfs}{\textsc{BFS}\xspace}
\newcommand{\getbc}{\textsc{BC}\xspace}
\newcommand{\getdia}{\textsc{GD}\xspace}
\newcommand{\spscan}{\textsc{SPScan}\xspace}
\newcommand{\bfsscan}{\textsc{BFSScan}\xspace}
\newcommand{\bcscan}{\textsc{BCScan}\xspace}
\newcommand{\diascan}{\textsc{DiaScan}\xspace}
\newcommand{\BFS}{Breadth-first search\xspace}
\newcommand{\bfs}{breadth-first search\xspace}
\newcommand{\sssp}{single-source shortest-path\xspace}
\newcommand{\bc}{betweenness centrality\xspace}
\newcommand{\idx}{\texttt{index}\xspace}
\newcommand{\sizee}{\texttt{size}\xspace}
\newcommand{\sigmaa}{\texttt{sigmaA}\xspace}
\newcommand{\betweenness}{\texttt{CbA}\xspace}
\newcommand{\deltaa}{\texttt{deltaA}\xspace}
\newcommand{\predlist}{\texttt{PredlistA}\xspace}
\newcommand{\level}{\texttt{level}\xspace}
\newcommand{\maxlevel}{\texttt{maxLevel}\xspace}
\newcommand{\Ligra}{Ligra\xspace}
\newcommand{\panighm}{PANIGRAHAM\xspace}
\newcommand{\vkey}{{\tt v}\xspace}
\newcommand{\ekey}{{\tt e}\xspace}
\newcommand{\key}{{\tt key}\xspace}
\newcommand{\getgphalgo}{\textsc{Op}\xspace}
\newcommand{\treeclt}{\textsc{TreeCollect}\xspace}
\newcommand{\createtreenode}{\textsc{CTNode}\xspace}
\newcommand{\treenode}{{\tt SNode}\xspace}
\newcommand{\initbucket}{\textsc{initBkt}\xspace}
\newcommand{\getresponse}{\textsc{GetResponse}\xspace}
\newcommand{\hasmember}{\textsc{HasMember}\xspace}
\newcommand{\freeze}{\textsc{Freeze}\xspace}
\newcommand{\invoke}{\textsc{Invoke}\xspace}
\newcommand{\apply}{\textsc{Apply}\xspace}
\newcommand{\resize}{\textsc{Resize}\xspace}
\newcommand{\Mod}{\textbf{mod}\xspace}
\newcommand{\Hsize}{{\tt size}\xspace}
\newcommand{\hpred}{{\tt pred}\xspace}
\newcommand{\set}{{\tt set}\xspace}
\newcommand{\ok}{{\tt ok}\xspace}
\newcommand{\optype}{{\tt optype}\xspace}
\newcommand{\done}{{\tt done}\xspace}
\newcommand{\resp}{{\tt resp}\xspace}
\newcommand{\bucket}{{\tt bucket}\xspace}
\newcommand{\buckets}{{\tt buckets}\xspace}
\newcommand{\grow}{{\tt grow}\xspace}
\newcommand{\fset}{{\tt FSet}\xspace}
\newcommand{\fsetnode}{{\tt FSetNode}\xspace}
\newcommand{\fsetop}{{\tt FSetOp}\xspace}
\newcommand{\hnode}{{\tt HNode}\xspace}
\newcommand{\hnodes}{{\tt HNodes}\xspace}
\newcommand{\operation}{{\tt Operation}\xspace}
\newcommand{\oper}{{\tt op}\xspace}
\newcommand{\lft}{{\tt left}\xspace}
\newcommand{\rght}{{\tt right}\xspace}
\newcommand{\statee}{{\tt state}\xspace}
\newcommand{\childcasop}{{\tt ChildCASOp}\xspace}
\newcommand{\isleft}{{\tt ifLeft}\xspace}
\newcommand{\expected}{{\tt expected}\xspace}
\newcommand{\updt}{{\tt update}\xspace}
\newcommand{\expert}{{\tt expert}\xspace}
\newcommand{\relocateop}{{\tt RelocateOp}\xspace}
\newcommand{\dest}{{\tt dest}\xspace}
\newcommand{\destop}{{\tt destOp}\xspace}
\newcommand{\removekey}{{\tt removeKey}\xspace}
\newcommand{\replacekey}{{\tt replaceKey}\xspace}
\newcommand{\Root}{{\tt root}\xspace}
\newcommand{\VCAS}{{\tt VCAS}\xspace}
\newcommand{\find}{\textsc{Find}\xspace}
\newcommand{\flag}{\textsc{Flag}\xspace}
\newcommand{\helpchildcas}{\textsc{HelpChildCAS}\xspace}
\newcommand{\helprelocate}{\textsc{HelpRelocate}\xspace}
\newcommand{\isnull}{\textsc{IsNull}\xspace}
\newcommand{\getflag}{\textsc{GetFlag}\xspace}
\newcommand{\unflag}{\textsc{UnFlag}\xspace}
\newcommand{\helpmarked}{\textsc{HelpMarked}\xspace}
\newcommand{\ischildcas}{\textsc{IsChildCAS }\xspace}
\newcommand{\setnull}{\textsc{SetNull}\xspace}
\newcommand{\isrelocate}{\textsc{IsRelocate}\xspace}
\newcommand{\gettid}{\textsc{GetThId}\xspace}
\newcommand{\found}{{\tt FOUND}\xspace}
\newcommand{\none}{{\tt NONE}\xspace}
\newcommand{\notfoundl}{{\tt NOTFOUND\_L}\xspace}
\newcommand{\notfoundr}{{\tt NOTFOUND\_R}\xspace}
\newcommand{\marked}{{\tt MARKED}\xspace}
\newcommand{\childcas}{{\tt CHILDCAS}\xspace}
\newcommand{\abort}{{\tt ABORT}\xspace}
\newcommand{\relocate}{{\tt RELOCATE}\xspace}
\newcommand{\ongoing}{{\tt ONGOING}\xspace}
\newcommand{\caschild}{{\tt CASCHILD}\xspace}
\newcommand{\successful}{{\tt SUCCESSFUL}\xspace}
\newcommand{\failed}{{\tt FAILED}\xspace}
\newcommand{\hashadd}{\textsc{HashAdd}\xspace}
\newcommand{\hashrem}{\textsc{HashRem}\xspace}
\newcommand{\hashcon}{\textsc{HashCon}\xspace}
\newcommand{\bstcon}{\textsc{BSTCon}\xspace}
\begin{document}

\maketitle

\begin{abstract}

Graph algorithms enormously contribute to the domains such as 
blockchains, social networks, biological networks, telecommunication networks, 
and several others. The ever-increasing demand of data-volume as well as speed 
of such applications have essentially transported these applications from their 
comfort zone: static setting, to a challenging territory of dynamic updates. 
At the same time, mainstreaming of multi-core processors have entailed that the 
dynamic applications should be able to exploit concurrency as soon as 
parallelization gets inhibited. Thus, the design and 
implementation of efficient concurrent dynamic graph algorithms have become 
significant. 

This paper reports a novel library of concurrent shared-memory algorithms for 
breadth-first search (BFS), single-source shortest-path (SSSP), and betweenness 
centrality (BC) in a dynamic graph. The presented 
algorithms are provably  non-blocking and linearizable. We extensively evaluate 
C++ implementations of the algorithms through several micro-benchmarks. The 
experimental results demonstrate the scalability with the number of threads. 
Our experiments also highlight the limitations of static graph analytics 
methods in dynamic setting. 
\\
\textbf{keywords:}{concurrent data structure, lock-free, obstruction-free,  directed graph, breadth-first-search, single-source-shortest-path,  betweenness 
centrality.}
\end{abstract}

\section{Introduction} \label{sec:intro}
A graph represents the pairwise relationships between objects or entities that underlie the complex frameworks such as blockchains \cite{AnjanaKPRS19}, social networks \cite{CataneseMFFP11}, 
semantic-web \cite{bhatia2018understanding}, biological networks \cite{bti167}, and many others. Often these applications are implemented on \textit{dynamic} graphs: they undergo changes like addition and removal of vertices and/or edges\cite{Demetrescu+:DynGraph::book:2004} over time. 
For example, consider the computation of shortest path or centrality between nodes in a real-time dynamically changing social network as highlighted in \cite{KasCC13}. 
Such settings are challenging and approaches such 
as incremental computation \cite{KasCC13} or streaming framework, where a graph operation is performed over a static temporal snapshot of the data structure, e.g. Kineograph \cite{ChengHKMWWYZZC12}, GraphTau \cite{IyerLDS16}, are currently adopted. However, application of concurrency in dynamic graph algorithms is largely unexplored where dynamic dataset-updates severely hinder parallel operation-processing designed for static graphs. 

%
With the rise of multi-core computers around a decade back, \emph{\cds{s}} have become popular, for they are able to harness the power of multiple cores effectively. Several \cds{s} have been developed: stacks~\cite{Hendler+:LFStack:SPAA:2004}, queues~\cite{Barnes:LFDS:SPAA:1993, 
Herlihy+:OFDQue:icdcs:2003, Kogan+:WFQue:ppopp:2011, 
Stellwag+:WFDque:sies:2009, Shavit+:LFQ:DC:2008}, 
linked-lists~\cite{Harris:NBList:disc:2001,Valois:LFList:podc:1995,Zhang+:NBUnList:disc:2013,Heller+:LazyList:PPL:2007,Timnat:WFLis:opodis:2012,chatterjee2016help},
 hash tables~\cite{Michael:LFHashList:spaa:2002,Liu+:LFHash:PODC:2014}, binary search 
trees~\cite{EllenFRB10,Natarajan+:LFBST:ppopp:2014,Chatterjee:+LFBST:PODC:2014, BrownER14,Ramachandran+:LFIBST:ICDCN:2015,chatterjee2016help}, etc. On concurrent graphs, Kallimanis et al. \cite{Kallimanis+:WFGraph:opodis:2015} presented dynamic traversals and Chatterjee et al. \cite{Chatterjee+:NbGraph:ICDCN-19} presented \rbty queries. A data structure allowing concurrent update operations is readily suitable to accommodate dynamic updates. The array of useful graph queries, in particular, those applied in graph analytics, span much beyond \rbty queries. For example, single-source-shortest-path (SSSP) queries appertain to link-prediction in social networks~\cite{bhatia2018understanding}. Similarly, betweenness centrality algorithm finds applications in chemical network analysis~\cite{zhao2015application}. The aforementioned queries inherently scan through (almost) the entire graph. In a dynamic setting, a concurrent update of a vertex or an edge can potentially render the output of such queries inconsistent. 

To elucidate, consider computing the shortest path between two vertices. It requires exploring all possible paths between them followed by returning the set of edges that make the shortest path. It is easy to see that an addition of an edge to another path can make it shorter than the one returned, and similarly, a removal of an edge (from it) could turn it no longer the shortest. Imagine the addition and removal to be concurrent with the query, which can certainly benefit the application. Clearly, the return of the query can be inconsistent with the latest state of the graph.
\ignore{
Naturally, the lifetime of an addition or a removal operation is much shorter than that of a shortest path query and therefore the return of the query would be inconsistent with the state of the graph. 
}

A commonly accepted \cc for \cds{s} is \emph{\lbty} \cite{Herlihy+:lbty:TPLS:1990}, which intuitively infers that the observed output of a concurrent execution of a set of operations should be as if the 
operations were executed in a certain sequential order. Separating a graph query from concurrent updates by way of locking the shared vertices and edges can achieve \lbty. However, locking the portion of the graph, that requires access by a query, which often could very well be its entirety, would obstruct a large number of concurrent fast updates. Even an effortful interleaving of the query- and update-locks 
at a finer granularity does not protect against pitfalls such as deadlock, convoying, etc. A more attractive option is to implement \emph{\nbk} progress, which ensures that some non-faulty (non-crashing) threads complete their operations in a finite number of steps. Unsurprisingly, \nbk \lble design of queries that synchronize with concurrent updates in a dynamic graph is exacting.

In this paper, we describe the design and implementation of a graph data structure library (1) that supports three commonly useful operations -- breadth-first search (BFS), single-source shortest-path (SSSP), and betweenness centrality (BC) (2) that supports dynamic updates of edges and vertices concurrent with the operations and (3) ensures \lbty and \nbk progress. We call it PANIGRAHAM\footnote{Panigraham is the Sanskrit translation of Marriage, which undoubtedly is a prominent phenomenon in our lives resulting in networks represented by graphs.}: \textbf{P}r\textbf{a}ctical \textbf{N}on-block\textbf{i}ng \textbf{Gra}ph Algorit\textbf{hm}s. We formally introduce these graph operations below.

A \textit{graph} is represented as a pair $G = (V,E)$, where $V$ is the set of \textit{\vcs} and $E$ is the set of \textit{edges}. An edge $e \in E$, $e\coloneqq(u,v)$ represents a pair of vertices $u,v \in V$. In a \textit{directed} graph\footnote{In this paper we confine the scope of discussion to directed graphs only.} $e\coloneqq(u,v)$ is an ordered pair, thus has an associated direction: \textit{emanating (outgoing)} from $u$ and \textit{terminating (incoming)} at $v$. We denote the set of outgoing edges from $v$ by $E_v$. Thus, $\cup_{v\in V}E_v = E$. Each edge $e \in E$ has a \textit{weight} $w_e$. A node $v\in V$ is said \textit{reachable} from $u\in V$: $v \hookleftarrow u$ if there are consecutive edges $\{e_1,e_2,\ldots,e_n\}\subseteq E$ such that $e_1$ emanates from $u$ and $e_n$ terminates at $v$.

\begin{enumerate}[leftmargin=5.5mm]
	\item \textbf{BFS}: Given a query vertex $v \in V$, output each vertex $u \in V-v$ reachable from $v$. The collection of vertices happens in a \textit{BFS order}: those at a distance $d_1$ from $v$ is collected before those at a distance $d_2>d_1$.
	\item \textbf{SSSP}: Given a vertex $v \in V$, find a shortest path with respect to total edge-weight from 
	$v$ to every other vertex $u \in V-v$. Note that, given a pair of nodes 
	$u,v \in V$, the shortest path between $u$ and $v$ may not be unique.
	\item \textbf{BC}: Given a vertex $v\in V$, compute $BC(v) = \sum_{s, t \in V}\frac{\sigma(s, t|v)}{\sigma(s,t)}$, where $\sigma(s,t)$ is the number of shortest paths between vertices $s,t \in V$ and $\sigma(s, t|v)$ is that passing through $v$. $BC(v)$ indicates the prominence of $v$ in $V$ and finds several applications where influence of an entity in a network is to be measured.
\end{enumerate} 

\subsection*{Overview}
In a nutshell, we implement a concurrent non-blocking dynamic  directed graph data structure as a composition of lock-free sets: a lock-free hash-table and multiple lock-free BSTs. The set of 
outgoing edges $E_v$ from a vertex $v \in V$ is implemented by a BST, whereas, $v$ itself is a node of the hash-table. Addition/removal of a vertex amounts to the same of a node in the lock-free hash-table, whereas, addition/removal of an edge translates to the same operation in a lock-free BST. Although lock-free progress is composable \cite{Dang2011}, thereby ensuring lock-free updates in the graph, however, optimizing these operations are nontrivial. The operations -- BFS, SSSP, BC -- are implemented by specialized partial snapshots of the composite data structure. In a dynamic concurrent 
non-blocking setting, we apply multi-scan/validate \cite{AfekADGMS93} to ensure the \lbty of a partial snapshot. We prove that these operations are \textit{\nbk}. The empirical results show the effectiveness of our algorithms.

\subsection*{Contributions and paper summary}
\begin{itemize}[leftmargin=5.5mm]
	\item First, we introduce the ADT and describe the \nbk design of 
	directed graph data structure as a composition of lock-free sets. 
	(\secref{graph-ds})
	\item After that, we describe our novel framework as an interface operation with its correctness and progress guarantee (\secref{graph-algo}) followed by the detailed concurrent implementation of \getbfs, \getsp and \getbc. (Section \ref{sec:applications})
	\item We present an experimental evaluation of the data structure. A novel feature of our experiments  is comparison of the concurrent data structure against a parallel graph operations library Ligra~\cite{Shun+:ligra:ppopp:2013}. 
	Our experimental observations demonstrate the power of concurrency for dynamic updates in an application. Utilizing the parallel compute resources -- 56 threads -- in a standard multi-core machine, our implementation performs up to 10x better than Ligra for BFS, SSSP and BC algorithms in some cases. (\secref{results}) 
\end{itemize}

\subsection*{Related work}
The libraries of parallel 
implementation of graph operations are abundant in literature. A relevant survey can be found in~\cite{batarfi2015large}. To mention a few well-known ones: PowerGraph~\cite{gonzalez2012powergraph} Galois 
\cite{Kulkarni+:Galois:PLDI:2007}, Ligra~\cite{Shun+:ligra:ppopp:2013}, and the extensions thereof such as Ligra+~\cite{Shun+:ligra+:dcc:2015}. However, they primarily focus on static queries and natively do not allow updates to the data structure, let alone concurrency. The implementations on GPUs \cite{HongOO11} adapt the existing parallel algorithms to the enhanced available parallelism therein, whereas, on the streaming frameworks the same algorithms are applied on static snapshots \cite{IyerLDS16,ChengHKMWWYZZC12}. A couple of exceptions though: Stinger \cite{Ediger+:Stinger:hpec:2012} and Congra~\cite{pan2017congra} support concurrent updates, however, they steer away from discussing the main challenges accompanying concurrency -- progress guarantee and correctness. To our knowledge, the present work is the first in this direction. 

In the field of concurrent data structures, only a couple of previous attempts are known. Kallimanis et al. \cite{Kallimanis+:WFGraph:opodis:2015} presented wait-free graph with dynamic traversals. Their design is based on adjacency matrix, can not support an unbounded graph and has no known implementation. Chatterjee et al. \cite{Chatterjee+:NbGraph:ICDCN-19} presented a lock-free graph designed on lock-free linked-lists that support \rbty queries. They have limited implementation results. Thus, in the field of concurrent data structures, this is the first attempt to implement SSSP and BC, while ensuring \lbty and non-blocking progress.

\myparagraph{Remark} Before moving to technical details, we pause to 
emphasize our aim/technique in contrast to existing methods.
\begin{itemize}[leftmargin=5.5mm]
	\item We target more applied ``non-local'' operations in graph that are 
	used in the domains such as analytics. In contrast, local operations, for instance, vertex/edge coloring, which are highly interesting in their own merits, are not in the scope of this work.
	\item In general, non-blocking data structures attempt to ensure 
	lock-freedom which packs benefits from both worlds: performance and 
	progress guarantee, however, ensuring lock-freedom in snapshot-like 
	operations, such as ours, is particularly costly, and would easily lose out 
	to highly optimized parallel static graph queries. Therefore, in the spirit 
	of exploring advantages of concurrency in dynamic settings, we worked with 
	obstruction-freedom, which favors performance while still ensuring 
	practical progress guarantee. 
	\item Though we compare against a high performance graph analytics 
	framework, it is important to mention that the individual operations in our 
	work are not even ``inline'' parallelized. Essentially, our experiments 
	display the collective strength of ``sequential'' operations, which satisfy 
	\lbty while interleaving concurrently, in a 
	dynamic setting of large recurrence.
\end{itemize}

\section{\Nbk Graph Data Structure}\label{sec:graph-ds}
Our discussion uses a standard shared-memory model~\cite{Chatterjee+:NbGraph:ICDCN-19} that supports atomic \texttt{read}, \texttt{write}, \texttt{\faa} (\FAA)  and \texttt{\cas} (\CAS) instructions.
\subsection*{The Abstract Data Type (ADT)}
Consider a weighted directed graph $G = (V, E)$ as defined before. A vertex  $v\in V$ has an immutable unique \textit{key} drawn from a totally ordered universe. For brevity, we denote a vertex with key \vkey: $v(\vkey)$ by \vkey itself. Extending on the notations used in Section \ref{sec:intro}, we denote a directed edge  with weight $w$ from the vertex $\vkey_1$ to $\vkey_2$ as $(\vkey_1, \vkey_2| 
w)\in E$. We consider an ADT $\mathcal{A}$ as a set of the operations $\mathcal{A}$:

\ignore{
=\{\addv(\vkey), \remv(\vkey), \conv(\vkey), \adde(\vkey_1, \vkey_2| w), \reme(\vkey_1, \vkey_2), \cone(\vkey_1,\vkey_2),\linebreak\getbfs(\vkey), \getsp(\vkey), \getbc(\vkey)\}$ on $G$. The detailed definition of $\mathcal{A}$ is available in \cite{CPS20}.
}

A precondition for $(\vkey_1, \vkey_2| w)\in E$ is $\vkey_1\in V \land \vkey_2 \in V$. 
\begin{enumerate}[leftmargin=5.5mm]
	\item A $\addv(\vkey)$ updates $V$ to $V\cup\vkey$ and returns \tru if $\vkey \notin V$, otherwise it returns \fal without any update. 
	\item A $\remv(\vkey)$ updates $V$ to $V-\vkey$ and returns \tru if $v(\vkey) \in V$, otherwise it returns \fal without any update.
	\item A $\conv(\vkey)$ returns \tru if $\vkey \in V$, and \fal if $\vkey \notin V$.
	\item A $\adde(\vkey_1, \vkey_2| w)$ 
	\begin{enumerate}[label=(\alph*),leftmargin=\parindent,align=left,labelwidth=\parindent,labelsep=1pt,nosep,topsep=0pt,itemsep=0pt]
		\item updates $E$ to $E \cup (\vkey_1, \vkey_2| w)$ and returns \eop{\tru}{$\infty$} if $\vkey_1 \in V \land \vkey_2 \in V \land (\vkey_1, \vkey_2|\cdot)\notin E$, 
		\item updates $E$ to $E -(\vkey_1, \vkey_2| z) \cup (\vkey_1, \vkey_2| w)$; returns \eop{\tru}{$z$} if $(\vkey_1, \vkey_2|z)\in E$, 
		\item returns \eop{\fal}{$w$} if $(\vkey_1, \vkey_2|w)\in E$ without updates,
		\item returns \eop{\fal}{$\infty$} if $\vkey_1 \notin V \lor \vkey_2 \notin V$ without updates.
	\end{enumerate}	
	\item A $\reme(\vkey_1, \vkey_2)$ updates $E$ to $E - (\vkey_1, \vkey_2| w)$ and returns \eop{\tru}{$w$} if $(\vkey_1, \vkey_2|w)\in E$, otherwise it returns \eop{\fal}{$\infty$} without any update.
	\item A $\cone(\vkey_1,\vkey_2)$ returns \eop{\tru}{$w$} if $(\vkey_1, \vkey_2|w)\in E$, otherwise it returns \eop{\fal}{$\infty$}.
	\item A $\getbfs(\vkey)$, if $\vkey \in V$, returns a sequence of vertices reachable from \vkey arranged in a BFS order as defined before. If $\vkey \notin V$ or $\nexists \vkey' \in V$ s.t. $(\vkey, \vkey'|\cdot)\in E$, it returns \nul.
	\item An $\getsp(\vkey)$, if $\vkey \in V$, returns a set $S(\vkey) = \{d(\vkey_i)\}_{\vkey_i\in V}$, where $d(\vkey_i)$ is the summation of the weights of the edges\footnote{We limit our discussion to positive edge-weights only.} on the shortest-path between $\vkey$ and $\vkey_i$ if $\vkey_i \hookleftarrow \vkey$, and $d(\vkey_i) = \infty$, if $\vkey_i \not\hookleftarrow \vkey$. Note that $d(\vkey) = 0$. There can be multiple paths between $\vkey$ and $\vkey_i$ with the same sum of edge-weights. If $\vkey \notin V$, it returns \nul.
	\item A $\getbc(\vkey)$ returns the betweenness centrality of \vkey as defined before, if $\vkey \in V$. It returns \nul if $\vkey \notin V$.
\end{enumerate}

\begin{figure*}[!ht]
    \centering
   	\captionsetup{font=footnotesize}
   	\begin{subfigure}{.28\textwidth}
	\captionsetup{font=footnotesize, justification=centering}
	\begin{footnotesize}
			\begin{tabbing}
			\hspace{0.05in} \= \hspace{0.05in} \= \hspace{0.05in} \=  \hspace{0.05in} \= \\
			\\
		  \> {\bf struct \bucket \{} \\
			\> \> {\vnode $vn$;} \\
		    \> \}\\  
		    
		    \> {\bf struct \hnode \{} \\
			\> \> {\bucket [];}\\
		    \> \> {int \Hsize;} \\
		    \> \> {\hnode} {\hpred;}\\
		    \> \}\\  
		    	\> {\bf class \vnode \{} \\
				\> \> \texttt{int}  $\vkey$; 
				\\
				\> \> {\vnode~ \vnext;}
				\\
				\> \> {\enode~ \enext;} 
				\\
				\> \> {\opstruct~ \opitem;}
				\\
			\> \} \\
		\end{tabbing}
	\vspace{-0.3in}
	\end{footnotesize}
	\label{fig:struct-vnode}
	\caption{}
	\end{subfigure}
	\begin{subfigure}{.24\textwidth}
	\captionsetup{font=footnotesize, justification=centering}
	\begin{footnotesize}
					\begin{tabbing}
						\hspace{0.05in} \= \hspace{0.05in} \= \hspace{0.05in} \=  \hspace{0.05in} \= \\
			\\
			\> {\bf class \opstruct \{} \\
			 \> \> {int~ \ecount, \visitedarray [];} 
		    \\
			\> \> $\cdots$ // Other fields\\
		    \> \} \\
		\> {\bf class \enode \{} \\
			\> \> \texttt{int}  $\ekey$;   
			\\
			\> \> \texttt{double}  $\eweight$;   
			\\
			\> \> {\vnode~ \pointv;} 
			\\
			\> \> {\enode~ \eleft, \eright;} 
			\\
		\> \}\\
			\> {\bf class \treenode \{} \\
							\> \> {\tt \vnode~   $n$;} 
							\\
							\> \> {\tt \treenode~ \bnext, p;}\\ 
							\> \> {int    \ecount;} 
			                \\
						\> \} \\
		\end{tabbing}
	 \vspace{-0.5in}
	\end{footnotesize}
	\label{fig:struct-opitem}
		\caption{}
		\end{subfigure}
		\begin{subfigure}{.45\textwidth}
		\centerline{\scalebox{0.55}{\input{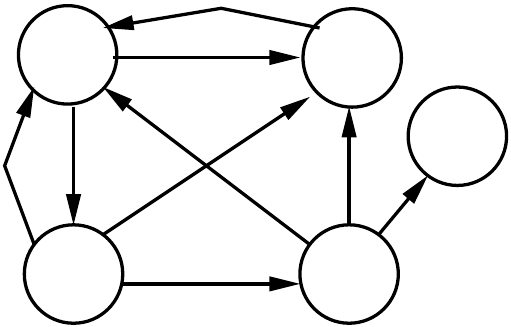_t}}}
		 \vspace{-0.1in}
		\label{subfig:graph}
		\caption{}
		\centerline{\scalebox{0.6}{\input{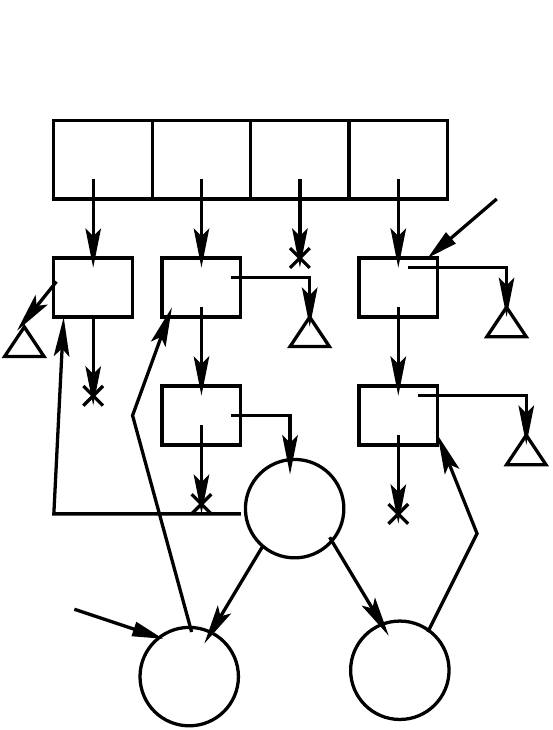_t}}}
		\label{subfig:graph-rep}
		\caption{}
	\end{subfigure}
	\vspace{-3mm}
	\caption{(a) and (b) Data structure components, (c) A sample directed graph, (d) Our implementation of (c) as a composition of a \lf hash-table and \lf BSTs.}
	\label{fig:struct-evnode-graph}
\end{figure*}

\subsection*{Design Requirements}
Firstly, because we intend to implement efficient dynamic modifications in an unbounded graph, we choose its adjacency list representation. An adjacency list of a graph, essentially, translates to a composition of dictionaries: one \textit{\vlist} and as many \textit{\elists} as the number of vertices. The members of the \vlist correspond to $v\in V$, wherein each of them has an associated \elist corresponding to $E_v$. Implementing the ADT requires the \vlist and \elists offer membership queries along with addition/removal of keys. Significantly, the addition/removal need to be \nbk.


Next, in essence, the operations \getbfs, \getsp, and \getbc, also termed as 
\textit{queries}, are special partial snapshots of the data 
structure. Naturally, they scan (almost) the entire graph. This requires 
supporting the scan while keeping track of the 
visited nodes. This is more important for traversals to happen
non-recursively. 

\ignore{
Fundamentally, the queries have different requirements with regards to 
optimization for a 
\lble output in a concurrent non-blocking setting. For 
example, a traversal for \getbfs needs to be aware of any concurrent addition 
or removal of an edge at a vertex that it visited. Similarly, a traversal for a 
\getbc needs to know whether the distance of a visited node from the source has 
changed since its last visit and during its lifetime. For each query, as a 
necessity, the nodes of the \vlist must be able to inform all the threads 
performing query-traversals if it has already been visited: a basic requirement 
of any graph traversal~\cite{Cormen+:IntroAlgo:2009}. 
Our data structure design components mainly focus 
on these non-trivial requirements by the queries in order to anchor 
their traversals efficiently. 
}
\subsection*{Data Structure Components}
Keeping the above requirements in view, we build the data structure based on a composition of  a \lf hash-table implementing the \vlist, and \lf BSTs implementing the \elists. On a skeleton of this composition, we include the design components for efficient traversals and (partial) snapshots. This is a more efficient design as compared to Chatterjee et al.'s approach \cite{Chatterjee+:NbGraph:ICDCN-19} where the component dictionaries are implemented using \lf linked-lists only.

More specifically, the nodes of the \vlist are instances of the class \vnode, see \figref{struct-evnode-graph}(a). A \vnode contains the key of the corresponding vertex along with a pointer to a BST 
implementing its \elist. The most important member of a \vnode is a pointer to an instance of the class \opstruct, which serves the purpose of anchoring the traversals as described above. 

The \opstruct class, see \figref{struct-evnode-graph}(b), encapsulates an array \visitedarray of the size equal to the number of threads in the system, a counter \ecount and other algorithm specific indicators, which we describe in \secref{applications} while specifying the queries. An element of \visitedarray simply keeps a count of the number of times the node is visited by a query performed by the corresponding thread. The counter \ecount is incremented every time an outgoing edge is added or removed at the vertex. This serves an important purpose of notifying a thread if the same edge is removed and 	added since the last visit. 

The class \enode, see \figref{struct-evnode-graph}(b), structures the nodes of an \elist. It encapsulates a key, the edge-weight, the left- and right-child pointers and a pointer to the associated \vnode where the edge terminates; the key in the \enode is that of the \vnode; thus each \enode delegates a directed edge.

The \vnodes are bagged in a linked-list being referred to by a pointer from the \buckets, see \figref{struct-evnode-graph}(a). A resizable hash-table is constructed of the arrays of these \buckets, wherein arrays are linked in the form of a linked-list of \hnodes. 
 
At the bare-bone level, our dynamically resizable \lf  hash-table derives from Liu et al. \cite{Liu+:LFHash:PODC:2014}, whereas the BSTs, implementing the \elists, are based on \lf internal BST of 
Howley et al. \cite{Howley+:NbkBST:SPAA:2012}. We introduce the \opstruct fields in hash-table nodes. To facilitate non-recursive traversal in the \lf BST, we use stacks. As we will explain later, aligning the operations of the hash-table to the state of \opstruct therein brings in nontrivial challenges.

The last but a significant component of our design is the class \treenode, see \figref{struct-evnode-graph}(b). It encapsulates the information which we use to validate a scan of the graph to output a consistent specialized partial snapshot. More specifically, it packs the pointers to \vnodes visited during a scan along with two pointers \bnext and \texttt{p} to keep track of the order of their visit. The field \lecount records the \ecount counter of the corresponding visited \vnode, which enables checking if the visited \vnode has had any addition or removal of an edge since the last visit.\\

\subsection*{\Nbk Data Structure Construction}
Having these components in place, we construct a non-blocking graph data structure in a modular fashion. Refer to \figref{struct-evnode-graph}(d) depicting a partial implementation of a small directed graph shown in \figref{struct-evnode-graph}(c). The \enodes, shown as circles in \figref{struct-evnode-graph}(d), with their children and parent pointers make \lf internal BSTs corresponding to the \elists. For simplicity we have only shown the outgoing edges of vertex 5 in \figref{struct-evnode-graph}(d) while the edges of other vertices are represented by small triangles. Thus, whenever a vertex has outgoing edges, the corresponding \vnode, shown as small rectangles therein, has a non-null pointer pointing to the root of a BST of \enodes. The \vnodes themselves make sorted \lf linked-lists connected to the buckets of a hash-table. The buckets are cells of a bucket-array implementing the \lf hash-table. When required, we add/remove bucket-arrays for an unbounded resizable dynamic design. The \lf~ \vnode-lists have two sentinel \vnodes: \vh and \vt initialized with keys $\text{-}\infty$ and $\infty$, respectively. 

We adopt the well-known technique of \textit{pointer marking} -- using a single-word \CAS -- via bit-stealing 
\cite{Howley+:NbkBST:SPAA:2012,Liu+:LFHash:PODC:2014} to perform lazy 
non-blocking removal of nodes. More pointedly, on a common x86\text{-}64 
architecture, memory has a 64-bit boundary and the last three least significant 
bits are unused; this allows us to use the last one significant bit of a 
pointer to indicate first a \textit{logical removal} of a node and thereafter 
cleaning it from the data structure. Specifically, an \hnode, a \vnode, and an \enode is logically removed by marking its \hpred, \vnext, and \eleft pointer, respectively. We call a node \textit{alive} which is not logically removed.

\ignore{
\begin{figure}[!t]
	\captionsetup{font=footnotesize}
	\begin{subfigure}{.55\textwidth}	
		\captionsetup{font=scriptsize, justification=centering}
		\centerline{\scalebox{0.6}{\input{figs/conGraph-bst1.pdf_t}}}
		\label{fig:ac-Graph}
	\end{subfigure}
	\setlength{\belowcaptionskip}{-10pt}
	\caption{A directed graph and its representation.}
	\label{fig:struct-graph}
\end{figure}
}

\subsection*{Data Structure Invariants}
To prove the correctness of the presented algorithm, we fix the 
\textit{invariants} corresponding to a consistent state of the composite data structure: 
\begin{enumerate}[leftmargin=5.5mm,label=\emph{\alph*})]
\item each \elist maintains a BST order based on the \enode's key $\ekey$, and alive \enodes are reachable from \enext of the corresponding \vnode, 
\item a \vnode that holds a pointer to a BST containing any alive \enode is itself alive, 
\item each alive \vnode is reachable from \vh and \vlist{s} connected to buckets are sorted based on the \vnode's keys $\vkey$, and 
\item an \hnode which contains a \bucket holding a pointer to an alive \vnode is itself alive and an alive \hnode is always connected to the linked-list of \hnodes.
\end{enumerate}

\subsection*{Correctness and Progress Guarantee}
To prove \textit{\lbty} \cite{Herlihy+:lbty:TPLS:1990}, we describe the execution generated by the \ds as a collection of 
\mth invocation and response events. We assign an atomic step between the invocation and response as the \emph{linearization point} (\emph{\lp}) of a \mth call (operation). Ordering the operations by their \lp{s} provide a sequential history of the execution. We prove the correctness of the data structure by assigning a sequential history to an arbitrary execution which is valid i.e. it maintains the invariants. Furthermore, we argue that the data structure is \nbk by showing that the queries would return in finite number of steps if no update operation happens and in an arbitrary execution at least one update operation returns in finite number of steps.

The progress properties specify when a thread invoking operations on the shared memory objects completes in the presence of other concurrent threads. In this context, we provide the graph implementation with \mth{s} that satisfy wait-freedom, based on the definitions in Herlihy and Shavit \cite{Herlihy+:OnNatProg:opodis:2001}. A \mth of a concurrent data-structure is \wf if it completes in finite number of steps. A \ds implementation is \wf if all its \mth{s} are \wf. This ensures per-thread progress and is the most reliable non-blocking progress guarantee in a concurrent system. A \ds is \lf if its \mth{s} get invoked by multiple concurrent threads, then one of them will complete in finite number of steps. 
  
\section{PANIGRAHAM Framework}\label{sec:graph-algo}
\begin{figure*}[!htb]
\captionsetup{font=scriptsize}
\begin{multicols}{2}
\begin{algorithmic}[1]
		\scriptsize
        \renewcommand{\algorithmicprocedure}{\textbf{Operation}}
		\Procedure{$\getgphalgo$($v$)}{}\label{getgphalgostart}
        \State{$tid$ $\gets$ $\gettid$();}\linecomment{//Thread-id to assign a cell of \visitedarray array.}
	\If{($\isMarked(v)$)} 
	\label{lin:getgphalgo-checkmarked} 
	     \State {return  \nul; } \linecomment{//Vertex is not present.}
        \EndIf
        \State{ $\langle st, stree \rangle$ $\gets$ $\scan(v, tid)$;} \label{lin:rcbl-scan}\linecomment{//Invoke Scan }
         \If{($st$ = \tru)} 
          \State return {$stree$;}
         \Else 
         \State{return \nul;}
         \EndIf 
		\EndProcedure\label{getgphalgoend}
        \algstore{getgphalgo}
\end{algorithmic}	
    \hrule
	\begin{algorithmic}[1]
	\algrestore{getgphalgo}
		\scriptsize
		\renewcommand{\algorithmicprocedure}{\textbf{Method}}
		
		\Procedure{$\scan$($v$, $tid$)}{}\label{scanstart}
       \State{list$<\treenode>$ot, nt ; } \linecomment{//Trees to hold the nodes}
          \State{$ot$ $\gets$ \treeclt($v$, $tid$); }\linecomment{//$1^{st}$ Collect}
        \While{(\tru)} \label{whilescan} \linecomment{//Repeat the tree collection}
           \State{$nt$ $\gets$ \treeclt($v$, $tid$); } \linecomment{//$2^{nd}$ Collect}
           \If{(\comparetree($ot, nt$))} 
           \label{lin:scan-comparetree}
           \State{return $\langle \tru, nt \rangle$;}\linecomment{//Two collects are equal}
           \EndIf
            \State{ot $\gets$ nt;} \linecomment{//Retry if two collects are not equal}
        \EndWhile
		\EndProcedure\label{scanend}
        \algstore{scan}
\end{algorithmic}
\hrule
\begin{algorithmic}[1]
	\algrestore{scan}
		\scriptsize
		\renewcommand{\algorithmicprocedure}{\textbf{Method}}
		\Procedure{$\comparetree$($ot, nt$)}{}\label{comparetreestart}
	 \If{($ot$ = \nul$\vee$ $nt$ = \nul)} 
	     \State {return $\fal$ ; }
        \EndIf
        \State{ $oit$ $\gets$ $ot$.head, $nit$ $\gets$ $nt$.head;} 
        \While{($oit$ $\neq$ $ot$.tail $\wedge$ $nit$ $\neq$ $nt$.tail )}
        \If{($oit$.n $\neq$ $nit$.n $\vee$ $oit.\ecount$ $\neq$ $nit.\lecount$ $\vee$ $oit$.p $\neq$ $nit$.p)}\label{lp1} 
          \State{return $\fal$ ; } 
        \EndIf
        \State{$oit$ $\gets$  $oit$.\bnext; $nit$ $\gets$  $nit$.\bnext;}
        \EndWhile
         \If{($oit$.n $\neq$ $nit$.n $\vee$ $oit$.\ecount $\neq$ $nit$.\lecount $\vee$ $oit$.p $\neq$ $nit$.p)}\label{lp2} 
         \State {return $\fal$ ; }
         \Else {\hspace{2mm}return $\tru$ ;  } 
         \EndIf
		\EndProcedure\label{comparetreeend}
        \algstore{comparetree}
\end{algorithmic}

\hrule
\begin{algorithmic}[1]
	\algrestore{comparetree}
		\scriptsize
		\renewcommand{\algorithmicprocedure}{\textbf{Method}}
		\Procedure{ $\checkvisited$($adjn, tid, count$)}{}\label{checkvisitedstart}
	 \If{($adjn$.\opitem.\visitedarray[$tid$] = $count$)}
	     \State {return $\tru$;} \linecomment{//\vnode is not visited}
	     \Else {\hspace{2mm}return $\fal$ ; }
        \EndIf
		\EndProcedure\label{checkvisitedeend}
        \algstore{checkvisited}
\end{algorithmic}
\hrule	
	\begin{algorithmic}[1]
	\algrestore{checkvisited}
\renewcommand{\algorithmicprocedure}{\textbf{Operation}}	
\scriptsize
	\renewcommand{\algorithmicprocedure}{\textbf{Method}}
	\Procedure{$\treeclt$($v$, $tid$)}{}\label{treecltstart}
         \State{queue $<$\treenode$>$ $que$;}\label{treecltqdef}  
         \State{list$<$\treenode$>$$st$;} {\tcount$\gets$\tcount$+$ 1;} 
         \State{$v$.\opitem.\visitedarray[$tid$] $\gets$ \tcount ;}\label{lin:treelt-visited} \linecomment{//Mark it visited} 
        \State{$sn$$\gets$new $\createtreenode$($v$,\nul,\nul,$v.oi$.\ecount);} 
        \State{$st$.$\add$($sn$);}{$que$.enque($sn$);}\label{lin:bfsTree:insert1}  
        \While{($\neg  que$.empty())} 
        \State{ $cvn$ $\gets$ $que$.deque();}  
         \If{(\isMarked($cvn$))} {\cntu;}
         \EndIf
        \State{$itn$ $\gets$ $cvn$.n.\enext;} 
        \State{stack $<$\enode $>$ $S$;}\label{treecltstdef} 
        \While{($itn$ $\vee$ $\neg S$.empty())}
        \While{($itn$ )}
         \If{($\neg$$\isMarked$($itn$))}
         \State{$S$.push($itn$)}; \linecomment{// push the \enode}
         \EndIf
         \State{$itn$ $\gets$ $itn.\eleft$;} 
        \EndWhile
        \State{$itn$ $\gets$ $S$.pop()};
        \If{($\neg$$\isMarked$($itn$))} 
        \State{$adjn$ $\gets$ $itn$.\pointv;}\linecomment{//Get the \vnode}
        \If{($\neg$\isMarked($adjn$))} \label{lin:treeclt-marked-adjn} \linecomment{//Validate it}
         \If{($\neg$\checkvisited($adjn$, $tid$, \tcount))}
       \State{$adjn$.\opitem.\visitedarray[$tid$] $\gets$ \tcount;} \linecomment{//Mark it visited} 
       \State{$sn$ $\gets$ new $\createtreenode$($adjn$,$cvn$,\nul,$adjn.oi$.\ecount);} 
       \State{$st$.$\add$($sn$);}\label{lin:bfsTree:insert2} 
        \State{$que$.enque($sn$);}
        
        \EndIf
        \EndIf
        \EndIf
        \State{$itn$ $\gets$ $itn.\eright$;} 
        \EndWhile
    \EndWhile
     \State{return $st$;}\label{lin:return-bfstree} 
		\EndProcedure\label{treecltend}
		\algstore{treeclt} 
\end{algorithmic}
\end{multicols}
\vspace{-3mm}
\caption{\normalsize Framework interface operation for graph queries.} \label{fig:graph-framework}
\end{figure*}

\ignore{
\begin{figure*}[!t]
\captionsetup{font=scriptsize}
\begin{multicols}{2}
\begin{algorithmic}[1]
		\scriptsize
        \renewcommand{\algorithmicprocedure}{\textbf{Operation}}
		\Procedure{$\getgphalgo$($v$)}{}\label{getgphalgostart}
        \State{$tid$ $\gets$ $\gettid$();} \label{lin:getgphalgo-gettid}
	\If{($\isMarked(v)$)}\label{lin:getgphalgo-checkmarked} 
	     \State {return  \nul; }
        \EndIf
        \State{ $\langle st, stree \rangle$ $\gets$ $\scan(v, tid)$;} \label{lin:rcbl-scan}
         \If{($st$ = \tru)}\unskip
          \State return {$stree$;}
         \Else
         \State{return \nul;}
         \EndIf 
		\EndProcedure\label{getgphalgoend}
        \algstore{getgphalgo}
\end{algorithmic}	
    \hrule

 \scriptsize
	\begin{algorithmic}[1]
	\algrestore{getgphalgo}
		\scriptsize
		\renewcommand{\algorithmicprocedure}{\textbf{Method}}
		
		\Procedure{$\scan$($v$, $tid$)}{}\label{scanstart}
       \State{list$<\treenode>$ot, nt ; } 
          \State{ $ot$ $\gets$ \treeclt($v$, $tid$); }
        \While{(\tru)} \label{whilescan} 
           \State{$nt$ $\gets$ \treeclt($v$, $tid$); } 
           \If{(\comparetree($ot, nt$))} 
           \label{lin:scan-comparetree}
           \State{return $\langle \tru, nt \rangle$;}
           \EndIf
            \State{ot $\gets$ nt;} 
        \EndWhile
		\EndProcedure\label{scanend}
        \algstore{scan}
\end{algorithmic}
\hrule
\begin{algorithmic}[1]
	\algrestore{scan}
		\scriptsize
		\renewcommand{\algorithmicprocedure}{\textbf{Method}}
		\Procedure{$\comparetree$($ot, nt$)}{}\label{comparetreestart}
	 \If{($ot$ = \nul$\vee$ $nt$ = \nul)} 
	     \State {return $\fal$ ; }
        \EndIf
        \State{ $oit$ $\gets$ $ot$.head, $nit$ $\gets$ $nt$.head;} 
        \While{($oit$ $\neq$ $ot$.tail $\wedge$ $nit$ $\neq$ $nt$.tail )}
        \If{($oit$.n $\neq$ $nit$.n $\vee$ $oit.\ecount$ $\neq$ $nit.\lecount$ $\vee$ $oit$.p $\neq$ $nit$.p)}\label{lp1} {\hspace{2mm}return \fal; }
          \State {return $\fal$ ; } 
        \EndIf
        \State{$oit$ $\gets$  $oit$.\bnext; $nit$ $\gets$  $nit$.\bnext;}
        \EndWhile
         \If{($oit$.n $\neq$ $nit$.n $\vee$ $oit$.\ecount $\neq$ $nit$.\lecount $\vee$ $oit$.p $\neq$ $nit$.p)}\label{lp2} 
         \State {return $\fal$ ; }
         \Else 
         \State {\hspace{2mm}return $\tru$ ; }
         \EndIf
		\EndProcedure\label{comparetreeend}
        \algstore{comparetree}
\end{algorithmic}


\hrule
\begin{algorithmic}[1]
	\algrestore{comparetree}
		\scriptsize
		\renewcommand{\algorithmicprocedure}{\textbf{Method}}
		\Procedure{ $\checkvisited$($adjn, tid, count$)}{}\label{checkvisitedstart}
	 \If{($adjn$.\visitedarray[$tid$] = $count$)}
	     \State {return $\tru$;} \linecomment{//\vnode is not visited}
	     \Else 
	     \State {return $\fal$;}
        \EndIf
		\EndProcedure\label{checkvisitedeend}
        \algstore{checkvisited}
\end{algorithmic}
\hrule	
	\begin{algorithmic}[1]
	\algrestore{checkvisited}
\renewcommand{\algorithmicprocedure}{\textbf{Operation}}	
\scriptsize
	\renewcommand{\algorithmicprocedure}{\textbf{Method}}
	\Procedure{$\treeclt$($v$, $tid$)}{}\label{treecltstart}
         \State{queue $<$\treenode$>$ $que$;}\label{treecltqdef} 
         \State{list$<$\treenode$>$$st$;} 
         \State{\tcount$\gets$\tcount$+$ 1;}
         \State{$v$.\visitedarray[$tid$] $\gets$ \tcount ;}\label{lin:treelt-visited} \linecomment{//Mark it visited} 
        \State{$sn$$\gets$new $\createtreenode$($v$,\nul,\nul,$v.oi$.\ecount);}
        \State{$st$.$\add$($sn$);}\label{lin:bfsTree:insert1}  
        \State{$que$.enque($sn$);} 
        \While{($\neg  que$.empty())} 
        \State{ $cvn$ $\gets$ $que$.deque();}  
         \If{(\isMarked($cvn$))} {\cntu;}
         \EndIf
        \State{$itn$ $\gets$ $cvn$.n.\enext;} 
        \State{stack $<$\enode $>$ $S$;}\label{treecltstdef} 
        \While{($itn$ $\vee$ $\neg S$.empty())}
        \While{($itn$ )}
         \If{($\neg$$\isMarked$($itn$))}
         \State{$S$.push($itn$)}; \linecomment{// push the \enode}
         \EndIf
         \State{$itn$ $\gets$ $itn.\eleft$;} 
        \EndWhile
        \State{$itn$ $\gets$ $S$.pop()};
        \If{($\neg$$\isMarked$($itn$))} 
        \State{$adjn$ $\gets$ $itn$.\pointv;}\linecomment{//Get the \vnode}
        \If{($\neg$\isMarked($adjn$))} 
         \If{($\neg$\checkvisited($adjn$, $tid$, \tcount))}
       \State{$adjn$.\visitedarray[$tid$] $\gets$ \tcount;} 
       \State{$sn$ $\gets$ new $\createtreenode$($adjn$,$cvn$,\nul,} 
       \Statenolinnum{ $adjn.oi$.\ecount);}\linecomment{//Create a new \treenode}
       \State{$st$.$\add$($sn$);}\label{lin:bfsTree:insert2} \linecomment{//Insert $sn$ to the $st$} 
        \State{$que$.enque($sn$);}
        
        \EndIf
        \EndIf
        \EndIf
        \State{$itn$ $\gets$ $itn.\eright$;} 
        \EndWhile
    \EndWhile
     \State{return $st$;}\label{lin:return-bfstree} 
		\EndProcedure\label{treecltend}
\end{algorithmic}
\end{multicols}
\vspace{-3mm}
\setlength{\belowcaptionskip}{-15pt}
\caption{Pseudocodes of  single machine \nbk graph application framework} \label{fig:graph-framework}
\end{figure*}

}

\ignore{
\begin{figure*}[!t]
\captionsetup{font=scriptsize}
	\begin{subfigure}{.5\textwidth}	
	\scriptsize
	\hspace{\algorithmicindent} \textbf{Description:}\textcolor{blue}{This is a generalized operation for all graph algorithms. It takes inputs of vertices and returns actual value required by the \getgphalgo only if multi-scan method returns \tru, Otherwise it returns \nul. }
\begin{algorithmic}[1]
		\scriptsize
        \renewcommand{\algorithmicprocedure}{\textbf{Operation}}
		\Procedure{$\getgphalgo$($v$)}{}\label{getgphalgostart}
        \State{$tid$ $\gets$ $\gettid$();} \textcolor{ballblue}{//Get the thread id}
		\State{\textcolor{ballblue}{//Check the vertices which are not deleted }} \If{($\isMarked(v$)  $\vee$ $\isMarked(arg_2$)  $\vee$ $\cdots$)} \label{lin:getgphalgo-checkmarked} 
	     \State {return  \fal; } \textcolor{ballblue}{//Return \fal if validation fails}
        \EndIf
        \State{\textcolor{ballblue}{//invoke the multi-scan method}}
        \State{ $\langle st, stree \rangle$ $\gets$ $\scan(v$, $tid$);} \label{lin:rcbl-scan}
         \If{($st$ = \tru)} 
          \State return {$\langle \tru, stree \rangle$;} \textcolor{ballblue}{//Returns the value of the \getgphalgo}
         \Else {\hspace{2mm}return \nul; } \textcolor{ballblue}{//Otherwise, returns \nul}
         \EndIf 
		\EndProcedure\label{getgphalgoend}
        \algstore{getgphalgo}
\end{algorithmic}	
    \hrule

 \scriptsize
	\hspace{\algorithmicindent} \textbf{Description:}\textcolor{blue}{This is a multi-scan method which is invoked by \getgphalgo operation. It performs repeated tree collection by invoking \treeclt procedure until two consecutive collects are equal.}
	\begin{algorithmic}[1]
	\algrestore{getgphalgo}
		\scriptsize
		\renewcommand{\algorithmicprocedure}{\textbf{Method}}
		
		\Procedure{$\scan$($v$, $tid$)}{}\label{scanstart}
       \State{list$<\treenode>$ot, nt ; } \textcolor{ballblue}{// Trees to hold the node}
          \State{ $ot$ $\gets$ \treeclt($v$, $tid$); }
        \While{(\tru)} \label{whilescan} \textcolor{ballblue}{//Repeat the tree collection}
           \State{$nt$ $\gets$ \treeclt($v$, $tid$); } 
           \If{(\comparetree($ot, nt$))} 
           \label{lin:scan-comparetree}
           \State{return $\langle \tru, nt \rangle$;}\textcolor{ballblue}{//Two collects are equal}
           \EndIf
            \State{ot $\gets$ nt;} \textcolor{ballblue}{//Retry if two collects are not equal}
        \EndWhile
		\EndProcedure\label{scanend}
        \algstore{scan}
\end{algorithmic}
\hrule
\hspace{\algorithmicindent} \textbf{Description:}\textcolor{blue}{This method compares two trees are equal or not.}
\begin{algorithmic}[1]
	\algrestore{scan}
		\scriptsize
		\renewcommand{\algorithmicprocedure}{\textbf{Method}}
		\Procedure{$\comparetree$($ot, nt$)}{}\label{comparetreestart}
	 \If{($ot$ = \nul $\vee$ $nt$ = \nul)} {return $\fal$ ; }
        \EndIf
        \State{ $oit$ $\gets$ $ot$.head, $nit$ $\gets$ $nt$.head;} 
        \While{($oit$ $\neq$ $ot$.tail $\wedge$ $nit$ $\neq$ $nt$.tail )}
        \If{($oit$.n $\neq$ $nit$.n $\vee$ $oit.\lecount$ $\neq$ $nit.\lecount$ $\vee$ $oit$.p $\neq$ $nit$.p)}
          \State {return $\fal$ ; } \textcolor{ballblue}{//Both the trees are not equal}
        \EndIf
        \State{$oit$ $\gets$  $oit$.\bnext; $nit$ $\gets$  $nit$.\bnext;}
        \EndWhile
         \If{($oit$.n $\neq$ $nit$.n $\vee$ $oit$.\lecount $\neq$ $nit$.\lecount $\vee$ $oit$.p $\neq$ $nit$.p)} {\hspace{2mm}return \fal; }
         \Else {\hspace{2mm}return $\tru$ ;  } \textcolor{ballblue}{//Both the trees are equal}
         \EndIf
		\EndProcedure\label{comparetreeend}
        \algstore{comparetree}
\end{algorithmic}
\end{subfigure}
\begin{subfigure}{.5\textwidth}
	\scriptsize
\hspace{\algorithmicindent} \textbf{Description:}\textcolor{blue}{This method performs a non-recursive graph traversal starting from a source vertex. It explores all reachable and unmarked \vnodes. At the end it returns the reachable tree to the \scan method.}
	\begin{algorithmic}[1]
	\algrestore{comparetree}
\renewcommand{\algorithmicprocedure}{\textbf{Operation}}	

	\renewcommand{\algorithmicprocedure}{\textbf{Method}}
	\Procedure{$\treeclt$($v$, $tid$)}{}\label{treecltstart}
         \State{queue $<$\treenode$>$ $que$;} \textcolor{ballblue}{//Queue used for traversal} \State{list$<$\treenode$>$$st$;} \textcolor{ballblue}{//List to keep of the visited nodes}
         \State{\tcount$\gets$\tcount$+$ 1;}\textcolor{ballblue}{//Thread local variable inc. by one} 
         \State{$v$.\visitedarray[$tid$] $\gets$ \tcount ;}\label{lin:treelt-visited} \textcolor{ballblue}{//Mark it visited} 
         \State{\textcolor{ballblue}{//Create a new \treenode}}
        \State{$sn$$\gets$new $\createtreenode$($v$,\nul,\nul,$v$.\ecount);}
        \State{$st$.$\add$($sn$);}\label{lin:bfsTree:insert1}  \textcolor{ballblue}{//Insert $sn$ to the \bfstree $st$}
        \State{$que$.enque($sn$);} \textcolor{ballblue}{//Push the initial node into the $que$}
        \While{($\neg  que$.empty())} \textcolor{ballblue}{//Iterate over all vertices}
        \State{ $cvn$ $\gets$ $que$.deque();}  \textcolor{ballblue}{// Get the front node}
         \If{(\isMarked($cvn$))} \textcolor{ballblue}{// If marked then continue} 
         \State{\cntu;}
         \EndIf
        \State{$eh$ $\gets$ $cvn$.n.\enext;} \textcolor{ballblue}{//Get the \eh from its \elist}
        \State{\textcolor{ballblue}{//Process all neighbors of $cvn$  in the order of}}
        \State{\textcolor{ballblue}{//inorder traversal, as the \elist is a BST}}
        \For{( $itn$ $\gets$ \eh.\enext to all neighbors of $cvn$)}
       
        \If{($\neg$$\isMarked$($itn$))} \textcolor{ballblue}{//Validate it}
        \State{$adjn$ $\gets$ $itn$.\pointv;}\textcolor{ballblue}{//Get the corresponding \vnode}
        \If{($\neg$\isMarked($adjn$))} \textcolor{ballblue}{//Validate it}
         \If{($\neg$\checkvisited($adjn$, $tid$, \tcount))}
       \State{$adjn$.\visitedarray[$tid$] $\gets$ \tcount;} \textcolor{ballblue}{//Mark it visited} 
       \State{\textcolor{ballblue}{//Create a new \treenode}}
        \State{$sn$$\gets$ $\createtreenode$($adjn$,$cvn$,\nul,$adjn$.\ecount);} 
         \State{$st$.$\add$($sn$);}\label{lin:bfsTree:insert2} \textcolor{ballblue}{//Insert $sn$ to the $st$} 
        \State{$que$.enque($sn$);} \textcolor{ballblue}{//Push $sn$ into the $que$} 
        
        \EndIf
        \EndIf
        \EndIf
        \EndFor
    \EndWhile
     \State{return $st$;}\label{lin:return bfstree} \textcolor{ballblue}{//The tree is returned to the \scan method}
		\EndProcedure\label{treecltend}
		\algstore{treeclt} 
\end{algorithmic}

    \hrule
 	\scriptsize   
\hspace{\algorithmicindent} \textbf{Description:}\textcolor{blue}{This method checks whether a given vertex is already visited or not by the same thread.}
\begin{algorithmic}[1]
	\algrestore{treeclt}
		\scriptsize
		\renewcommand{\algorithmicprocedure}{\textbf{Method}}
		\Procedure{ $\checkvisited$($adjn, tid, count$)}{}\label{checkvisitedstart}
	 \If{($adjn$.\visitedarray[$tid$] = $count$)}
	     \State {return $\tru$;} \textcolor{ballblue}{//\vnode is not visited}
	     \Else {\hspace{2mm}return $\fal$ ; }\textcolor{ballblue}{//\vnode is already visited}
        \EndIf
		\EndProcedure\label{checkvisitedeend}
\end{algorithmic}
	\end{subfigure}

	\caption{Pseudocodes of  single machine \nbk graph application framework} \label{fig:graph-framework}
\end{figure*}
}

\ignore{
\begin{figure*}[!t]
\captionsetup{font=scriptsize}
	\begin{subfigure}{.5\textwidth}	
\begin{algorithmic}[1]
		\scriptsize
        \renewcommand{\algorithmicprocedure}{\textbf{Operation}}
		\Procedure{$\getgphalgo$($v$)}{}\label{getgphalgostart}
        \State{ $tid$ $\gets$ this\_thread.get\_id();} 
		 \If{($\isMarked(v$)  $\vee$ $\isMarked(arg_2$)  $\vee$ $\cdots$)} \label{lin:getgphalgo-checkmarked} 
	     \State {return \eop{\fal}{$\infty$}; } 
        \EndIf
        \State{ $\langle st, stree \rangle$ $\gets$ $\scan(v$, $tid$);} \label{lin:rcbl-scan}
         \If{($st$ = \tru)} {return {$\langle \tru, tree \rangle$;}}
         \Else {\hspace{2mm}return \nul; }
         \EndIf 
		\EndProcedure\label{getgphalgoend}
        \algstore{getgphalgo}
\end{algorithmic}	
    \hrule

 	\begin{algorithmic}[1]
	\algrestore{getgphalgo}
		\scriptsize
		\renewcommand{\algorithmicprocedure}{\textbf{Method}}
		\Procedure{$\scan$($v$, $tid$)}{}\label{scanstart}
         \State{list $<\treenode>$  ot, nt ; } 
          \State{ $ot$ $\gets$ \treeclt($v$, $tid$); }
        \While{(\tru)} \label{whilescan}
           \State{$nt$ $\gets$ \treeclt($v$, $tid$); } 
           \If{(\comparetree($ot, nt$))} 
           \label{lin:scan-comparetree}
           \State{return $\langle \tru, nt \rangle$;}
           \EndIf
            \State{ot $\gets$ nt;}
        \EndWhile
		\EndProcedure\label{scanend}
        \algstore{scan}
\end{algorithmic}
\hrule
\begin{algorithmic}[1]
	\algrestore{scan}
		\scriptsize
		\renewcommand{\algorithmicprocedure}{\textbf{Method}}
		\Procedure{$\comparetree$($ot, nt$)}{}\label{comparetreestart}
	 \If{($ot$ = \nul $\vee$ $nt$ = \nul)} {return $\fal$ ; }
        \EndIf
        \State{ $oit$ $\gets$ $ot$.head, $nit$ $\gets$ $nt$.head;} 
        \While{($oit$ $\neq$ $ot$.tail $\wedge$ $nit$ $\neq$ $nt$.tail )}
        \If{($oit$.n $\neq$ $nit$.n $\vee$ $oit.\lecount$ $\neq$ $nit.\lecount$ $\vee$ $oit$.p $\neq$ $nit$.p)}
          \State {return $\fal$ ; } 
        \EndIf
        \State{$oit$ $\gets$  $oit$.\bnext; $nit$ $\gets$  $nit$.\bnext;}
        \EndWhile
         \If{($oit$.n $\neq$ $nit$.n $\vee$ $oit$.\lecount $\neq$ $nit$.\lecount $\vee$ $oit$.p $\neq$ $nit$.p)} {\hspace{2mm}return \fal; }
         \Else {\hspace{2mm}return $\tru$ ; }
         \EndIf
		\EndProcedure\label{comparetreeend}
        \algstore{comparetree}
\end{algorithmic}
\end{subfigure}
\begin{subfigure}{.5\textwidth}
	\begin{algorithmic}[1]
	\algrestore{comparetree}
\renewcommand{\algorithmicprocedure}{\textbf{Operation}}	
	\scriptsize
	\renewcommand{\algorithmicprocedure}{\textbf{Method}}
	\Procedure{$\treeclt$($v$, $tid$)}{}\label{treecltstart}
         \State{queue $<$\treenode$>$ $Q$;} \hspace{0.25mm} {list $<$\treenode$>$ $sTree$;} 
         \State{\tcount $\gets$ \tcount $+$ 1; }  \hspace{0.25mm} {$v$.\visitedarray[$tid$] $\gets$ \tcount ;}\label{lin:treelt-visited}
        \State{$sNode$ $\gets$ \createtreenode($v$, \nul, \nul, $v$.\ecount );}
        \State{$sTree$.\add($sNode$);}\label{lin:bfsTree:insert1} \hspace{0.25mm} {$Q$.enque($sNode$);}
        \While{($\neg  Q$.empty())} 
        \State{ $cvn$ $\gets$ $Q$.deque();} \hspace{2mm}  
         \If{(\isMarked($cvn$))} \hspace{0.25mm} {\cntu;}
         \EndIf
        \State{$eh$ $\gets$ $cvn$.n.\enext;} 
        \State /* process all neighbors of $cvn$  in the order of inorder traversal */
        \For{( $itn$ $\gets$ \eh.\enext to all neighbors of $cvn$)}
       
        \If{($\neg$\isMarked($itn$))}  
        \State{$adjn$ $\gets$ $itn$.\pointv;}
        \If{($\neg$\isMarked($adjn$))} 
         \If{($\neg$\checkvisited($adjn$, $tid$, \tcount))}
       \State{$adjn$.\visitedarray[$tid$] $\gets$ \tcount;} 
        \State{$sNode$ $\gets$ \createtreenode($adjn$, $cvn$, \nul, $adjn$.\ecount);} 
         \State{$sTree$.\add($sNode$);}\label{lin:bfsTree:insert2} 
        \State{$Q$.enque($sNode$);} 
        
        \EndIf
        \EndIf
        \EndIf
        \EndFor
    \EndWhile
     \State{return $\langle sTree \rangle$;}\label{lin:bfsTree-status}
		\EndProcedure\label{treecltend}
		\algstore{treeclt}
\end{algorithmic}
\hrule
\begin{algorithmic}[1]
	\algrestore{treeclt}
		\scriptsize
		\renewcommand{\algorithmicprocedure}{\textbf{Method}}
		\Procedure{ $\checkvisited$($adjn, tid, count$)}{}\label{checkvisitedstart}
	 \If{($adjn$.\visitedarray[$tid$] = $count$)}
	     \State {return $\tru$;}
	     \Else {\hspace{2mm}return $\fal$ ; }
        \EndIf
		\EndProcedure\label{checkvisitedeend}
\end{algorithmic}
	\end{subfigure}
 
  \setlength{\belowcaptionskip}{-15pt}
	\caption{Framework interface operation for graph queries.} \label{fig:graph-framework}
\end{figure*}
}
In this section, we describe a \nbk algorithm that implements the ADT $\mathcal{A}$. 
The operations $\mathcal{M}:=\{\addv, \remv, \conv, \adde, \reme, \cone\}\subset\mathcal{A}$ use the interface of the hash-table and BST with interesting non-trivial adaptation to our purpose. In the permitted space we describe the execution, correctness and progress property of the operations  $\mathcal{Q}:=\{\getbfs, \getsp, \getbc\}\subset\mathcal{A}$. To de-clutter the presentation, we encapsulate the three queries in a unified framework. The framework comes with an interface operation \getgphalgo. \getgphalgo is specialized to the requirements of the three queries. The functionality of \getgphalgo and its specializations are presented in pseudo-code in Figures \ref{fig:graph-framework}, \ref{fig:sssp-methods}, and \ref{fig:bc-methods}.

\noindent \textbf{Pseudo-code convention:}
We use $p.x$ to denote the member field $x$ of a class object pointer $p$. To 
indicate multiple return objects from an operation we use $\langle x_1, 
x_2,\ldots,x_n\rangle$. To represent pointer-marking, we 
define three procedures: (a) $\isMarked(p)$ returns \tru if the last 
significant bit of the pointer $p$ is set to 
$1$, else, it returns \fal, (b) $\MarkedRef(p)$ sets last 
significant bit of $p$ to $1$, and (c) $\unMarkedRef(p)$ sets the same to $0$. 
An invocation of $\createv(\vkey)$, $\createe(\ekey)$ and 
$\createtreenode(\vkey)$, creates a new \vnode with key $\vkey$, a new \enode 
with key $\ekey$ and a new \treenode with a \vnode $v(\vkey)$ respectively. For 
a newly created \vnode, \enode and 
\treenode the pointer fields are initialized with \nul value.

The execution pipeline of \getgphalgo is presented at lines \ref{getgphalgostart} to \ref{getgphalgoend} in Fig. \ref{fig:graph-framework}. \getgphalgo intakes a query vertex $v$. It starts with checking if $v$ is alive at Line \ref{checkvisitedstart}. In the case $v$ was \textit{not alive}, it returns \nul. For this execution case, which results in \getgphalgo returning \nul, the \lp is at the atomic step (a) where \getgphalgo is invoked in case $v$ was not in the data structre at that point, and (b) where $v$ was logically removed using a \CAS in case it was alive at the invocation of \getgphalgo. 

Now, if the query vertex $v$ is alive, it proceeds to perform the method \scan, Line \ref{scanstart} to \ref{scanend}. \scan, essentially, repeatitively, performs (specialized partial) snapshot collection of the data structure along with comparing every consecutive pair of scans, stopping when a consecutive pair of collected snapshots are found identical. Snapshot collection is structured in the method \treeclt, Line \ref{treecltstart} to \ref{treecltend}, whereas, comparison of collected snapshots in performed by the method \comparetree, Line \ref{comparetreestart} to \ref{comparetreeend}. 

Method \treeclt performs a BFS traversal in the data structure to collect pointers to the traversed \vnodes, thereby forming a tree. A cell of \visitedarray corresponding to thread-id is marked on visiting it; notice that it is adaptation of the well-known use of node-dirty-bit for BFS \cite{Cormen+:IntroAlgo:2009}. The traversal over \vnodes in facilitated by a queue: Line \ref{treecltqdef}, whereas, exploring the outgoing edges at each \vnode, equivalently, traversing over the BST corresponding to its \elist uses a stack: Line \ref{treecltstdef}. The snapshot collection for the queries \getbfs and \getbc are identical. For \getsp, where edge-weights are to be considered, the snapshot collection is optimized in each consecutive scan based on the last collection; we detail them in the next section. At the core, the collected snapshot is a list of \treenode{s}, where each \treenode contains a pointer to a \vnode, pointers to the next and previous \treenode{s} and the value of the \ecount field of the \opstruct of the \vnode.

Method \comparetree essentially compares two snapshots in three aspects:  whether the collected \treenode{s} contain (a) pointers to the same \vnodes (b) have the same \treenode being pointed by previous and next, and (c) have the same \ecount. The three checks ensure that a consistent snapshpt is the one which has its collection lifetime \textit{not concurrent} to (a) a vertex either added to or removed from it, (b) a path change by way of addition or removal of an edge, and, (c) an edge removed and then added back to the same position, respectively. Thus, at the completion of these checks, if two consecutive snapshots match, it is guranteed to be unchanged during the time of the last two \treeclt operations. Clearly, we can put a linearization point just after the atomic step where the last check is done: Line \ref{lp1} or \ref{lp2}, where it returns \comparetree.

Now, it is clear that any $Q\in\mathcal{Q}$ does not engage in helping any other operation. Furthermore, an $M\in\mathcal{M}$ does not help a $Q\in\mathcal{Q}$. Thus, given an execution $E$ as a collection of arbitrary $O\in\mathcal{A}$, by the fact that the data-structures hash-table and BST are lock-free, and whenever no \putv, \pute, \remv, and \reme happen, a $Q\in\mathcal{Q}$ returns, we infer that the presented algorithm is \nbk.

\section{The Graph Query Operations}\label{sec:applications}
Having described the execution pipeline of \getgphalgo, here we specialize it to the queries \bfs, \sssp, and \bc. As mentioned before, the \opstruct class is inducted with query specific fields for $O \in \mathcal{Q}:=\{\getbfs, \getsp, \getbc\}$, which sits at the core of its implementation.


\subsection{\BFS} \label{subsec:bfs}
\ignore{
For a $\getbfs\in\mathcal{Q}$, the \opstruct class has only two fields: \ecount and \visitedarray, whose functionality we have already described earlier. Given an unweighted graph $G=(V,E)$, $\getbfs(\vkey)$ starts from the \textit{source} vertex \vkey to traverse the vertices $u \in V$, which are reachable from $\vkey$, in BFS order (see section \ref{sec:intro}) to them. The method \treeclt as described in Figure \ref{fig:graph-framework} exactly performs a BFS traversal. The return of a \treeclt is a \textit{BFS-tree} rooted at \vkey. To ensure \lbty, we perform repeated \treeclt, and on a match of two consecutively collected BFS-trees, $\getbfs(\vkey)$ terminates returning the output of the last \treeclt. Note that, in the case $\vkey$ is a leaf-node, the output is $\vkey$ itself (essentially, $st$ is always \tru for a \getbfs at line \ref{lin:rcbl-scan}).
}

Given an unweighted graph $G=(V,E)$, $\getbfs(\vkey)$ starts from the \textit{source} vertex \vkey to traverse the vertices $u \in V$, which are reachable from $\vkey$, in BFS order (see section \ref{sec:intro}) to them. 

\ignore{In the worst case it requires $O(|V|+|E|)$ operations (see \cite{Cormen+:IntroAlgo:2009}). BFS identifies the distance using the least number of edges from $\vkey$ to the vertices which are reachable from $\vkey$. At the end, it returns a BFS tree with root $\vkey$ that holds all reachable vertices. There have been several works has been done both in theory and practice on developing fast parallel \bfs algorithms both for distributed and shared-memory architecture \cite{Leiserson:WPBFS:SPAA:2010,Buluc+:PBFS:SC:2011, McLaughlin+:FBFS:ICPADS:2015, You+:BFS:ICPP:2014, Munguia+:TBFS:HiPC:2012, Bader+:BFS:ICPP:2006}.
}
Implementation of $\getbfs\in\mathcal{Q}$ is shown in \figref{bfs-methods}. The \opstruct class has only two fields: \ecount and \visitedarray, whose functionality we have already described earlier. A $\getbfs(\vkey)$ operation, at Lines \ref{getbfsstart} to \ref{getbfsend}, begins by validating the presence of the $\vkey$ in the hash table and is unmarked. If the validation fails, it returns \nul. Once the validation succeeds, \bfsscan method is invoked. A \bfsscan method, at Lines \ref{bfsscanstart} to \ref{bfsscanend}, performs repeated \bfstclt(BFS Tree Collect). A \bfstclt method, at Lines \ref{bfstcltstart} to \ref{bfstcltend}, performs non-recursive \bfs traversal starting from a source vertex $\vkey$. In the process of traversal, it explores all unmarked reachable \vnodes through corresponding unmarked \enodes. However, it keeps adding all these \vnodes in the tree $bt$ (see Line \ref{lin:bfstclt-add1} and \ref{lin:bfstclt-add2}). The traversal of reachable \vnodes are processed through a queue (Line \ref{lin:bfstree-que}), whereas, all outgoing neighbourhood edges of each \vnode is processed with a stack (Line \ref{lin:bfstree-stack}). After exploring all reachable \vnodes, the \bfstclt terminates by returning $bt$ to the \bfsscan. 

After each two consecutive \bfstclt, we invoke \comparetree to compare the two trees are equal or not. If both trees are equal, then \bfsscan terminates returning the output of the last \bfstclt along with a boolean status value \tru to the $\getbfs$ operation. If two consecutive \bfstclt do not match in the \comparetree, then we discard the older tree and restart the \bfstclt. Note that, in the case $\vkey$ is a leaf-node, the output is $\vkey$ itself (essentially, $st$ is always \tru for a \getbfs at Line \ref{lin:bfs-rcbl-scan}).

The \comparetree method shown in \figref{graph-framework}, from Line \ref{comparetreestart} to \ref{comparetreeend}. It compares two trees by comparing the individual BFS nodes's reference ($n$), parent ($p$) and  \lecount values. It starts from the head \bfsnode and follows with the next pointer, \bnext, until it reaches the last \bfsnode or any mismatch occurrs at any \bfsnode. It returns \fal if any mismatch occurrs. Otherwise returns \tru implying that all the nodes of the trees are same. 

\begin{figure*}[!htb]
\begin{multicols}{2}
    \begin{algorithmic}[1]
	\algrestore{treeclt}
	\scriptsize
        \renewcommand{\algorithmicprocedure}{\textbf{Operation}}	
	\Procedure{$\getbfs$($\vkey$)}{}\label{getbfsstart}
    \State{$tid$ $\gets$ $\gettid$();}\linecomment{//Get the thread id}
	  \If{($\isMarked$($\vkey$)}
	  \linecomment{//Validate the vertex}
	 \State{return \nul;} \linecomment{//$\vkey$ is not present.}
        \EndIf
         \State{list $<\bfsnode>$  $bt$; } \linecomment{//List to hold the \bfstree}
         \State{$\langle st,bt\rangle$$\gets$$\bfsscan$($\vkey$,$tid$);}\label{lin:bfs-rcbl-scan}\linecomment{//Invoke the scan}
         \If{($st$ = \tru)} 
        \State{return $bt$;} \linecomment{// Return the \bfstree}
         \EndIf 
	\EndProcedure\label{getbfsend}
        \algstore{getbfs}
\end{algorithmic}	
    \hrule
    \scriptsize
	\begin{algorithmic}[1]
	\algrestore{getbfs}
	\scriptsize
	\renewcommand{\algorithmicprocedure}{\textbf{Method}}
	\Procedure{$\bfsscan$($\vkey$, $tid$)}{}\label{bfsscanstart}
        \State{list $<\bfsnode>$ $ot$, $nt$ ; } 
          \State{$ot$ $\gets$ $\bfstclt$($\vkey$, $tid$); }\linecomment{//$1^{st}$ Collect}
        \While{(\tru)} \label{bfswhilescan1} 
           \State{$nt$ $\gets$ \bfstclt($\vkey$, $tid$); } \linecomment{//$2^{nd}$ Collect}
           \If{($\comparetree$($ot$, $nt$))}  
           \State{return $\langle$\tru,$nt$$\rangle$;}
           \EndIf
           \State{$ot$$\gets$$nt$;}\label{bfsrescan1}
    \EndWhile
		\EndProcedure\label{bfsscanend}
        \algstore{bfsscan}
\end{algorithmic}
\hrule
	\begin{algorithmic}[1]
	\algrestore{bfsscan}
        \renewcommand{\algorithmicprocedure}{\textbf{Method}}	
	\scriptsize
    	\Procedure{$\bfstclt$($\vkey$, $tid$)}{}\label{bfstcltstart}
        \State{\bfsnode $bt$;} 
        \linecomment{//List to keep of the visited nodes}
	   \State{queue $<$\bfsnode$>$ $que$;} \linecomment{//Queue used for traversal} \label{lin:bfstree-que}
         \State{\tcount $\gets$ \tcount $+$ 1; } 
         \State{v.\opitem.\visitedarray[$tid$] $\gets$ \tcount ;} \linecomment{//Mark it visited} 
         
         \State{$bn$ $\gets$ new $\bfsnode$($\vkey$, \nul, \nul,$\vkey$.\opitem.\ecount);}
        \State{$bt$.$\add$($bn$);} 
        \label{lin:bfstclt-add1}
         \State{$que$.enque($bn$);} \linecomment{//Push the initial node into the $que$}
        \While{($\neg$ $que$.empty())} 
        \State{$cvn$ $\gets$ $que$.deque();} \linecomment{// Get the front node}
        \If{($\neg \isMarked$($cvn$.n))} 
        \State{\cntu;} \linecomment{// If marked then continue} 
        \EndIf
        \State{$itn$ $\gets$ $cvn$.n.\enext;} \linecomment{//Get the root \enode from its \elist}
        \State{stack $<$\enode $>$ $S$;} \label{lin:bfstree-stack}\linecomment{// stack for inorder traversal}
        \While{($itn$ $\vee$ $\neg S$.empty())}
        \While{($itn$ )}
         \If{($\neg$$\isMarked$($itn$))}
         \State{$S$.push($itn$)}; \linecomment{// push the \enode}
         \EndIf
         \State{$itn$ $\gets$ $itn.\eleft$;} 
        \EndWhile
        \State{$itn$ $\gets$ $S$.pop()};
        \If{($\neg$$\isMarked$($itn$))}  \linecomment{//Validate it}
        \State{ $adjn$ $\gets$ $itn$.\pointv;} 
        \If{($\neg$$\isMarked$($adjn$))}\linecomment{//Validate it}
        \If{($\neg$ $\checkvisited$($tid$, $adjn$, \tcount))} 
        \State{$adjn$.\opitem.\visitedarray[$tid$] $\gets$ \tcount ;}
        \State{$bn$$\gets$ new $\bfsnode$($adjn$,$cvn$,\nul,$adjn$.\opitem.\ecount);} 
       
        \State{$bt$.$\add$($bn$);}
        \label{lin:bfstclt-add2}\linecomment{//Insert $sn$ to the $st$} 
        \State{$que$.enque($bn$);}\linecomment{//Push $sn$ into the $que$}
        \EndIf
        \EndIf
        \EndIf
         \State{$itn$ $\gets$ $itn.\eright$;}
        \EndWhile
    \EndWhile
     \State{return $bt$;} \linecomment{//The tree is returned to the \scan method}
		\EndProcedure\label{bfstcltend}
		\algstore{bfstclt}
\end{algorithmic}

\end{multicols}
	\caption{The BFS query.} \label{fig:bfs-methods}
\end{figure*}

\ignore{
\begin{figure*}[!t]
    \begin{subfigure}{.45\textwidth}
    \scriptsize
	\hspace{\algorithmicindent} \textbf{Description:}\textcolor{blue}{This operation takes input of the source vertex $\vkey$ and validate it. After successful validation it invokes the scan method. At the end it returns the \bfstree starting from $\vkey$ which is returned by the scan method. If $\vkey$ is not present it returns \nul. }
    \begin{algorithmic}[1]
	\scriptsize
        \renewcommand{\algorithmicprocedure}{\textbf{Operation}}	
	\Procedure{$\getbfs$($\vkey$)}{}\label{getbfsstart}
    \State{$tid$ $\gets$ $\gettid$();} \textcolor{ballblue}{//Get the thread id}
	\If{($\isMarked$(\vkey)} \textcolor{ballblue}{//Validate the vertex}
	 \State{return \nul;} \textcolor{ballblue}{//If $\vkey$ is not present return \nul.}
        \EndIf
         \State{list $<\bfsnode>$  $bt$; } 
         \State{$\langle st,bt\rangle$ $\gets$ $\bfsscan$($\vkey$, $tid$);}\label{lin:getbfs-scan1} \textcolor{ballblue}{//Invoke the scan}
         \If{($st$ = \tru)}
        \State{return $bt$;} \textcolor{ballblue}{// Return the \bfstree $bt$}
         \EndIf 
	\EndProcedure\label{getbfsend}
        \algstore{getbfs}
\end{algorithmic}	
    \hrule
    \scriptsize
	\hspace{\algorithmicindent} \textbf{Description:}\textcolor{blue}{This is a multi-scan method which is invoked by \getbfs operation. It performs repeated tree collection by invoking \bfstclt procedure until two consecutive collects are equal.}
	\begin{algorithmic}[1]
	\algrestore{getbfs}
	\scriptsize
	\renewcommand{\algorithmicprocedure}{\textbf{Method}}
	\Procedure{$\bfsscan$($\vkey$, $tid$)}{}\label{bfsscanstart}
        \State{list $<\bfsnode>$ $ot$, $nt$ ; } 
          \State{$ot$ $\gets$ $\bfstclt$($\vkey$, $tid$); }\textcolor{ballblue}{//$1^{st}$ Collect}
        \While{(\tru)} \label{bfswhilescan1} \textcolor{ballblue}{//Repeat the tree collection}
           \State{$nt$ $\gets$ \bfstclt($\vkey$, $tid$); } \textcolor{ballblue}{//$2^{nd}$ Collect}
           \If{($\comparetree$($ot$, $nt$))}  
           \State{return $\langle$\tru, $nt$ $\rangle$;} \textcolor{ballblue}{//Two collects are equal}
           \EndIf
           \State{$ot$ $\gets$ $nt$;}\label{bfsrescan1}
           \textcolor{ballblue}{//Retry if two collects are not equal}
    \EndWhile
		\EndProcedure\label{bfsscanend}
        \algstore{bfsscan}
\end{algorithmic}
\hrule
\scriptsize
	\hspace{\algorithmicindent} \textbf{Description:}\textcolor{blue}{This operation returns the longest shortest path value $D$ from the graph. It invokes the scan method, where it performs repeated tree collects until two consecutive collects are equal. At the end it returns the diameter $D$.}
	\begin{algorithmic}[1]
	\algrestore{bfsscan}
	\scriptsize
        \renewcommand{\algorithmicprocedure}{\textbf{Operation}}			
        \Procedure{$\getdia$()}{}\label{getdiameterstart}
      \State{$tid$ $\gets$ $\gettid$();} \textcolor{ballblue}{//Get the thread id}
         \State{list $<\bfslistnode>$  $st$;}
         \State{$\langle s,st\rangle$$\gets$$\diascan$($tid$);}\textcolor{ballblue}{//Invoke the scan} \label{lin:getdia-scan}
         \If{($s$ = \tru)}
         \State{return ``Diameter $D$ from the $st$''}
         \EndIf 
	\EndProcedure\label{getdiameterend}
        \algstore{getdiameter}
\end{algorithmic}	
\hrule
 \scriptsize
\hspace{\algorithmicindent} \textbf{Description:}\textcolor{blue}{This is a multi-scan method which is invoked by \getdia operation. It performs repeated tree collection by invoking \diametertclt procedure until two consecutive collects are equal.}
\begin{algorithmic}[1]
	\algrestore{getdiameter}
	\scriptsize
	\renewcommand{\algorithmicprocedure}{\textbf{Method}}
	\Procedure{\diascan($\vkey$, $tid$)}{}\label{diameterscanstart}
         \State{$\bfslistnode$  $otg$, $ntg$;} 
         \State{$otg$ $\gets$ $\diametertclt$($tid$);}\textcolor{ballblue}{//$1^{st}$ Collect}
        \While{(\tru)} \label{diawhilescan-getdia} \textcolor{ballblue}{//Repeat the tree collection}
        \State{$ntg$ $\gets$ \diametertclt($tid$);} \textcolor{ballblue}{//$2^{nd}$ Collect}
        \If{($\comparetreegraph$($otg$, $ntg$))} \label{lin:bfsscan-comparetree1}
        \State{return $\langle$\tru, $ntg$ $\rangle$;}\textcolor{ballblue}{//Two collects are equal}
        \EndIf
        \State{$otg$ $\gets$ $ntg$;} \label{bfsrescan-statechange} \textcolor{ballblue}{//Retry for two unequal collects}
        \EndWhile
	\EndProcedure\label{diameterscanend}
        \algstore{diameterscan}
\end{algorithmic}
\end{subfigure}
  \begin{subfigure}{.55\textwidth}
  	\scriptsize
\hspace{\algorithmicindent} \textbf{Description:}\textcolor{blue}{This method performs a non-recursive graph traversal starting from a source vertex $\vkey$. It explores all reachable and unmarked \vnodes. At the end it returns the reachable tree to the scan method.}
	\begin{algorithmic}[1]
	\algrestore{diameterscan}
        \renewcommand{\algorithmicprocedure}{\textbf{Method}}	
	\scriptsize
	\Procedure{$\bfstclt$($\vkey$, $tid$)}{}\label{bfstcltstart}
        \State{\bfsnode $bt$;} \textcolor{ballblue}{//List to keep of the visited nodes}
	   \State{queue $<$\bfsnode$>$ $que$;} \textcolor{ballblue}{//Queue used for traversal}
         \State{\tcount $\gets$ \tcount $+$ 1; }\textcolor{ballblue}{//Thread local variable inc. by one} 
         \State{u.\visitedarray[$tid$] $\gets$ \tcount ;}\textcolor{ballblue}{//Mark it visited} 
         \State{\textcolor{ballblue}{//Create a new \treenode}}
         \State{$bn$ $\gets$ new $\bfsnode$($\vkey$, \nul, \nul,$\vkey$.\ecount);}
        \State{$bt$.$\add$($bn$);} \label{lin:bfstclt-add1}\textcolor{ballblue}{//Insert $sn$ to the \bfstree $st$}
         \State{$que$.enque($bn$);} \textcolor{ballblue}{//Push the initial node into the $que$}
        \While{($\neg$ $que$.empty())} \textcolor{ballblue}{//Iterate over all vertices}
        \State{$cvn$ $\gets$ $que$.deque();} \textcolor{ballblue}{// Get the front node}
        \If{($\neg \isMarked$($cvn$.n))} \textcolor{ballblue}{// If marked then continue} 
        \State{\cntu;} 
        \EndIf
        \State{$eh$ $\gets$ $cvn$.n.\enext;} 
       \textcolor{ballblue}{//Get the \eh from its \elist}
        \State{\textcolor{ballblue}{//Process all neighbors of $cvn$  in the order of}}
        \State{\textcolor{ballblue}{//inorder traversal, as the \elist is a BST}}
        \For{( $itn$ $\gets$ \eh.\enext to all neighbors of $cvn$)}
        \If{($\neg$$\isMarked$($itn$))}  \textcolor{ballblue}{//Validate it}
        \State{ $adjn$ $\gets$ $itn$.\pointv;} \textcolor{ballblue}{//Get the corresponding \vnode}
        \If{($\neg$$\isMarked$($adjn$))}\textcolor{ballblue}{//Validate it}
        \If{($\neg$ $\checkvisited$($tid$, $adjn$, \tcount))} 
        \State{$adjn$.\visitedarray[$tid$] $\gets$ \tcount ;}\textcolor{ballblue}{//Mark it visited} 
       \State{\textcolor{ballblue}{//Create a new \treenode}}
        \State{$bn$ $\gets$ new $\bfsnode$($adjn$, $cvn$, \nul,$adjn$.\ecount);}
        \State{$bt$.$\add$($bn$);} \label{lin:bfstclt-add2}\textcolor{ballblue}{//Insert $sn$ to the $st$} 
        \State{$que$.enque($bn$);}\textcolor{ballblue}{//Push $sn$ into the $que$}
        \EndIf
        \EndIf
        \EndIf
        \EndFor
    \EndWhile
     \State{return $bt$;} \textcolor{ballblue}{//The tree is returned to the \scan method}
		\EndProcedure\label{bfstcltend}
		\algstore{bfstclt}
\end{algorithmic}
\hrule
\scriptsize
\hspace{\algorithmicindent} \textbf{Description:}\textcolor{blue}{This method iterate through all the vertices and invokes the $\bfstclt(\vkey_i)$, $\forall i\in V$. In the process of \bfstclt, it counts the \level number form $v(\vkey_i)$ to all its reachable vertices, and keep tracking of maximum \level number in the counter \maxlevel. After exploring all $v(\vkey_i)$, $\forall i\in V$ \diametertclt terminates by returning  $D$ to the \diascan method.}
	\begin{algorithmic}[1]
	\algrestore{bfstclt}
	\scriptsize
        \renewcommand{\algorithmicprocedure}{\textbf{Method}}
	\Procedure{\diametertclt($tid$)}{}\label{diametertcltstart}
        \State{$\bfslistnode$  $bfslt$; } 
        \State{$vn$ $\gets$ \vh;}\textcolor{ballblue}{//Starting from the vertex head}
        \While{($vn$ $\neq$ \nul)} \textcolor{ballblue}{//Iterate through all reahcanble vertices}
        \If{($\neg$\isMarked(vn))}\textcolor{ballblue}{//validate the vertex}
        \State{$bt$ $\gets$ $\bfstclt$($vn$,$tid$);}\textcolor{ballblue}{//Invoke the \bfstclt method}
        \State{$bfstl$.\add($vn$,$bt$);} \textcolor{ballblue}{//insert the $bt$ to the $bfslt$ list} 
        \EndIf
        \State{$vn$ $\gets$ $vn$.\vnext;}
        \EndWhile
        \State{return $bfslt$;} \textcolor{ballblue}{//Return the $bfslt$}
	\EndProcedure\label{diametertcltend}
        \algstore{diametertclt}
\end{algorithmic}
\end{subfigure}
	\caption{Pseudocodes of \getbfs, \bfsscan, \bfstclt, \getdia, \diascan and \diametertclt} \label{fig:bfs-dia-methods}
\end{figure*}
}

\ignore{
\begin{figure*}[!t]
    \begin{subfigure}{.45\textwidth}
    \begin{algorithmic}[1]
	\scriptsize
        \renewcommand{\algorithmicprocedure}{\textbf{Operation}}	
	\Procedure{$\getbfs$($\vkey$)}{}\label{getbfsstart}
        \State{ $tid$ $\gets$ this\_thread.get\_id();} 
	\If{(\isMarked(\vkey)}
	 \State{return \nul;} 
        \EndIf
         \State{list $<\bfsnode>$  $bT$; } 
         \State{$\langle st,bT\rangle$ $\gets$ \bfsscan($\vkey$, $tid$);}\label{lin:getbfs-scan1}
         \If{($st$ = \tru)}
        \State {return $bT$; }
         \EndIf 
	\EndProcedure\label{getbfsend}
        \algstore{getbfs}
\end{algorithmic}	
    \hrule
	\begin{algorithmic}[1]
	\algrestore{getbfs}
	\scriptsize
	\renewcommand{\algorithmicprocedure}{\textbf{Method}}
	\Procedure{$\bfsscan$($\vkey$, $tid$)}{}\label{bfsscanstart}
        \State{$\bfsnode$  $ot$, $nt$ ; } 
          \State{$ot$ $\gets$ $\bfstclt$($\vkey$, $tid$); }
        \While{(\tru)} \label{bfswhilescan1}
           \State{$nt$ $\gets$ \bfstclt($\vkey$, $tid$); } 
           \If{($\comparetree$($ot$, $nt$))}  
           \EndIf
           \State{$ot$ $\gets$ $nt$;}\label{bfsrescan1}
    \EndWhile
		\EndProcedure\label{bfsscanend}
        \algstore{bfsscan}
\end{algorithmic}
\hrule
	\begin{algorithmic}[1]
	\algrestore{bfsscan}
	\scriptsize
        \renewcommand{\algorithmicprocedure}{\textbf{Operation}}			
        \Procedure{$\getdia$()}{}\label{getdiameterstart}
        \State{$tid$ $\gets$ this\_thread.get\_id();} 
         \State{list $<\bfslistnode>$  $st$; } 
         \State{$\langle s, spt \rangle$ $\gets$ $\diascan$($tid$);} \label{lin:getdia-scan}
         \If{($s$ = \tru)}
         \State{return `` Diameter of the $st$.\bnext  having largest \maxlevel[$tid$]''}
         \EndIf 
	\EndProcedure\label{getdiameterend}
        \algstore{getdiameter}
\end{algorithmic}	
\hrule
\begin{algorithmic}[1]
	\algrestore{getdiameter}
	\scriptsize
	\renewcommand{\algorithmicprocedure}{\textbf{Method}}
	\Procedure{\diascan($\vkey$, $tid$)}{}\label{diameterscanstart}
         \State{$\bfslistnode$  $otgph$, $ntgph$;} 
         \State{$otgph$ $\gets$ $\diametertclt$($tid$);}
        \While{(\tru)} \label{diawhilescan-getdia}
        \State{$ntgph$ $\gets$ \diametertclt($tid$);} 
        \If{($\comparetreegraph$($otgph$, $ntgph$))} \label{lin:bfsscan-comparetree1}
        \State{return $\langle$\tru, $ntgph$ $\rangle$;}
        \EndIf
        \State{$otgph$ $\gets$ $ntgph$;} \label{bfsrescan-statechange}
        \EndWhile
	\EndProcedure\label{diameterscanend}
        \algstore{diameterscan}
\end{algorithmic}
\hrule
	\begin{algorithmic}[1]
	\algrestore{diameterscan}
        \renewcommand{\algorithmicprocedure}{\textbf{Method}}	
	\scriptsize
	\Procedure{$\bfstclt$($\vkey$, $tid$)}{}\label{bfstcltstart}
        \State{\bfsnode $bt$;} // \bfstree 
	   \State{queue $<$\bfsnode$>$ $que$;} 
         \State{\tcount $\gets$ \tcount $+$ 1; } {u.\visitedarray[$tid$] $\gets$ \tcount ;} 
         \State{$bn$ $\gets$ new $\bfsnode$($\vkey$, \nul, \nul,$\vkey$.\ecount);}
        \State{$bt$.$\add$($bn$);} \label{lin:bfstclt-add1}// insert $bn$ to the \bfstree
         \State{$que$.enque($bn$);} 
        \While{($\neg$ $que$.empty())} 
        \State{$cvn$ $\gets$ $que$.deque();} 
        \If{($\neg \isMarked$($cvn$.n))} {\cntu;}  
        \EndIf
        \State{$eh$ $\gets$ $cvn$.n.\enext;} 
        \State /* process all neighbors of $cvn$  in the order of inorder traversal */
        \For{( $itn$ $\gets$ \eh.\enext to all neighbors of $cvn$)}
        \If{($\neg$$\isMarked$($itn$))}  
        \State{ $adjn$ $\gets$ $itn$.\pointv;} 
        \If{($\neg$$\isMarked$($adjn$))} 
        \If{($\neg$ $\checkvisited$($tid$, $adjn$, \tcount))} 
        \State{$adjn$.\visitedarray[$tid$] $\gets$ \tcount ;}
        \State{$bn$ $\gets$ new $\bfsnode$($adjn$, $cvn$, \nul,$adjn$.\ecount);}
        \State{$bt$.$\add$($bn$);} \label{lin:bfstclt-add2} \hspace{2mm}{$que$.enque($bn$);} 
        \EndIf
        \EndIf
        \EndIf
        \EndFor
    \EndWhile
     \State{return $bt$;} // return the \bfstree
		\EndProcedure\label{bfstcltend}
		\algstore{bfstclt}
\end{algorithmic}
\hrule
	\begin{algorithmic}[1]
	\algrestore{bfstclt}
	\scriptsize
        \renewcommand{\algorithmicprocedure}{\textbf{Method}}
	\Procedure{\diametertclt($tid$)}{}\label{diametertcltstart}
        \State{$\bfslistnode$  $bfslt$; } 
        \State{$vn$ $\gets$ \vh;}
        \While{($vn$ $\neq$ \nul)}
        \If{($\neg$\isMarked(vn))}
        \State{$bt$ $\gets$ $\bfstclt$($vn$,$tid$);}
        \State{$bfstl$.\add($vn$,$bt$);} // insert $bt$ to the $bfslt$ list
        \EndIf
        \State{$vn$ $\gets$ $vn$.\vnext;}
        \EndWhile
        \State{return $bfslt$;}
	\EndProcedure\label{diametertcltend}
        \algstore{diametertclt}
\end{algorithmic}
	\caption{Pseudocodes of \getbfs, \bfsscan, \bfstclt, \getdia, \diascan and \diametertclt} \label{fig:bfs-dia-methods}
\end{figure*}
}


\subsection{Single Source Shortest Paths} \label{subsec:sssp-bellmanford}
\ignore{The \sssp \cite{Cormen+:IntroAlgo:2009} problem defined as for a given weighted directed graph $G=(V,E, w(E))$, the weight function defined on $\mathbb{R}$, i.e., $w(E) \rightarrow$ $\mathbb{R}$, starts from a source vertex, say $\vkey$, finds the shortest path to each vertex $\vkey_i \in V$. In the end, it returns a shortest path tree with root $\vkey$ that holds the shortest path to each vertex $\vkey_i \in V$. 
For non-negative edge weights the \sssp can be solved using Dijkstra’s algorithm, in the worst case it requires $O(|E| + |V| log|V|)$ operation(see  \cite{Cormen+:IntroAlgo:2009}). Also, its parallel variants can found in \cite{Solka+:FPDijkstra:NC:1995,Jasika+:PDijkstra:MIPRO:2012}. However, for negative edge weights, the Dijkstra’s algorithm does not work. So, the Bellman-Ford algorithm \cite{Cormen+:IntroAlgo:2009} can be used, in the worst case it requires $O(|V||E|)$ operation.
}
$\getsp\in\mathcal{Q}$ builds on \getbfs and the implementation depicted in \figref{sssp-methods}. The \opstruct class here contains three fields: \ecount, \visitedarray and an array \distarray of size equal to the number of threads. For a $\getsp(\vkey)$, the distance to a visited vertex $u$ from $\vkey$ is stored at a cell of $u.\opitem.\distarray$ corresponding the traversing thread. We assume that the edge weights can be a real number: $\eweight\in\mathbb{R}$, thus, the graph $G$ can possibly contain negative edge cycles \cite{Cormen+:IntroAlgo:2009}. 
\begin{figure*}[!htp]
\begin{multicols}{2}
\begin{algorithmic}[1]
	\algrestore{bfstclt}
		\scriptsize
\renewcommand{\algorithmicprocedure}{\textbf{Operation}}	
	\Procedure{$\getsp(\vkey)$}{}\label{getspstart} 
        \State{ $tid$ $\gets$ $\gettid$();} \linecomment{//Get the thread id}
	\If{($\isMarked$($\vkey$))}\linecomment{//Validate the vertex}
	 \State {return \nul; } \linecomment{//$\vkey$ is not present.}
        \EndIf
         \State{$\langle s, spt \rangle$ $\gets$ $\spscan$($\vkey$, $tid$);}  \label{lin:getbfssp-2c-scan} \linecomment{//Invoke the scan method}
         \If{($s$ = \tru)} 
          \State {return $spt$; } \linecomment{//Return \sptree }
         \Else 
          \State {return $\nul$; } \linecomment{// graph has neg-cycle}
         \EndIf
		\EndProcedure\label{getspend}
        \algstore{getsp}
\end{algorithmic}
\hrule
    \scriptsize
	\begin{algorithmic}[1]
	\algrestore{getsp}
		\scriptsize
		\renewcommand{\algorithmicprocedure}{\textbf{Method}}
		\Procedure{$\spscan$($\vkey$, $tid$)}{}\label{spscanstart}
         \State{list $<\spnode>$  $ospt$, $nspt$; } 
          \State{ $\langle ospt, s1 \rangle$  $\gets$ $\sptclt$($\vkey$, $tid$); }\linecomment{//$1^{st}$ Collect} 
          
        \While{(\tru)} \label{whilescangetsp} \linecomment{//Repeat the tree collection}
           \State{$\langle nspt, s2 \rangle$   $\gets$ $\sptclt$($\vkey$, $tid$);}\linecomment{//$2^{nd}$ Collect} 
           \If{($s1$ = \fal $\wedge$ $s2$ = \fal $\wedge$ $\comparetree$( $ospt$, $nspt$))} 
           \label{lin:spscan-comparetree1}
           \State{return $\langle \tru, nspt\rangle$;} 
          \Else \If{($s1$ = \tru $\wedge$ $s2$ = \tru $\wedge$ $\comparetree$($ospt$, $nspt$))}\label{lin:spscan-comparetree2}
           \State{return $\langle \fal, \nul \rangle$;}
           \EndIf
           
           \EndIf
           \State{$s1$ $\gets$ $s2$;} 
           \State{$ospt$ $\gets$ $nspt$;}\label{rescan} \linecomment{//Retry if two collects are not equal}
    \EndWhile
		\EndProcedure\label{spscanend}
        \algstore{spscan}
\end{algorithmic}
\hrule

\begin{algorithmic}[1]
	\algrestore{spscan}
	\scriptsize
	\renewcommand{\algorithmicprocedure}{\textbf{Method}}
	\Procedure{$\checknegcycle$( $tid$,
	$spt$)}{}\label{checknegcyclestart}
        \For{($itb$ $\gets$ $spt$ to $itb$.\bnext  $\neq$ \nul)} \label{lin:sptree-for1}
        \State{$cvn$ $\gets$ $itb$.n;} 
        \State{$itn$ $\gets$ $cvn$.n.\enext;} 
        \State{stack $<$\enode $>$ $S$;}\label{lin:checknegcycle-stack} \linecomment{// stack for inorder traversal}
        \While{($itn$ $\vee$ $\neg S$.empty())}
        \While{($itn$ )}
         \If{($\neg$$\isMarked$($itn$))}
         \State{$S$.push($itn$)}; \linecomment{// push the \enode}
         \EndIf
         \State{$itn$ $\gets$ $itn.\eleft$;} 
        \EndWhile
        \State{$itn$ $\gets$ $S$.pop()};
        \State{\vnode $adjn$ $\gets$ $itb$.\pointv;} 
        \If{($\rlx$($tid$,$cvn$.n,$adjn$,$itn$)} \label{lin:checknegcycle-rlx} 
         \State{return \tru;} //Neg-cycle present
         \EndIf
         \State{$itn$ $\gets$ $itn.\eright$;} 
        \EndWhile
    \EndFor
      \State{return \fal;}\linecomment{//Negative cycle is not present }
		\EndProcedure\label{checknegcycleend}
		\algstore{checknegcycle}
\end{algorithmic}
\hrule

\scriptsize
\begin{algorithmic}[1]
	\algrestore{checknegcycle}
		\scriptsize
		\renewcommand{\algorithmicprocedure}{\textbf{Method}}
		\Procedure{\Init($tid$)}{}\label{initstart} 
		\State{\hnode $hn$ $\gets$ \head;}
		\For{($i$ $\gets$ $0$; $i$ $<$ $hn.\sizee$; $i++$)} 
		\State{\fset $fsetnode$ $\gets$ $hn$.\buckets[$i$];}
		\State{$vn$ $\gets$ $fsetnode$.\vh;}
		\While{($vn$)}
		\If{($\neg$\isMarked($vn$))} 
		\State{$vn$.\opitem.\distarray[$tid$] $\gets$ $\infty$;}\linecomment{//Initialize \distarray[tid] }
		\EndIf
		\EndWhile
		\EndFor
  \EndProcedure\label{initend}
  \algstore{spinit}
\end{algorithmic}
\hrule

\scriptsize
\begin{algorithmic}[1]
	\algrestore{spinit}
		\scriptsize
		\renewcommand{\algorithmicprocedure}{\textbf{Method}}
		\Procedure{$\rlx$($tid$, $u$, $v$, $en$)}{}\label{rlxstart} 
         \If{($v$.\opitem.\distarray[$tid$] $>$ $u$.\opitem.\distarray[$tid$] $+$ $en$.\eweight)}
         \State{$v$.\opitem.\distarray[$tid$] $\gets$ $u$.\opitem.\distarray[$tid$] $+$ $en$.\eweight;} 
         \State{return \tru;} \linecomment{// relaxed}
         \Else 
         \State{return \fal;} \linecomment{// not relaxed}
         \EndIf
         \EndProcedure\label{rlxend}
        \algstore{sprlx}
\end{algorithmic}
\hrule
	\begin{algorithmic}[1]
	\algrestore{sprlx}
\renewcommand{\algorithmicprocedure}{\textbf{Method}}	
	\scriptsize
	\Procedure{$\sptclt$($\vkey$, $tid$)}{}\label{sptcltstart}
       \State{\spnode $spt$;} 
	   \State{queue$<$\bfsnode$>$$que$;}\linecomment{//Queue used for traversal}
	   \State{\Init($tid$);} \label{lin:sptclt-init}
	   \linecomment{//Initialize all vertices to $\infty$ distance from } 
         \State{\tcount $\gets$ \tcount $+$ 1;} \linecomment{//Thread local variable inc. by one}
         \State{$\vkey$.\opitem.\visitedarray[tid] $\gets$ \tcount;} 
         \linecomment{//Mark it visited} 
         \State{$\vkey$.\distarray[tid] $\gets$ 0 ;}  \label{lin:sptclt-init-vto0} \linecomment{//Set the dist of $\vkey$ to $0$}  
         \State{$spn$ $\gets$ new $\spnode$($\vkey$,\nul,\nul,$\vkey$.\opitem.\ecount);}
        \State{$spt$.$\addsp$($spn$);} 
        \linecomment{// add $spn$ to \sptree}
        \State{$que$.enque($spn$);} \linecomment{//Push the $spn$ into the $que$}
        \While{($\neg$ $que$.empty())} \linecomment{//Iterate over all vertices}
        \State{$cvn$ $\gets$ $que$.deque();} \linecomment{//Get the front node} 
       \If{($\neg$\isMarked($cvn$.n))}
       \State{\cntu;} \linecomment{//If marked then continue} 
        \EndIf
       \State{$itn$ $\gets$ $cvn$.n.\enext;} \linecomment{//Get the root \enode from its \elist}
        \State{stack $<$\enode $>$ $S$;} \linecomment{// stack for inorder traversal}
        \While{($itn$ $\vee$ $\neg S$.empty())}
        \While{($itn$ )}
         \If{($\neg$$\isMarked$($itn$))}
         \State{$S$.push($itn$)}; \linecomment{// push the \enode}
         \EndIf
         \State{$itn$ $\gets$ $itn.\eleft$;} 
        \EndWhile
        \State{$itn$ $\gets$ $S$.pop()};
        \If{($\neg$ $\isMarked$($itn$))} \linecomment{//Validate it}  
        \State{$adjn$ $\gets$ $itn$.\pointv;} \linecomment{//Get its \vnode} 
        \If{($\neg$$\isMarked$($adjn$))} 
        \If{($\rlx$($tid$,$cvn$.n,$adjn$,$itn$))} \label{lin:sptcltrexd} 
         
        \If{($\neg$$\checkvisited$($tid$,$adjn$,\tcount))} 
        \State{$adjn$.$\visitedarray$[$tid$]$\gets$\tcount;}
        \State{$spn$$\gets$new$\spnode$( $adjn$,$cvn$,\nul,} 
       \Statenolinnum{ $adjn$.\opitem.\ecount);}
        
        
         \State{$spt$.\addsp($spn$);} \label{lin:sptclt-addsp-spn} 
         \State{$que$.enque($spn$);} 
        \Else
       \State{\linecomment{/*If already visited and the distance}}
       \State{\linecomment{is relaxed further then update the}}
        \State{\linecomment{corresponding information in the $spt$*/}}
      \State{$spt$.$\updatesp$($tid$,$cvn$.$head$,$cvn$, $spn$);}\label{lin:sptclt-updatespt}
       \EndIf
        \EndIf
        \EndIf
        \EndIf
         \State{$itn$ $\gets$ $itn.\eright$;}
        \EndWhile
    \EndWhile
     \State{return $\langle spt, \checknegcycle(tid, spt)$;} \label{lin:sptclt-checknegcycle} 
		\EndProcedure\label{sptcltend}
		\algstore{bfssptclt}
\end{algorithmic}
\hrule
\begin{algorithmic}[1]
	\algrestore{bfssptclt}
		\scriptsize
		\renewcommand{\algorithmicprocedure}{\textbf{Method}}
		\Procedure{$\updatesp$($tid$, $head$, $par$, $spn$)}{}\label{updatespstart} 
        \State{\spnode $tn$ $\gets$ $head$;}
        \While{(tn)}
        \If{($tn.n.\vkey$ = $spn.\vkey$)}
        \State{$tn.n.\opitem.\distarray[tid]$ = $spn.\opitem.\distarray[tid]$;} \label{lin:updatesp-distarray} \linecomment{//update the shortest distance}
        \State{$tn.p$ = $par$;}\label{lin:updatesp-parent}
        \linecomment{//update the parent reference}
        \State{return;}
        \EndIf
        \State{$tn$ $\gets$ $tn.next$;}
        \EndWhile
         \EndProcedure\label{updatespend}
        \algstore{updatesp}
\end{algorithmic}
\end{multicols}
	\caption{The SSSP query.} \label{fig:sssp-methods}
\end{figure*}

\ignore{
\begin{figure*}[!htp]
\begin{multicols}{2}
\begin{algorithmic}[1]
	\algrestore{treeclt}
		\scriptsize
		\renewcommand{\algorithmicprocedure}{\textbf{Method}}
		\Procedure{$\spscan$($v$, $tid$)}{}\label{spscanstart}
		\State{\linecomment{$\cdots$// Check if $\vkey$ is present.}}
        \State{ $\langle ospt, s1 \rangle$  $\gets$ $\sptclt$($v$, $tid$); }
        \While{(\tru)} \label{whilescangetsp}  \State{$\langle nspt, s2 \rangle$   $\gets$ $\sptclt$($v$, $tid$);}
         \If{($s1$ = \fal $\wedge$ $s2$ = \fal $\wedge$ $\comparetree$( $ospt$, $nspt$))} {return $\langle \tru, nspt\rangle$;} 
           \label{lin:spscan-comparetree1}
         \Else \If{($s1$=\tru$\wedge$$s2$=\tru $\wedge$$\comparetree$($ospt$, $nspt$))} {return $\langle \fal, \nul \rangle$;}\label{lin:spscan-comparetree2}
         \EndIf
       \EndIf
       \State{$s1$ $\gets$ $s2$;} {$ospt$ $\gets$ $nspt$;}\label{rescan}
    \EndWhile
		\EndProcedure\label{spscanend}
        \algstore{spscan}
\end{algorithmic}
\hrule
\begin{algorithmic}[1]
	\algrestore{spscan}
\renewcommand{\algorithmicprocedure}{\textbf{Method}}	
	\scriptsize
	\Procedure{$\sptclt$($\vkey$, $tid$)}{}\label{sptcltstart}
     \State{\linecomment{$\cdots$// The steps of \treeclt }}
     \State{\linecomment{$\cdots$// from line \ref{treecltqdef} to \ref{lin:treeclt-marked-adjn}.}}
    \If{($\rlx$($tid$,$cvn$.n,$adjn$,$itn$))}\label{line:sptcltrexd} \If{($\neg$$\checkvisited$($tid$,$adjn$,\tcount))} 
        \State{$adjn$.\opitem.$\visitedarray$[$tid$]$\gets$\tcount;}
        \State{$spn$$\gets$new$\spnode$( $adjn$,$cvn$,\nul,} 
       \Statenolinnum{ $adjn$.\opitem.\ecount);}
       \State{\linecomment{$\cdots$// add $spn$ to the \sptree}}
      \State{$que$.enque($spn$);} 
      \Else
      \State{\linecomment{$\cdots$// update the \sptree}}\label{updatespt}
       \EndIf
      \EndIf 
    \State{return $\langle spt, \checknegcycle(tid, spt)$;} 
	\EndProcedure\label{sptcltend}
	\algstore{bfssptclt}
\end{algorithmic}
\hrule
\begin{algorithmic}[1]
	\algrestore{bfssptclt}
	\scriptsize
	\renewcommand{\algorithmicprocedure}{\textbf{Method}}
	\Procedure{$\checknegcycle$( $tid$,
	$spt$)}{}\label{checknegcyclestart}
        \For{($itb$ $\gets$ $spt$ to $itb$.\bnext  $\neq$ \nul)} \label{lin:sptree-for1}
        \State{$cvn$ $\gets$ $itb$.n; {$eh$ $\gets$ $cvn$.n.\enext;}} 
        \State{\linecomment{$\cdots$// process all neighbors of $cvn$  in the}}
        \State{\linecomment{$\cdots$//order of inorder traversal.}}
        \State{\vnode $adjn$ $\gets$ $itb$.\pointv;} 
        \If{($\rlx$($tid$,$cvn$.n,$adjn$,$itb$))}
         \State{return \tru;} 
         \EndIf
        \EndFor
      \State{return \fal;}
		\EndProcedure\label{checknegcycleend}
		\algstore{checknegcycle}
\end{algorithmic}
\hrule
\begin{algorithmic}[1]
	\algrestore{checknegcycle}
		\scriptsize
		\renewcommand{\algorithmicprocedure}{\textbf{Method}}
		\Procedure{$\rlx$($tid$, $u$, $v$, $en$)}{}\label{rlxstart} 
         \If{($v$.\opitem.\distarray[$tid$] $>$ $u$.\opitem.\distarray[$tid$] $+$ $en$.\eweight)}
         \State{$v$.\opitem.\distarray[$tid$] $\gets$ $u$.\opitem.\distarray[$tid$] $+$ $en$.\eweight;}
         \State{return \tru;} 
         \Else {\hspace{0.25mm} {return \fal;}}
         \EndIf
         \EndProcedure\label{rlxend}
        \algstore{sprlx}
\end{algorithmic}
\end{multicols}
\vspace{-5mm}
	\caption{The \getsp query.} \label{fig:sssp-methods}
\end{figure*}

}

\ignore{
\begin{figure*}[!t]
 \begin{subfigure}{.5\textwidth}
\begin{algorithmic}[1]
	\algrestore{diametertclt}
		\scriptsize
\renewcommand{\algorithmicprocedure}{\textbf{Operation}}	
	\Procedure{$\getsp(\vkey)$}{}\label{getspstart} 
        \State{ $tid$ $\gets$ this\_thread.get\_id();} 
	\If{($\isMarked$(\vkey))} 
	 \State {return \nul; } 
        \EndIf
         \State{$\langle s, spt \rangle$ $\gets$ $\spscan$($v$, $tid$);} \label{lin:getbfssp-2c-scan} 
         \If{($s$ = \tru)} {return $st$; }
         \Else {\hspace{.25mm} return $\nul$; }
         \EndIf
		\EndProcedure\label{getspend}
        \algstore{getsp}
\end{algorithmic}

\hrule
	\begin{algorithmic}[1]
	\algrestore{getsp}
		\scriptsize
		\renewcommand{\algorithmicprocedure}{\textbf{Method}}
		\Procedure{$\spscan$($v$, $tid$)}{}\label{spscanstart}
         \State{list $<\spnode>$  $ospt$, $nspt$; } 
          \State{ $\langle ospt, s1 \rangle$  $\gets$ $\sptclt$($v$, $tid$); } 
          
        \While{(\tru)} \label{whilescangetsp}
           \State{$\langle nspt, s2 \rangle$   $\gets$ $\sptclt$($v$, $tid$); } 
           \If{($s1$ = \tru $\wedge$ $s2$ = \tru $\wedge$ $\comparetree$( $ospt$, $nspt$))} {return $\langle \tru, nspt\rangle$;}  \label{lin:spscan-comparetree1}
          \Else  \If{( $s1$ = \fal $\wedge$ $s2$ = \fal $\wedge$ $\comparetree$($ospt$, $nspt$))}  \label{lin:spscan-comparetree2}
           \State{return $\langle \fal, \nul \rangle$;}
           \EndIf
           
           \EndIf
           \State{$s1$ $\gets$ $s2$;} 
           \State{$ospt$ $\gets$ $nspt$;}\label{rescan}
    \EndWhile
		\EndProcedure\label{spscanend}
        \algstore{spscan}
\end{algorithmic}
\hrule
\begin{algorithmic}[1]
	\algrestore{spscan}
		\scriptsize
		\renewcommand{\algorithmicprocedure}{\textbf{Method}}
		\Procedure{\Init($tid$)}{}\label{initstart} 
		\For{($vn$ $\gets$ \vh to  \vt)} 
		\If{($\neg$\isMarked($vn$))}
		\State{$vn$.\distarray[$tid$] $\gets$ $\infty$;}
		\EndIf
		\EndFor
         \EndProcedure\label{initend}
        \algstore{spinit}
\end{algorithmic}
\hrule
\begin{algorithmic}[1]
	\algrestore{spinit}
	\scriptsize
	\renewcommand{\algorithmicprocedure}{\textbf{Method}}
	\Procedure{$\checknegcycle$( $tid$, $spt$)}{}\label{checknegcyclestart}
        \For{($itb$ $\gets$ $spt$ to $itb$.\bnext  $\neq$ \nul)} \label{lin:sptree-for1}
        \State{$cvn$ $\gets$ $itb$.n;} 
        \State{$eh$ $\gets$ $cvn$.n.\enext;} 
        \State /* process all neighbors of $cvn$  in the order of inorder traversal */
        \For{($itn$ $\gets$ $eh$.\enext to all neighbors of $cvn$)} \label{lin:sptree-for2}
        \State{\vnode $adjn$ $\gets$ $itb$.\pointv;} 
        \If{($adjn$.\distarray[$tid$] $>$ $cvn$.\distarray[$tid$] $+$ $itb$.\eweight)} {return \tru;}
         \EndIf
        \EndFor
    \EndFor
      \State{return \fal;} 
		\EndProcedure\label{checknegcycleend}
		\algstore{checknegcycle}
\end{algorithmic}
	
\end{subfigure}
\begin{subfigure}{.5\textwidth}
	\begin{algorithmic}[1]
	\algrestore{checknegcycle}
\renewcommand{\algorithmicprocedure}{\textbf{Method}}	
	\scriptsize
	\Procedure{$\sptclt$($k$, $tid$)}{}\label{sptcltstart}
	    \State{queue $<$\bfsnode$>$ $que$;} 
	   \State{\Init($tid$);}  {\tcount $\gets$ \tcount $+$ 1; }
	   
         \State{$k$.\visitedarray[tid] $\gets$ \tcount;} {$k$.\distarray[tid] $\gets$ 0;} 
         \State{ $spn$ $\gets$ new   $\spnode$($k$.\distarray[$tid$], $k$, \nul, \nul,$k$.\ecount);}
        \State{$spt$.$\addsp$($spn$);} 
        \State{$que$.enque($spn$);} 
        \While{($\neg$ $que$.empty())} 
        \State{$cvn$ $\gets$ $que$.deque();} 
       \If{($\neg$\isMarked($cvn$.n))} {\cntu;} 
        \EndIf
        \State{$eh$ $\gets$ $cvn$.n.\enext;} 
        \State /* process all neighbors of $cvn$  in the order of inorder traversal */
        \For{( $itn$ $\gets$ \eh.\enext to all neighbors of $cvn$)} \label{lin:sptree-for}
        \If{($\neg$ $\isMarked$($itn$))}  
        \State{$adjn$ $\gets$ $itn$.\pointv;} 
        \If{($\neg$$\isMarked$($adjn$))} 
        \If{($\rlx$($tid$, $cvn$.n, $adjn$, $itn$))} 
         
        \If{($\neg$ $\checkvisited$($tid$, $adjn$, \tcount))} 
        \State{$adjn$.\visitedarray[$tid$] $\gets$ \tcount ;} 
        \State{$spn$ $\gets$ new  $\spnode$($adjn$.\distarray[$tid$], $adjn$, $cvn$, \nul, $adjn$.\ecount);} 
        
         \State{$spt$.\addsp($spn$);}  
         \State{$que$.enque($spn$);} 
        \Else
      \State{$spt$.$\updatesp$($tid$, $spn$);} 
       \EndIf
        \EndIf
        \EndIf
        \EndIf
        \EndFor
    \EndWhile
     \State{return $\langle spt, \checknegcycle(tid, spt)$;} 
		\EndProcedure\label{sptcltend}
		\algstore{bfssptclt}
\end{algorithmic}
\hrule
\begin{algorithmic}[1]
	\algrestore{bfssptclt}
		\scriptsize
		\renewcommand{\algorithmicprocedure}{\textbf{Method}}
		\Procedure{$\rlx$($tid$, $u$, $v$, $en$)}{}\label{rlxstart} 
         \If{($v$.\distarray[$tid$] $>$ $u$.\distarray[$tid$] $+$ $en$.\eweight)}
         \State{$v$.\distarray[$tid$] $\gets$ $u$.\distarray[$tid$] $+$ $en$.\eweight;}
         \State{return \tru;} 
         \Else 
         \State{return \fal;} 
         \EndIf
         \EndProcedure\label{rlxend}
        \algstore{sprlx}
\end{algorithmic}
\end{subfigure}
	\caption{Pseudocodes of \getsp, \spscan, \Init, \rlx, \sptclt, and \checknegcycle} \label{fig:sssp-methods}
\end{figure*}
}

Essentially, our methodology is an adaptation of the Bellman-Ford algorithm \cite{Cormen+:IntroAlgo:2009} to our framework. Similar to \getbfs, a $\getsp(\vkey)$ operation,  Lines \ref{getspstart} to \ref{getspend}, starts by validating the presence of the $\vkey$ in the hash table and it is unmarked. After successful validation, it invokes the method \spscan, Lines \ref{spscanstart} to \ref{spscanend}, works similar to \scan described before: repeatedly collect \textit{SP-trees} using the method \sptclt, return the last one on a match of a consecutive pair of collects, see Line \ref{lin:spscan-comparetree1}. A SP-tree is a BFS-tree with some modifications described below. However, as we consider the presence of negative edge cycles, it can return \nul if a consecutive pair of \sptclt discover negative edge cycle, see Line \ref{lin:spscan-comparetree2}. The method \sptclt extends \treeclt as the following:
\begin{enumerate}[leftmargin=0.5cm,align=left,labelwidth=\parindent,labelsep=0pt,topsep=0pt,itemsep=0pt]
\item[\textsc{\textbf{Vertex distance tracking:~}}] Every time a new vertex is collected, check its distance from source and update the current recorded distance if required using method \rlx, see Lines \ref{rlxstart} to \ref{rlxend} and Line \ref{lin:sptcltrexd}. Formally, A $\rlx(tid, u, v, w(u,v)))$, at Line \ref{rlxstart} to \ref{rlxend}, method works similar as \cite{Cormen+:IntroAlgo:2009}, it checks for each edge $(u,v)$ if $v.\opitem.\distarray[tid] > u.\opitem.\distarray[tid] + w(u,v)$; if so, it sets the $v.\opitem.\distarray[tid]$ to $u.\opitem.\distarray[tid] + w(u,v)$.  

\item[\textsc{\textbf{Process BFS-tree:~}}] The method \sptclt(in Lines \ref{sptcltstart} to \ref{sptcltend}) collects \textit{SP-tree} (of \treenode{s}). First, it initializes the shortest paths \distarray(Distance Array) to all vertices to $\infty$, by invoking the method \Init(\linref{sptclt-init} except for the source vertex $\vkey$ which is initialized to $0$ (\linref{sptclt-init-vto0}). Then it traverse all unmarked reachable \vnodes through unmarked \enodes, and also a \rlx(relaxed) (\linref{sptcltrexd}) method repeatedly called. In the process of traversal it keeps adding all first time visited \vnodes in the $spt$, at \linref{sptclt-addsp-spn}), and  if it encounters already visited \vnodes which are further relaxed, then it updates the \distarray and its  parent pointer, by invoking an \updatesp method at \ref{lin:sptclt-updatespt} . At each of the \treenode{s} update to make a shortest path from source to its vertex. 

An $\updatesp$($tid$, $head$, $par$, $spn$) (Lines \ref{updatespstart} to \ref{updatespend}), starts from the $head$ node and iterate all nodes in the SP-tree until it reaches the corresponding \treenode, and then it updates the shortest path distance, at \linref{updatesp-distarray} and its parent reference pointer, at \linref{updatesp-parent}. 


\item[\textsc{\textbf{Negative cycle checking:~}}] After exploring all reachable \vnodes and Before returning the collected tree, the \sptclt invokes \checknegcycle (Check for Negative Cycle) method (at \linref{sptclt-checknegcycle}). 
A $\checknegcycle$($tid$, $spt$), at Lines \ref{checknegcyclestart} to \ref{checknegcycleend}, starts processing all \treenode{s} in the $spt$. It uses stack $S$ (\linref{checknegcycle-stack}), to process all \enodes of $spt$.$n$. For all \enodes a \rlx(relaxed) (\linref{checknegcycle-rlx}) method  called to test whether any further relaxation is possible for any \enodes, if so, it returns \fal, otherwise, returns \tru. 
\end{enumerate} 

\noindent
At the end the \sptclt method terminates by returning $spt$ and a boolean status filed(presence of neg-cycle) to the \spscan method. After each two consecutive \sptclt, \spcomparetree is invoked to compare the two trees are equal or not. If two consecutive \sptclt returns same status value \tru and both the trees are equal(see \linref{spscan-comparetree1}, then \spscan method returns \tru and $spt$ to $\getsp$ operation. However, if two consecutive \sptclt returns same status value \fal and both the trees are equal(see \linref{spscan-comparetree2}, then \spscan method returns \fal to $\getsp$ operation, which says graph has neg-cycle. If returns of two consecutive \sptclt do not match in the \spcomparetree or different status value, then we discard the older tree and restart the \sptclt.

\subsection{Betweenness Centrality} \label{subsec:bc}

\begin{figure*}[!htp]
\begin{multicols}{2}

	\begin{algorithmic}[1]
	\algrestore{updatesp}
		\scriptsize
\renewcommand{\algorithmicprocedure}{\textbf{Operation}}
        \Procedure{$\getbc$($\vkey$)}{}\label{getbcstart}
        \State{$tid$$\gets$ \gettid();} \linecomment{//Get the thread id}
	    \If{($\isMarked$($\vkey$))}\linecomment{//Validate the vertex}
	     \State {return \nul; } \linecomment{//$\vkey$ is not present.}
        \EndIf
         \State{list $<\bfsnode>$  $st$; } 
         \State{$\langle s, st \rangle$ $\gets$ $\bcscan$($\vkey$, $tid$);}\linecomment{//Invoke the scan} \label{lin:getbfs-scan}
         \If{($s$ = \tru)} \label{lin:getbfs-bcstart}
         \State {$itn$ $\gets$ $st$.\enext; }
         \While{($itn$.\bnext $\neq$ \nul)} \linecomment{//Iterate all nodes}
          \State{$w$ $\gets$ $itn$.n;}
          \State{$bnode$$\gets$$w$.\opitem.$\predlist$[$tid$];}\linecomment{//Get the pred. list}
          \While{(\tru)} \linecomment{//Iterate all pred. nodes}
          \State{$vn$ $\gets$ $bnode$.n;}
          \State{\linecomment{//Compute the \deltaa value}}
         \State{$vn$.\opitem.$\deltaa$[$tid$] $\gets$ $vn$.\opitem.$\deltaa$[$tid$] +}
         \Statenolinnum{ $\frac{vn.\opitem.\sigmaa[tid]}{w.\opitem.\sigmaa[tid]}.(1+w.\opitem.\deltaa[tid])$;} \label{lin:dependcompute}
          \EndWhile         
         \If{($w$ $\neq$ $\vkey$)}
         \State{\linecomment{//Update the betweenness centrality}}
         \State{$w$.\opitem.$\betweenness$[$tid$] $\gets$ $w$.\opitem.$\betweenness$[$tid$] + $w$.\opitem.$\deltaa$[$tid$];}
         \EndIf
         \State{$itn$ $\gets$ $itn$.\bnext;}
          \EndWhile \label{lin:getbfs-bcend}
          \State{return $\vkey$.\betweenness[$tid$];}\linecomment{// return \betweenness of the vertex v}
         \EndIf 
		\EndProcedure\label{getbcend}
        \algstore{getbc}
\end{algorithmic}
\hrule

\begin{algorithmic}[1]
\algrestore{getbc}
	\scriptsize
	\renewcommand{\algorithmicprocedure}{\textbf{Method}}		
		\Procedure{$\bcscan$($\vkey$, $tid$)}{}\label{bcscanstart}
         \State{$\bfsnode$  $ot$, $nt$ ; } 
          \State{$ot$ $\gets$ $\bctclt$($\vkey$, $tid$);}\linecomment{//$1^{st}$ Collect} 
        \While{(\tru)} \label{bfswhilescan}\linecomment{//Repeat the tree collection}
           \State{$nt$ $\gets$ \bctclt($\vkey$, $tid$); } \linecomment{//$2^{nd}$ Collect}
           \If{($\comparetree$($ot$, $nt$))}\label{lin:bfsscan-comparetree}
           \State{return $\langle$\tru, $nt$ $\rangle$;} \linecomment{//Two collects are equal}
           \EndIf
           \State{$ot$ $\gets$ $nt$;}\label{bfsrescan} \linecomment{//Retry if two collects are not equal}
    \EndWhile
		\EndProcedure\label{bcscanend}
        \algstore{bcscan}
\end{algorithmic}
\hrule
\scriptsize
\begin{algorithmic}[1]
\algrestore{bcscan}
\renewcommand{\algorithmicprocedure}{\textbf{Method}}
	\Procedure{$\bctclt$($\vkey$, $tid$)}{}\label{bctcltstart}
        \State{\bfsnode $bt$;}  
	   \State{queue$<$\bfsnode$>$$que$;} 
         \State{\tcount$\gets$\tcount$+$1;} \linecomment{//Thread local variable inc. by one}
         \State{$\vkey$.\opitem.\visitedarray[$tid$] $\gets$ \tcount ;} \linecomment{//Mark it visited} 
         \State{$bn$ $\gets$ new $\bfsnode$($\vkey$, \nul, \nul, $u$.\ecount);}
        
         \State{$que$.enque($bn$);}\linecomment{//Push the initial node}
        \While{($\neg$ $que$.empty())}\linecomment{//Iterate over all nodes}
        \State{$cvn$ $\gets$ $que$.deque();} \linecomment{// Get the front node}
       \If{($\neg$\isMarked($cvn$.n))}  \linecomment{// check the node is marked or not}
       \State{\cntu;} \linecomment{// If marked then continue} 
        \EndIf 
        \State{$bt$.$\add$($cvn$);} \label{lin:bctclt-add1}\linecomment{//Insert $cvn$ to the \bfstree $bt$}
        \State{$itn$ $\gets$ $cvn$.n.\enext;} 
        \State{stack $<$\enode $>$ $S$;} \linecomment{// stack for inorder traversal}
        \While{($itn$ $\vee$ $\neg S$.empty())}
        \While{($itn$ )}
         \If{($\neg$$\isMarked$($itn$))}
         \State{$S$.push($itn$)}; \linecomment{// push the \enode}
         \EndIf
         \State{$itn$ $\gets$ $itn.\eleft$;} 
        \EndWhile
        \State{$itn$ $\gets$ $S$.pop()};
        \If{($\neg$$\isMarked$($itn$))} 
        \State{$adjn$ $\gets$ $itn$.\pointv;}
        \If{($\neg$$\isMarked$($adjn$))} 
        \If{($adjn$.\opitem.\distarray[$tid$] = $\infty$)} \label{lin:bctclt-pathdescover1}
        \State{\linecomment{//Update the distance} }
        \State{$adjn$.\opitem.$\distarray$[$tid$]$\gets$$cvn$.$n.\opitem.\distarray$[$tid$]+1;}\label{lin:bctclt-pathdescover2}
        \State{$bn$$\gets$new$\bfsnode$($adjn$,$cvn$,\nul,} 
       \Statenolinnum{ $adjn$.\ecount);}
        \State{$que$.enque($bn$);}\linecomment{//Push the node into que }
        \EndIf
         \State{\linecomment{//path counting}}
        \If{($adjn$.\opitem.$\distarray$[$tid$] = $cvn$.$n$.\opitem.$\distarray$[$tid$] +1)}  
        \label{lin:bctclt-pathcount1}
        \State{\linecomment{//Set the \sigmaa value}}
        \State{$adjn$.\opitem.$\sigmaa$[$tid$]$\gets$$adjn$.\opitem.$\sigmaa$[$tid$]+ } 
        \Statenolinnum{ $cvn$.n.\opitem.$\distarray$[$tid$]+1;} \label{lin:bctclt-sigupdate}
        \State{$pn$$\gets$new $\pnode$(cvn.n,\nul,\nul);}
        \State{\linecomment{//Insert $pn$ into the \predlist of $cvn$}}
        \State{$cvn$.$n.\opitem.\predlist$[$tid$].$\add$($pn$);} \label{lin:bctclt-predlist}
        \EndIf
        \EndIf
        \EndIf
         \State{$itn$ $\gets$ $itn.\eright$;}
        \EndWhile
    \EndWhile
     \State{return $bt$;} \linecomment{//Return the tree to the scan method}
		\EndProcedure\label{bctcltend}
\end{algorithmic}
\end{multicols}

	
\caption{The BC query.} \label{fig:bc-methods}
\end{figure*}

\ignore{
\begin{figure*}[!htp]
\begin{multicols}{2}
	\begin{algorithmic}[1]
	\algrestore{sprlx}
		\scriptsize
\renewcommand{\algorithmicprocedure}{\textbf{Operation}}
        \Procedure{$\getbc$($\vkey$)}{}\label{getbcstart}
       \State{\linecomment{$\cdots$// Check if $\vkey$ is present.}}
        \State{$\langle s, st \rangle$ $\gets$ $\bcscan$($v$, $tid$);}
%
         \State {$itn$ $\gets$ $st$.\enext; }\label{lin:getbfs-bcstart}
         \While{($itn$.\bnext $\neq$ \nul)} 
          \State{$w$ $\gets$ $itn$.n;}
          \State{$bnode$$\gets$$w$.\opitem.$\predlist$[$tid$];}
          \While{(\tru)} 
          \State{$vn$ $\gets$ $bnode$.n;}
         \State{$vn$.\opitem.$\deltaa$[$tid$] $\gets$ $vn$.\opitem.$\deltaa$[$tid$] +}\label{lin:dependcompute}
         \Statenolinnum{ $\frac{vn.\opitem.\sigmaa[tid]}{w.\opitem.\sigmaa[tid]}.(1+w.\opitem.\deltaa[tid])$;}
          \EndWhile         
         \If{($w$ $\neq$ $v$)}
         \State{$w$.\opitem.$\betweenness$[$tid$] $\gets$ $w$.\opitem.$\betweenness$[$tid$] +} \Statenolinnum{$w$.\opitem.$\deltaa$[$tid$];}
         \EndIf
         \State{$itn$ $\gets$ $itn$.\bnext;}\label{lin:bccompute}
          \EndWhile \label{lin:getbfs-bcend}
          \State{return $v$.\opitem.$\betweenness$[$tid$];}
		\EndProcedure\label{getbcend}
        \algstore{getbc}
\end{algorithmic}

\begin{algorithmic}[1]
\algrestore{getbc}
\renewcommand{\algorithmicprocedure}{\textbf{Operation}}		\scriptsize
\renewcommand{\algorithmicprocedure}{\textbf{Method}}
	\Procedure{$\bctclt$($u$, $tid$)}{}\label{bctcltstart}
 
     \State{\linecomment{$\cdots$// Similar to \treeclt}}
     \State{\linecomment{$\cdots$// from line \ref{treecltqdef} to \ref{lin:treeclt-marked-adjn}. }}
        \If{($adjn$.\opitem.\distarray[$tid$] = $\infty$)} \label{lin:bctclt-pathdescover1}
        \State{$adjn$.\opitem.$\distarray$[$tid$]$\gets$$cvn$.$n.\opitem.\distarray$[$tid$]+1;}\label{lin:bctclt-pathdescover2}
        \State{$bn$$\gets$new$\bfsnode$($adjn$,$cvn$,\nul,} 
       \Statenolinnum{ $adjn$.\ecount);}
        \State{$que$.enque($bn$);}
        \EndIf
        \If{($adjn$.\opitem.$\distarray$[$tid$] = $cvn$.$n$.\opitem.$\distarray$[$tid$] +1)}  
        \label{lin:bctclt-pathcount1}
        \State{$adjn$.\opitem.$\sigmaa$[$tid$]$\gets$$adjn$.\opitem.$\sigmaa$[$tid$]+ }\label{lin:bctclt-sigupdate}         
        \Statenolinnum{ $cvn$.n.\opitem.$\distarray$[$tid$]+1;} \label{lin:bctclt-pathcount2}
        \State{$pn$$\gets$new $\pnode$(cvn.n,\nul,\nul);}
        \State{$cvn$.$n.\opitem.\predlist$[$tid$].$\add$($pn$);} \label{lin:bctclt-predlist}
       \EndIf
         \State{$itn$ $\gets$ $itn.\eright$;}
     \State{return $bt$;} 
		\EndProcedure\label{bctcltend}
\end{algorithmic}
\end{multicols}
\vspace{-5mm}
\caption{The \getbc query.} \label{fig:bc-methods}
\end{figure*}
}

\ignore{
\begin{figure*}[!t]
    \begin{subfigure}{.5\textwidth}

	\begin{algorithmic}[1]
	\algrestore{sprlx}
		\scriptsize
\renewcommand{\algorithmicprocedure}{\textbf{Operation}}
        \Procedure{$\getbc$($k$)}{}\label{getbcstart}
        \State{ $tid$ $\gets$ this\_thread.get\_id();} 
		 \If{($\isMarked$($k$))}
	     \State {return \nul; } 
        \EndIf
         \State{list $<\bfsnode>$  $st$; } 
         \State{$\langle st, st \rangle$ $\gets$ $\bcscan$($u$, $tid$);} \label{lin:getbfs-scan}
         \If{($st$ = \tru)} \label{lin:getbfs-bcstart}
         \State {$itn$ $\gets$ $st$.\enext; }
         \While{($itn$.\bnext $\neq$ \nul)}
          \State{$w$ $\gets$ $itn$.n;}
          \State{$bnode$ $\gets$ $w$.\predlist[$tid$];}
          \While{(\tru)}
          \State{$vn$ $\gets$ $bnode$.n;}
         \State{$vn$.\deltaa[$tid$] $\gets$ $vn$.\deltaa[$tid$] + $\frac{vn.\sigmaa[tid]}{w.\sigmaa[tid]}.(1+w.\deltaa[tid])$;}
          \EndWhile         
         \If{($w$ $\neq$ $v$)}
         \State{$w$.\betweenness[$tid$] $\gets$ $w$.\betweenness[$tid$] + $w$.\deltaa[$tid$];}
         \EndIf
         \State{$itn$ $\gets$ $itn$.\bnext;}
          \EndWhile \label{lin:getbfs-bcend}
          \State{return $v$.\betweenness[$tid$];}
         \EndIf 
		\EndProcedure\label{getbcend}
        \algstore{getbc}
\end{algorithmic}
\hrule
\begin{algorithmic}[1]
\algrestore{getbc}
	\scriptsize
	\renewcommand{\algorithmicprocedure}{\textbf{Method}}		
		\Procedure{$\bcscan$($u$, $tid$)}{}\label{bcscanstart}
         \State{$\bfsnode$  $ot$, $nt$ ; } 
          \State{$ot$ $\gets$ $\bctclt$($u$, $tid$); }
        \While{(\tru)} \label{bfswhilescan}
           \State{$nt$ $\gets$ \bctclt($u$, $tid$); } 
           \If{($\comparetree$($ot$, $nt$))}\label{lin:bfsscan-comparetree}
           \State{return $\langle$\tru, $nt$ $\rangle$;}
           \EndIf
           \State{$ot$ $\gets$ $nt$;}\label{bfsrescan}
    \EndWhile
		\EndProcedure\label{bcscanend}
        \algstore{bcscan}
\end{algorithmic}
\end{subfigure}
\begin{subfigure}{.5\textwidth}
\begin{algorithmic}[1]
	\algrestore{bcscan}
\renewcommand{\algorithmicprocedure}{\textbf{Operation}}		\scriptsize
\renewcommand{\algorithmicprocedure}{\textbf{Method}}
	\Procedure{$\bctclt$($u$, $tid$)}{}\label{bctcltstart}
        \State{\bfsnode $bt$;} // \bfstree 
	   \State{queue $<$\bfsnode$>$ $que$;} 
         \State{\tcount $\gets$ \tcount $+$ 1; } 
         \State{$u$.\visitedarray[$tid$] $\gets$ \tcount ;} 
         \State{$bn$ $\gets$ new $\bfsnode$($u$, \nul, \nul, $u$.\ecount);}
        
         \State{$que$.enque($bn$);} 
        \While{($\neg$ $que$.empty())}
        \State{$cvn$ $\gets$ $que$.deque();} 
       \If{($\neg$\isMarked($cvn$.n))} 
        \State{\cntu;}
        \EndIf \State{$bt$.$\add$($cvn$);} \label{lin:bctclt-add1}
        \State{$eh$ $\gets$ $cvn$.n.\enext;}
        \For{($itn$ $\gets$ $eh$.\enext to $et$)} 
        \If{($\neg$$\isMarked$($itn$))} 
        \State{$adjn$ $\gets$ $itn$.\pointv;} 
        \If{($\neg$$\isMarked$($adjn$))} 
        \If{($adjn$.\distarray[$tid$] = $\infty$)} \label{lin:bctclt-pathdescover1}
        \State{$adjn$.\distarray[$tid$] $\gets$ $cvn$.n.\distarray[$tid$] + 1 ;} \label{lin:bctclt-pathdescover2}
        \State{$bn$ $\gets$ new $\bfsnode$($adjn$, $cvn$, \nul, $adjn$.\ecount);}
        \State{$que$.enque($bn$);} 
        \EndIf
        \If{($adjn$.\distarray[$tid$] = $adjn$.\distarray[$tid$] + 1)} \label{lin:bctclt-pathcount1}
        \State{$adjn$.$\sigmaa$[$tid$] $\gets$ $adjn$.$\sigmaa$[$tid$] + $cvn$.n.\distarray[$tid$] + 1 ;} \label{lin:bctclt-pathcount2}
        \State{$pn$ $\gets$ new $\pnode$(cvn.n, \nul, \nul);}
        \State{\predlist[$tid$].$\add$($pn$);} \label{lin:bctclt-predlist}
        \EndIf
        \EndIf
        \EndIf
        \EndFor
    \EndWhile
     \State{return $bt$;}
		\EndProcedure\label{bctcltend}
\end{algorithmic}
\end{subfigure}
	
\caption{Pseudocodes of  \getbc, \bcscan, \bctclt} \label{fig:bc-methods}
\end{figure*}
}
We defined Betweenness centrality (BC) in Section \ref{sec:intro}. BC is an index measure based on the relative significance of a vertex node in a graph \cite{Freeman:bc:1977}. 
Here we consider unweighted graphs $G=(V,E)$: $w_e=1~\forall e\in E$.

We formally define the BC as the following. Given a directed graph $G=(V, E)$ and some $s, t \in V$, let $\sigma(s,t)$ be the number of shortest paths between vertex $s$ and $t$ and $\sigma(s, t|v)$ be the number of shortest paths between $s$ and $t$ that pass through an intermediate vertex $v$. Then the \textit{pair-dependency} of $s, t$ on $v$ is defined as $\delta(s,t|v) = \frac{\sigma(s, t|v)}{\sigma(s,t)}$, if $s = t$, then $\sigma(s,t)=1$ and if $v \in s, t $, then $\sigma(s,t|v)=0$. $\sigma(s,t)$ can be compute recursively as $\sigma(s,t) = \sum_{u\in Pred(t)} \sigma(s,u)$, where $Pred(t) = \{u : (u, t) \in E, d(s, t) = d(s, u) + 1\}$ (predecessors of $t$ on shortest path from $s$), and $d(s, u)$ is the distance between vertex node $s$ and $u$.  With that, the BC of $v$ is defined as $C_B(v) = \sum_{s, t \in V}\delta(s,t|v)$. To compute $\delta(s,t|v)$ one can run the BFS algorithm with each vertex node as source $s$ and then sum the pair-dependencies for each $v\in V$. 

U. Brandes \cite{Brandes:BC:JMS:2001} proposed an algorithm for unweighted graphs by defining an \textit{one-sided dependencies} equation as $\delta(s|v)$ = $\sum_{(v,w)\in E, w: d(s,w)=d(s,v)+1}$ $\frac{\sigma(s,v)}{\sigma(s,w)}\times (1+\delta(s|w))$, and the BC be $C_B(v) = \sum_{s \in V}\delta(s|v)$.
Brande's algorithm is as follows: it iterates over the vertices $s \in V$ and then computes $\delta(s|v)$ for $v \in V$ in two phases, (1) using the BFS algorithm, it computes the distances and shortest path counts from $s$, and it also keeps tracking of all the vertices onto a stack as they are visited, (2) it visits all the vertices by popping them off from the stack in reverse order and aggregate the dependencies according to the one-sided dependencies equation. 

In our setting, the operation $\getbc\in\mathcal{Q}$, Lines \ref{getbcstart} to \ref{getbcend} in Figure \ref{fig:bc-methods}, builds on \getbfs. In this case, the \opstruct class, in addition to \ecount and \visitedarray, contains five extra array fields: \distarray, \sigmaa, \deltaa, \predlist and \betweenness, whose cells correspond to the threads.
In essence, we adapt the algorithm of Brandes \cite{Brandes:BC:JMS:2001} to our setting using these arrays to allow concurrent threads compute the measure of BC anchored at their corresponding array-cells.

The termination criterion of the method \bcscan is similar to \scan: matching a consecutive pair of specialized partial snapshots, called \textit{BC-tree}. Besides that, the snapshot method \bctclt extends \treeclt as the following:
\begin{enumerate}[leftmargin=0.5cm,align=left,labelwidth=\parindent,labelsep=0pt,topsep=0pt,itemsep=0pt]
	\item[\textsc{\textbf{Distance tracking:~}}] On visiting a vertex, set its distance from \vkey at the corresponding \distarray cell, Line \ref{lin:bctclt-pathdescover2}.
	\item[\textsc{\textbf{Shortest path counting:~}}] If a visited vertex is on the shortest path, adjust the number of shortest paths recorded at the corresponding \sigmaa cell, Line \ref{lin:bctclt-sigupdate}.
	\item[\textsc{\textbf{Predecessor list maintenance:~}}] For each of the visited vertices record and update the list of predecessors from \vkey by the shortest path, Line \ref{lin:bctclt-predlist}. 
\end{enumerate} 
On termination of a \bcscan, see Lines \ref{lin:getbfs-bcstart} to \ref{getbcend}, the final snapshot is processed as the following: for each of the vertices $u$ pointed to by the \treenode{s} of the collected BC-tree, back-propagate to the source $\vkey$ along the predecessor-list; recursively aggregate the \textit{dependencies} w.r.t. the predecessors in $\deltaa[tid]$, where $tid$ is the thread id, Line \ref{lin:dependcompute}; accumulate the dependencies to compute and store \bc in $\betweenness[tid]$, Line \ref{lin:dependcompute}. For space limitation, we skip the detail description of these terminologies here and refer a reader to \cite{Brandes:BC:JMS:2001}.


   
\section{Experiments}
\label{sec:results}
 Evaluating the performance of a new concurrent data structure vis-a-vis the existing ones implementing the same ADT, in general, limits to comparing the throughput or latency of a combined execution of all of the proposed ADT operations. Moreover, in general, they use non-standard synthetic micro-benchmarks. In contrast, the high performance static graph-query libraries are evaluated for the average latency of a parallelized execution of a single graph operation. Furthermore, the datasets used in the evaluation of graph-query libraries are now standard in the literature. As we aimed to present this work as a library of graph operations in a dynamic setting, keeping the aforementioned points in view, we compared the experimental performance of our \nbk graph algorithms against a well-known graph-query library \textbf{Ligra}~\cite{Shun+:ligra:ppopp:2013}. The dynamic updates for Ligra are simulated by intermittent sequential addition and removal in the dataset. We use a standard synthetic graph dataset -- \textbf{R-MAT} graphs \cite{Chakrabarti+:RMAT:SIAM:2004} -- with power-law distribution of degrees. 

As we discussed before, we needed the repeated snapshot collection and matching methodology to guarantee \lbty of graph queries. However, if the consistency requirement is not as strong as \lbty, we can still have \nbk progress even if we collect the snapshots once. At the cost of theoretical consistency, we gain a lot in terms of throughput, which is the primary demand of the analytics applications, who often go for approximate queries. Thus, we have the following execution cases: 
(1) \textbf{PG-Cn}: Linearizable \panighm, (2) \textbf{PG-Icn}: Inconsistent \panighm, and (3) \textbf{\Ligra}: execution on Ligra\cite{Shun+:ligra:ppopp:2013}.

\begin{figure}[!htb]
    \flushleft
    \includegraphics[scale=0.82]{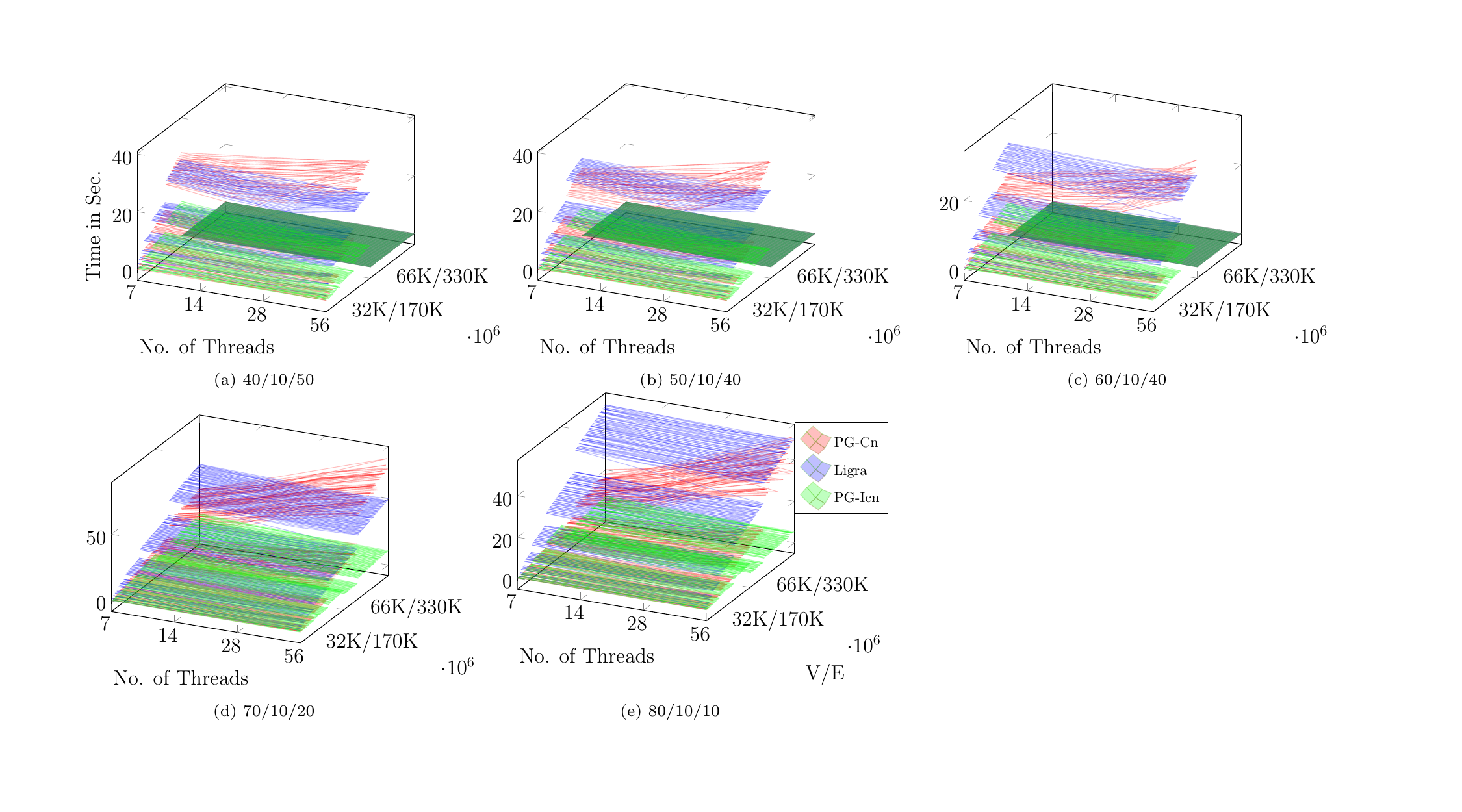}
    \vspace{-3mm}
    \caption{ Latency of the executions containing \getgphalgo: \getbfs. The surfaces indicate the total time for an end-to-end run of $10^4$ operations uniformly distributed by a distribution mentioned below the plots. x-axis refers to the number of threads, whereas, y-axis refers to the ratio of number of vertices and edges for a graph instance.}
    \label{fig:bfsgraph}
\end{figure}

\begin{figure}[!htb]
    \flushleft
    \includegraphics[scale=0.82]{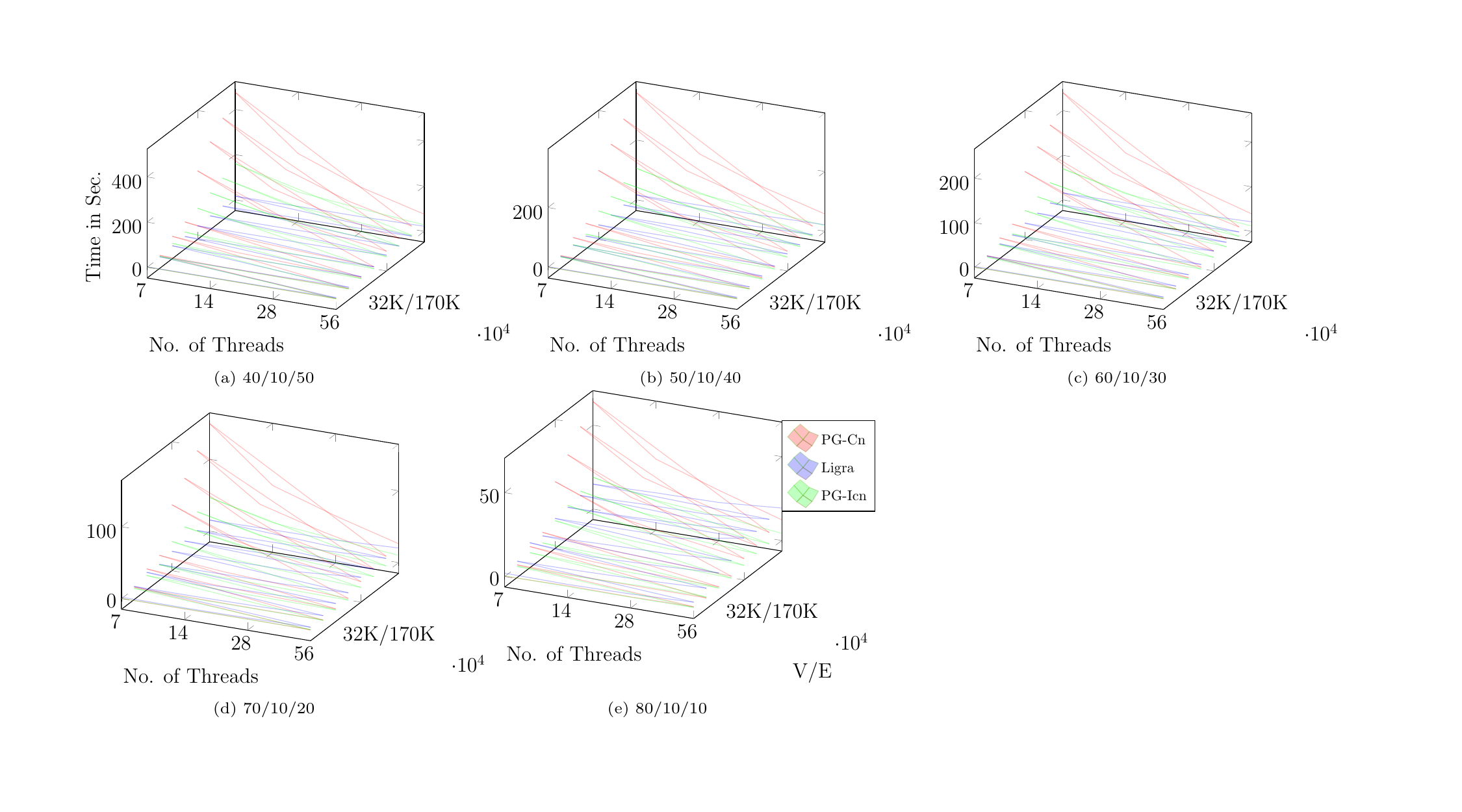}
    \vspace{-3mm}
    \caption{   Latency of the executions containing \getgphalgo: SSSP.}
\label{fig:ssspgraph}
\end{figure}
\begin{figure}[!htb]
    \flushleft
    \includegraphics[scale=0.82]{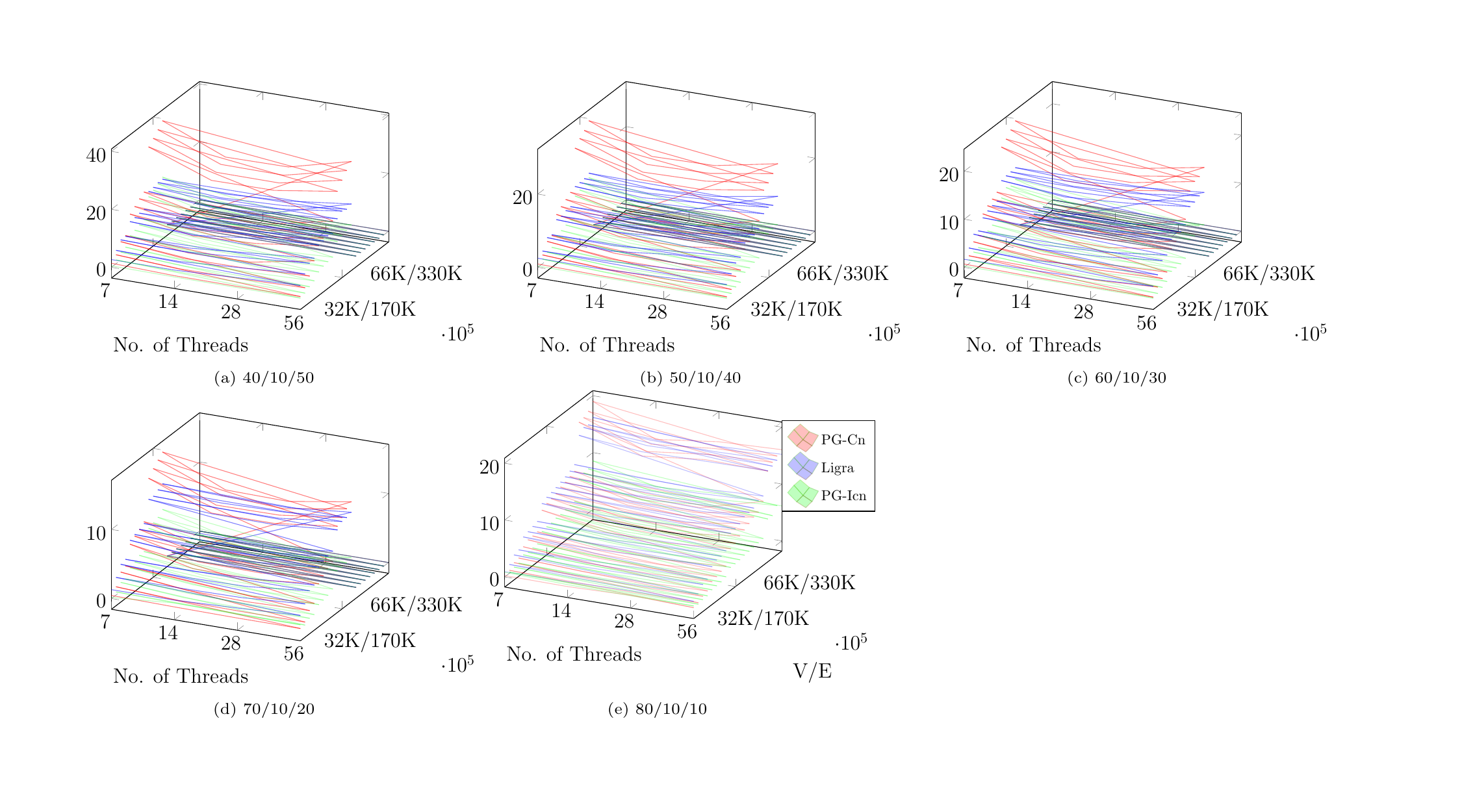}
    \vspace{-3mm}
\caption{  Latency of the executions containing \getgphalgo: \getbc.}
\label{fig:bcgraph}
\end{figure}

\textbf{R-MAT Graph Generation:}
R-MAT graph was described by Chakrabarti, et al. \cite{Chakrabarti+:RMAT:SIAM:2004}. It is a recursive matrix (R-MAT) model that can quickly generate realistic graphs using very few input parameters. It utilizes power-law/DGX \cite{Zhiqiang+:DGX:KDD:2001} degree distributions, which match the characteristics of real-world graphs (degree exponents, diameters, etc.). The R-MAT model generates the graphs in $O(E log(E) log(N))$ time (Explanations can be seen in \cite{Chakrabarti+:RMAT:SIAM:2004}). R-MAT model uses adjacency matrix $A$ of a graph of $N$ vertices and an $N * N$ matrix, with entry $A(i, j) = 1$ if the edge $(i, j)$ exists, and $0$ otherwise. It recursively subdivides the adjacency matrix into four equal-sized partitions and distributes edges within these partitions with unequal probabilities $a$, $b$, $c$, and $d$ following a certain distribution. For our case, the default values are $a=0.5, b=0.1, c=0.1$ and $d=0.3$, such that $a + b + c + d = 1$, and these values can be changed based on the required application. 

To generate a graph, the input parameters required at the running time are the number of vertices and the output file name. The default number of edges is set to $10$ times the number of vertices, and that is changed depending on the required application by setting the flag followed by the number of edges. Table \ref{tab:rmat-graphs} shows the different graphs used in the paper for running the BFS, SSSP, and BC algorithms. 

To generate a weighted graph, we added random integer weights in the range [$1$, $\ldots$, $log_2(N)$] to an unweighted graph in adjacency graph format. 

\begin{table}[h!]
	\centering
	\begin{tabular}{ |p{2cm}|p{7cm}|  }
		\hline
		$Vertices$ & $Edges$\\
		\hline
		\hline
		1024 & 10000 \\
		2048 & 20000 \\
		4096 & 30000, 40000 \\
		8192 & 50000 , 60000, 70000, 80000\\
		16384 & 90000 , 100000, $\ldots$, 160000\\
		32768 & 170000, 180000, $\ldots$, 320000\\
		65536 & 330000, 340000, $\ldots$, 650000 \\
		131072 & 660000, 670000, $\ldots$, 1000000 \\ [1ex] 
		\hline
	\end{tabular}
	\caption{Initial RMAT graphs used to load before start running BFS, SSSP, and BC algorithms and then perform $10^4$ operations with different workload distributions.}
	\label{tab:rmat-graphs}
\end{table}


\begin{figure}[!htb]
    \flushleft
    \includegraphics[scale=0.82]{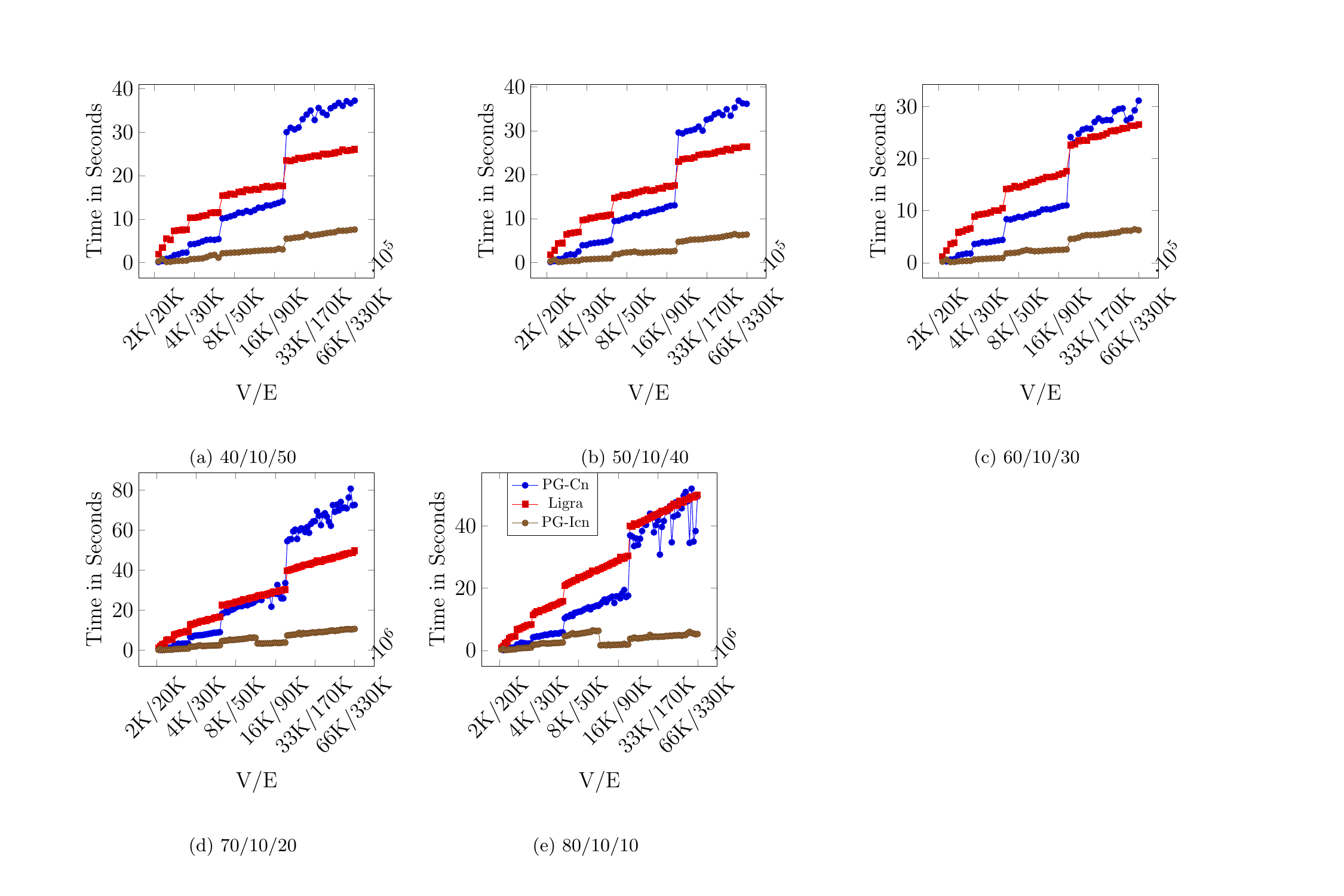}
    \vspace{-3mm}
     \caption{\small  End-to-end time for an execution of $10^4$ operations containing \getgphalgo: \getbfs with 56 threads. The x-axis refers to the ratio of number of vertices and number of edges for a graph instance. The distribution of operations are mentioned below each figure.}
    \label{fig:bfs-56t-2d}
\end{figure}

\begin{figure}[!htb]
    \flushleft
    \includegraphics[scale=0.82]{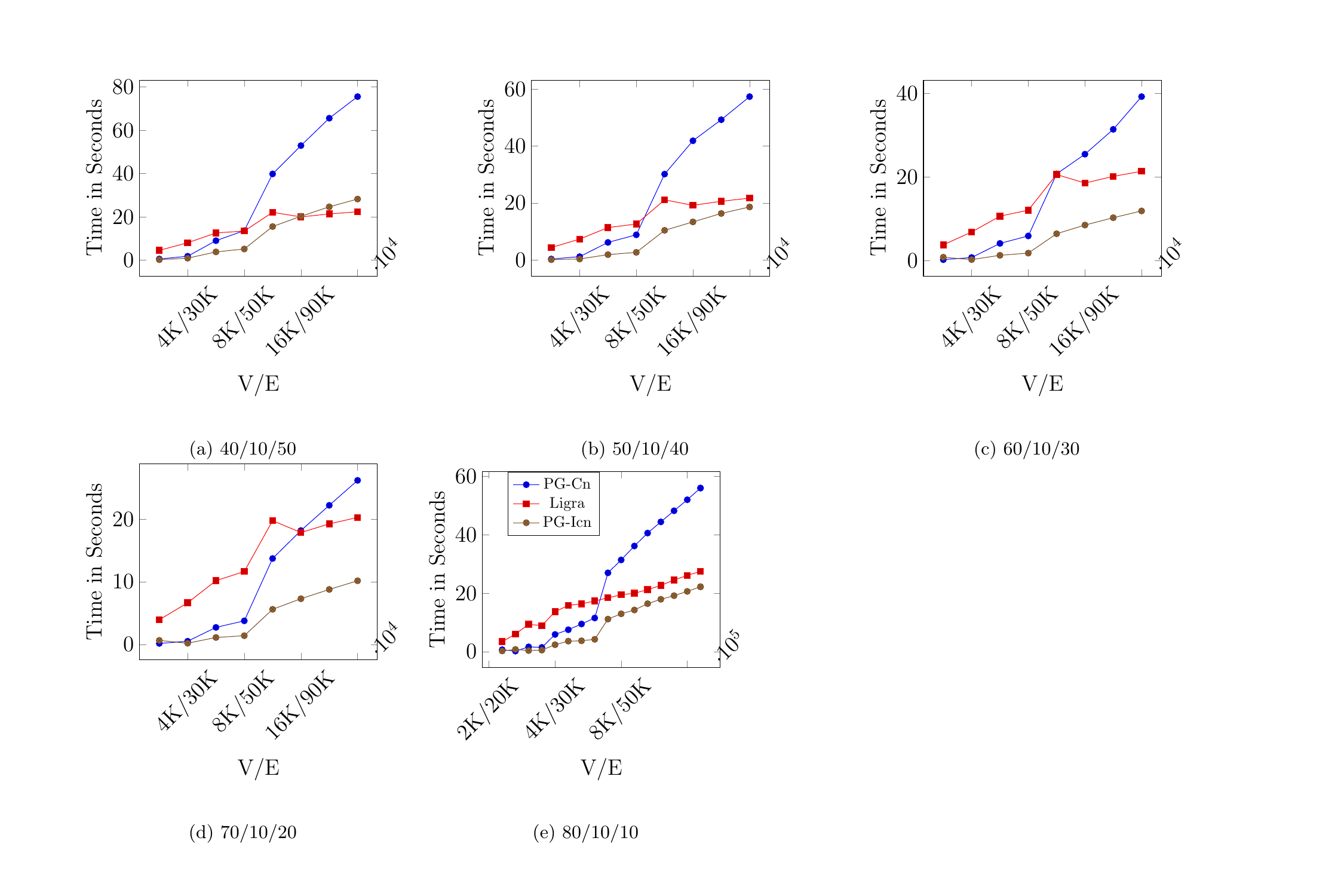}
    \vspace{-3mm}
     \caption{   Execution time for $10^4$ operations containing \getgphalgo: \getsp with 56 threads.}
    \label{fig:SSSP-56t-2d}
\end{figure}

\begin{figure}[!htb]
    \flushleft
    \includegraphics[scale=0.82]{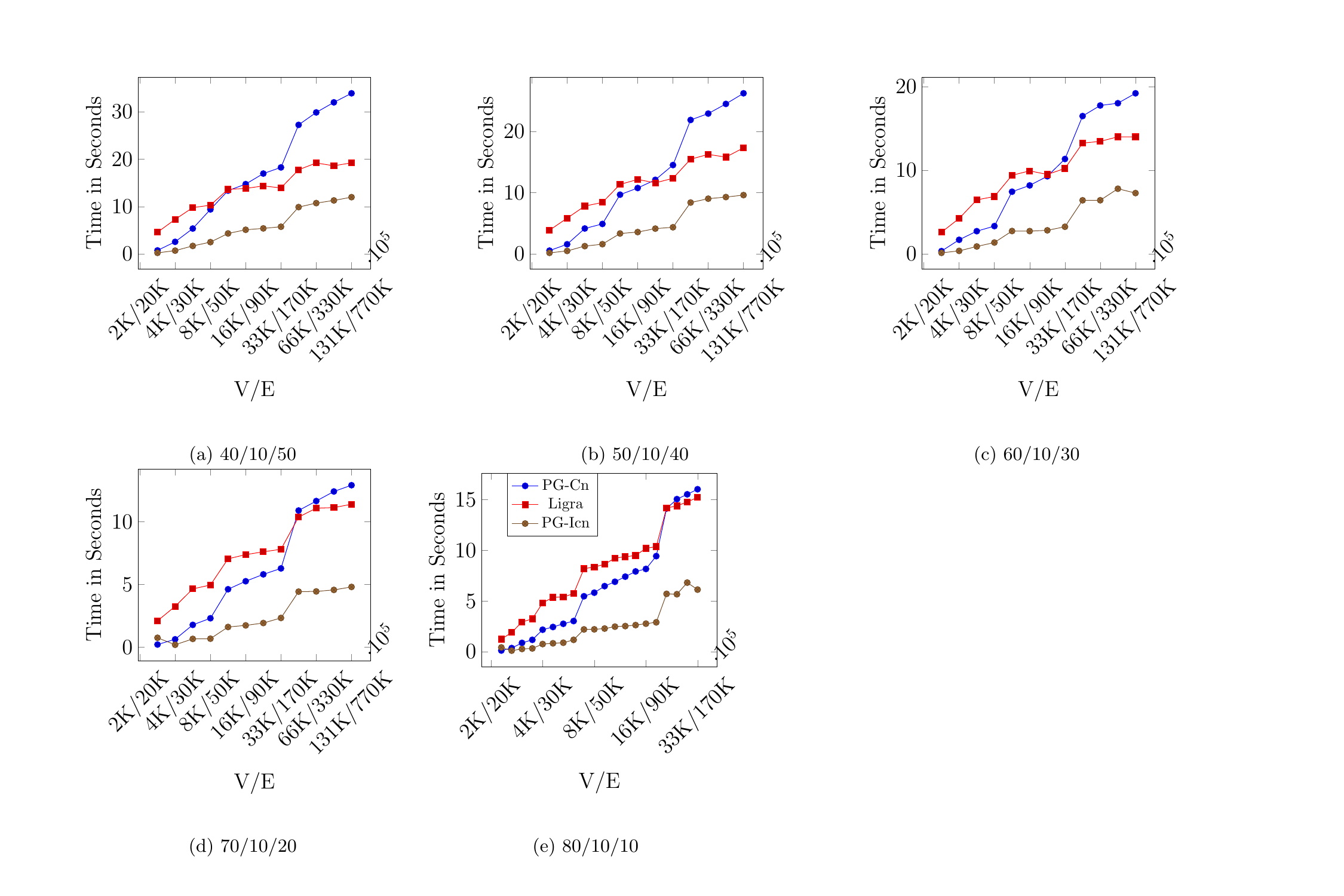}
    \vspace{-3mm}
     \caption{ Execution time for $10^4$ operations containing \getgphalgo: \getbc with 56 threads.}
    \label{fig:BC-56t-2d}
\end{figure}

\begin{figure}[!htb]
    \flushleft
    \includegraphics[scale=0.82]{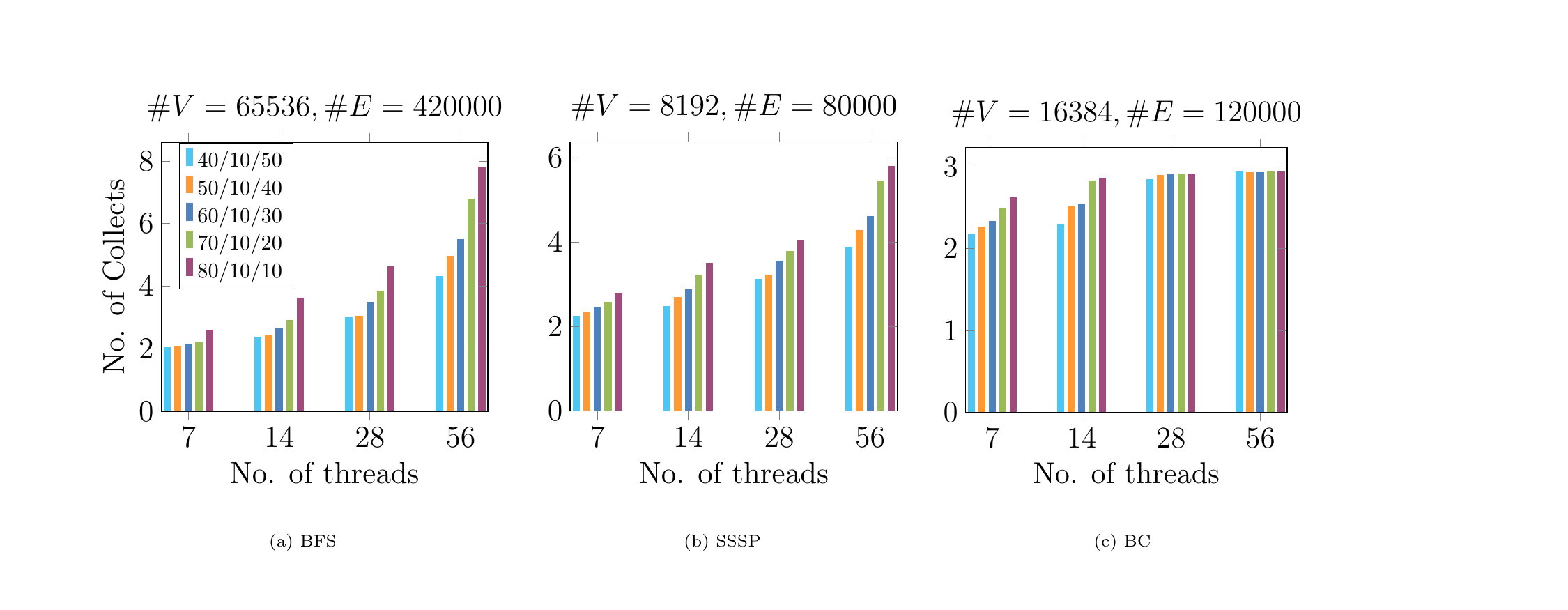}
    \vspace{-3mm}
    \caption{The average number of \collect operations for each \scan for executions of PG-Cn in Figures \ref{fig:bfsgraph}, \ref{fig:ssspgraph}, and \ref{fig:bcgraph} are plotted in Subfigs. (a), (b), and (c), respectively.}
    \label{fig:Scan-Updates}
\end{figure}

\begin{figure}[!htb]
    \flushleft
    \includegraphics[scale=0.82]{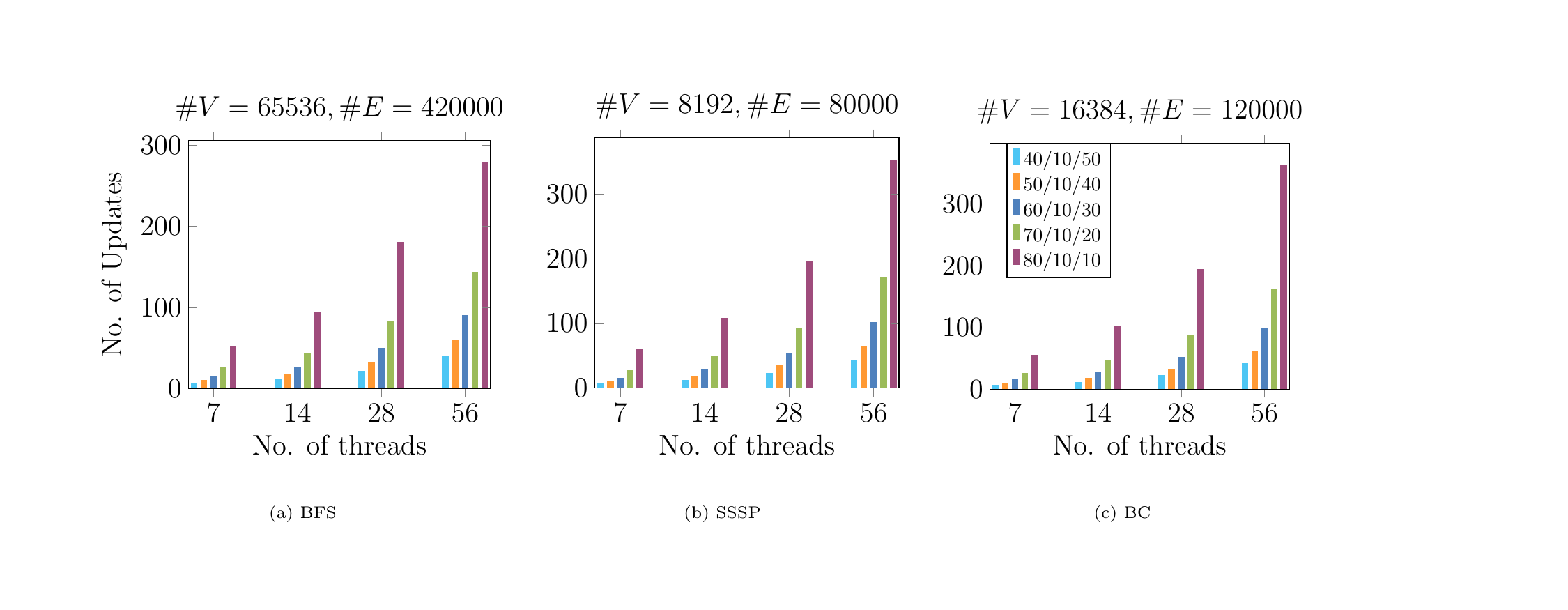}
    \vspace{-3mm}
     \caption{Subfigs. (a), (b), and (c) show the average number of interrupting update operations during the lifetime of the respective queries.}
    \label{fig:Scan-Updates2}
\end{figure}

\textbf{Experimental Setup:}
We conducted our experiments on a system with Intel(R) Xeon(R) E5-2690 v4 CPU packing 56 cores running at 2.60GHz. There are 2 logical threads for each core and each having private cache memory L1-64K and L2-256K. The L3-35840K cache is shared across the cores. The system has  32GB of RAM and 1TB of hard disk. It runs on a 64-bit Linux operating system. 
All the implementations \ignore{\footnote{The source code is available on https://github.com/sngraha/PANIGRAHAM.}} are written in C++ without garbage collection. We used Posix threads for  multi-threaded implementation. 

\textbf{Running Strategy:}
The experiments start with a R-MAT graph instance populating the data structure. At the execution initialization, we spawn a fixed set of threads (7, 14, 28 and 56) and each thread randomly performs a set of operations chosen by a certain random workload distribution. The metric for evaluation is the total time taken to complete the set of operations, after a fixed warm-up: $5\%$ of the total number of operations. Each experiment runs for $5$ iterations and then we take the median of all iterations.

\textbf{Workload Distribution:}
 To evaluate the performance over a number of micro-benchmarks, we used a range of distributions over an ordered (family of) set of operations: $\{Update{:=}\{\addv,  \remv, \adde, \reme\},Search{:=}\{\conv,\linebreak \cone\},\getgphalgo)\}$. In each case, first we load a RMAT graph instance, perform warm-up operations, followed by an end-to-end run of $10^4$ operations in total, assigned in a uniform random order to the concurrent threads. In the plots a label, say, $40/10/50$ refers to a distribution $\{\{10\%,10\%,10\%,10\%\},\{5\%,5\%\},50\%\}$ (i.e. $40\%$ first four operations distributed equally, $10\%$ next two distributed equally, and $50\%$ \getgphalgo queries) of the aforementioned operations in their order. 

 \subsection*{Observations and Discussion}
 Firstly, the surface plots of latency, in Figures \ref{fig:bfsgraph}, \ref{fig:ssspgraph}, and \ref{fig:bcgraph}, show that for any combination of a graph size and parallelism, as in the number of threads, the inconsistent \nbk implementation outperforms higly parallel Ligra by an order of magnitude. It depicts the advantage of concurrency in the dynamic settings. The surfaces also show that for \getbfs, where the \nbk methods need to maintain only a small amount of information in vertices, the linearizable \nbk concurrent executions  outperform Ligra up to a reasonable graph size. In case of \getsp and \getbc, where a lot more info is needed for linearizable snapshots, Ligra tends to work better than consistent \nbk implementation, still under-performing the one that relaxes consistency.
 
 Secondly, for a fixed number of threads, in this instance 56, as shown in  Figures \ref{fig:bfs-56t-2d}, \ref{fig:SSSP-56t-2d}, and \ref{fig:BC-56t-2d}, for the smaller graph sizes, \nbk implementations handsomely outperform Ligra, however, as graph size increases, Ligra starts getting advantage of the parallel implementation. Note that, the \nbk implementations do not have even an inline parallelization.  
 
 To explore the effect of concurrent dynamic updates on linearizable query performance of PG-Cn, we plot the average number of snapshot collection method calls in \figref{Scan-Updates} (a), (b), and (c), and the average number of interrupting updates during the lifetime of a query in  \figref{Scan-Updates2} (a), (b), and (c).
 We observe that the average number of snapshot collection grows in a linear ratio to the number of threads, which is on the expected lines. However, comparing the average number of snapshot collections and the corresponding average number of interrupting updates for the respective queries, we observe that the former does not grow in the same ratio as the latter when the ratio of updates in a distribution increases. This shows that a smaller number of queries in a distribution results in a lesser amount of interaction between a query and the concurrent updates. This observation infers that consistent \nbk implementation of dynamic graph queries is a much better option than sequential highly parallel implementation, such as Ligra, where by design a query has to be stopped for an update.

The experimental observations infer that a design which can take advantage of both concurrency and parallelization will substantially benefit the dynamic graph queries. We plan to work on this in future.
 
 

\bibliographystyle{plain}
\bibliography{biblio}

 \newpage
 \appendix
 \begin{center}
\Large\textbf{Appendix}
\end{center}
\label{sec:appendix}



\section{The \Nbk Graph Algorithm} \label{sec:apdx-operations}
\begin{figure*}[!thp]
\begin{multicols}{2}
\begin{algorithmic}[1]
\renewcommand{\algorithmicprocedure}{\textbf{Operation}}	\scriptsize
	\Procedure{ $\addv(\vkey$)}{}\label{addvstart}
	\State{return $\hashadd(\vkey);$}
	\EndProcedure\label{addvend}
	\algstore{addv}
\end{algorithmic}	
\hrule
    \begin{algorithmic}[1]
    \renewcommand{\algorithmicprocedure}{\textbf{Operation}}	
    \algrestore{addv}
    \scriptsize
	\Procedure{ $\remv$($\vkey$)}{}\label{remvstart}
	\State{return $\hashrem(\vkey);$}
	\EndProcedure\label{remvend}
	\algstore{remv}
\end{algorithmic}
\hrule
\begin{algorithmic}[1]
\renewcommand{\algorithmicprocedure}{\textbf{Operation}}		\algrestore{remv}
    \scriptsize
	\Procedure{ $\conv$($\vkey$)}{}\label{convstart}
    \State{$\langle st, v \rangle$ $\gets$ $\hashcon(\vkey);$}
	\If{($st$ = \tru)}
	\State return {$\langle \tru, v \rangle$;}
	\Else
	\State return {$\langle \fal, \nul \rangle$;}
	\EndIf
	\EndProcedure\label{convend}
	\algstore{conv}
\end{algorithmic}	
\hrule
\begin{algorithmic}[1]
\renewcommand{\algorithmicprocedure}{\textbf{Operation}}		\algrestore{conv}
    \scriptsize
	\Procedure{ $\cone$($\vkey_1, \vkey_2$)}{}\label{conestart}
    	\State {$\langle u,v,st \rangle$ $\gets$ $\convplus(\vkey_1,\vkey_2);$} \label{lin:convplus-uv}
		 \If{($st$ = \fal)}
		\State { return \eop{\fal}{$\infty$};}
        \EndIf
        \State{$\langle$$st$, $e \rangle$ $\gets$ $\bstcon$($\vkey_2$, $v.\enext$);}
	    \If{($st$ $=$ \found $\wedge$ $\neg$ $\hashcon(\vkey_1$) $\wedge$ $\neg$ $\hashcon(\vkey_2$))} \label{lin:cone-check}
	    \State {$z$ $\gets$ e.\eweight;}
        \State {return \eop{\tru}{$z$};}
        \Else \hspace{2mm} {return \eop{\fal}{$\infty$};}
        \EndIf
	    \EndProcedure\label{coneend}
		\algstore{cone}
\end{algorithmic}
\hrule
\begin{algorithmic}[1]
\renewcommand{\algorithmicprocedure}{\textbf{Method}}	\scriptsize
\algrestore{cone}
	\Procedure{ $\convplus(\vkey_1, \vkey_2$)}{}\label{convplusstart}
	\State{$\langle$$st1$,$u$$\rangle$ $\gets$ $\hashcon$($\vkey_1$);}\label{lin:convplus-conv1} \linecomment{//Modified \conv, returns status along with ref}
	\State{$\langle$$st2$,$v$$\rangle$$\gets$ $\hashcon$($\vkey_2$);}\label{lin:convplus-conv2}
	\If{($st1$ = \tru $\wedge$ $st2$ = \tru)}
    \State{return $\langle u, v, \tru \rangle $;}
    \Else \hspace{2mm}{return $\langle u, u, \fal \rangle $;}
    \EndIf
	\EndProcedure\label{convplusend}
	\algstore{convplus}
\end{algorithmic}
\hrule
   \begin{algorithmic}[1]
\renewcommand{\algorithmicprocedure}{\textbf{Operation}}	\scriptsize
\algrestore{convplus}
		\Procedure{ $\adde$($\vkey_1, \vkey_2| w$)}{}\label{addestart}
		\State {$\langle$ $u$, $v$, st $\rangle$ $\gets$ $\convplus$($\vkey_1, \vkey_2);$}  \label{lin:adde-convplus}
		 \If{($st$ = \fal)} { return \eop{\fal}{$\infty$};} \label{lin:adde:validation1}
        
        \EndIf
        \While{(\tru)}
	 \If{( $\isMarked(u$) $\vee$  $\isMarked(v$))}	\label{lin:adde-check-uv}
        \State { return \eop{\fal}{$\infty$};}
        \EndIf
          \State{$st$ $\gets$ $\find$($\vkey_2$, $pe$, $peOp$, $ce$ , $ceOp$, $v.\enext$);}\label{lin:wtadde-loceplus}
	    
	    \If{($\getflag$($pe$) $=$ \marked )} 
	    \State{\cntu;}
	    \EndIf
		\If{($st$ = \found )}
		\If{($ce.\eweight$ $=$ $w$)}  { return \eop{\fal}{$w$};} \label{lin:adde-found}
		\Else
		\State{$z$ $\gets$ $ce$.\eweight;} \label{lin:adde-update}
		\State{$\CAS$($ce$.\eweight,~$z$,~$w$);} \label{lin:adde-updtw}
		\State{$u$.\ecount.\fadd(1);}\label{lin:faa-updt}
		\State {return \eop{\tru}{$z$};}
		\EndIf
		\Else
		\State{$ne$ $\gets$ \createe($\vkey_2$, $w$);}
		\State{$ne.\pointv$ $\gets$ $v$;}
		\State{Boolean \isleft $\gets$ ($st$ = \notfoundl);}
        \State{\enode}  {$old$ $\gets$ \isleft $?$ $ce.\lft$ $:$ $ce.\rght$;}
        \State{$casOp$ $\gets$ new $\childcasop(\isleft,old,ne)$;}
        \If{($\CAS$($ce.\oper$,$ceOp$,$\flag$($casOp$,\childcas)))}\label{lin:adde-casop}
        \State{u.\ecount.\fadd(1);}\label{lin:faa-ade}
        \State{$\helpchildcas(casOp, ce)$;}
        \State { return \eop{\tru}{$\infty$};}
        \EndIf
		\EndIf
		\EndWhile
        \EndProcedure\label{addeend}
		\algstore{adde}
\end{algorithmic}  
\hrule


	\begin{algorithmic}[1]
\renewcommand{\algorithmicprocedure}{\textbf{Operation}}		\algrestore{adde}
\scriptsize
		\Procedure{ $\reme$($\vkey_1$, $\vkey_2$)}{}\label{remestart}
    		\State {$\langle u, v, st \rangle$ $\gets$ $\convplus$($\vkey_1, \vkey_2$);}  \label{lin:reme-validate-u}
		 \If {($st$ = \fal)} 
	    \State { return \eop{\fal}{$\infty$};}
       \EndIf
   	\While{(\tru)}
       	\If{($\isMarked(u$) $\vee$ $\isMarked(v$))}\label{lin:reme-check-uv}
        \State { return \eop{\fal}{$\infty$};}
        \EndIf
        
    \State{$st1$ $\gets$ $\find$($\vkey_2$, $pe$, $peOp$, $ce$ , $ceOp$, $v.\enext$);}
    \If{($st1$ $\neq$ \found)}
     \State { return \eop{\fal}{$\infty$};}
    \EndIf
    \If{($\isnull$($curr.\rght$) $\vee$ $\isnull$( $ce$.\lft)))}
    \If{($\CAS$($ce.\oper$, $ceOp$, $\flag(ceOp, \marked)$))} \label{lin:reme-1stpath-start}
    \State{u.\ecount.\fadd(1);} \label{lin:faa-reme1}
    \State{$\helpmarked(pe, peOp, ce);$}
    \State {$z$ $\gets$ ce.\eweight;}
   \State{\brk;} \label{lin:reme-1stpath-end}
    \EndIf
    \Else
   \State{$st2$ $\gets$ $\find$($\vkey_2$, $pe$, $peOp$, $ce$ , $ceOp$, $v.\enext$);}\label{lin:reme-2ndpath-start}
     \If{(($st2$ $=$ \abort) $\vee$ ($ce.\oper$ $\neq$ $ceOp$)}
     \State{\cntu;}
     \EndIf
     \State{$relocOp$ $\gets$ new $\relocateop$($ce$, $ceOp$, $\vkey_2$, $replace.\vkey_2)$;} \label{lin:reme-create-replaceop}
     \If{($\CAS$($replace.\oper$, $replaceOp$, $\flag$ ($relocOp$, \relocate))} \label{lin:reme-cas-replaceop}
     \State{u.\ecount.\fadd(1);} \label{lin:faa-reme2}
     \If{($\helprelocate$($relocOp$,$pe$,$peOp$,$replace$))}
    \State {$z$ $\gets$ ce.\eweight;}
     \State{\brk;} \label{lin:reme-2ndpath-end}
     \EndIf
     \EndIf
    \EndIf
    \EndWhile
    
		\State { return \eop{\tru}{$z$};}
	    \EndProcedure\label{remeend}
        \algstore{reme}
\end{algorithmic}
\end{multicols}
	\caption{Pseudocodes of \addv, \remv, \conv, \adde, \reme, \cone and \convplus}\label{fig:nbk-graph-methods}
\end{figure*}

In this section, we present a detailed implementation of our \nbk directed 
graph algorithm. The \nbk graph composes on the basic structures of the
dynamic \nbk hash table \cite{Liu+:LFHash:PODC:2014} and \nbk internal 
binary search tree \cite{Howley+:NbkBST:SPAA:2012}. For a self-contained 
reading, we present the 
algorithms of \nbk hash table and BST. Because it derives and 
builds on the earlier works \cite{Liu+:LFHash:PODC:2014} and 
\cite{Howley+:NbkBST:SPAA:2012}, many keywords in our presentation are identical to 
theirs. One key difference between our \nbk BST design from 
\cite{Howley+:NbkBST:SPAA:2012} is that we maintain a mutable edge-weight in 
each BST node, thereby not only the implementation requires extra steps but 
also we need to discuss extra cases in order to argue the correctness of our 
design. Furthermore, we also perform non-recursive traversals in the BST for 
snapshot collections, which were already discussed as part of the graph 
queries. The pseudo-codes pertaining to the \nbk hash-table are presented in 
\figref{nbk-hash-methods}, whereas those for the \nbk BST are presented in 
Figures \ref{fig:nbk-bst-method1} and \ref{fig:nbk-bst-method2}.

\subsection{Structures}
The declarations of the object structures that we use to build the data structure are listed in 
\figref{nbk-hash-methods} and \ref{fig:nbk-bst-method1}. The structures \fset, 
\fsetop, and \hnode are used to build the \textit{\vlist}, whereas \Node, 
\relocateop, and \childcasop are the component-objects of  the \textit{\elist}. The 
structure \fset, a freezable set of \vnodes that serves as a building block of 
the \nbk hash table. An \fset object builds a \vnode set with \addv, \remv and 
\conv operations, and in addition, provides a \freeze method that makes the 
object immutable. The changes of an \fset object can be either addition or 
removal of a \vnode. For simplicity, we encode \addv and \remv operation as 
\fsetop objects. The \fsetop has a state \optype (\addv or \remv), the \key 
value, \done a boolean field that shows the operation was applied or not, and 
\resp a boolean field that holds the return value.

The \vlist is a dynamically resizable \nbk hash table constructed with the 
instances of 
\vnodes, and it is a linked-list of \hnodes(Hash Table Node). The \hnode 
composed of an array of buckets of \fset objects, the \Hsize field stores the 
array length and the predecessor \hnode is pointed to by the \hpred pointer. 
The head of the \hnode is pointed to by a shared \head pointer.

For clarity, we assume that a \resize method grows (doubles) or shrinks 
(halves) the \Hsize of the \hnode which amount to modifying the length of the 
bucket array. The hash function uses modular arithmetic for indexing in the 
hash table, e.g. \idx = \key $\Mod$ \Hsize.   

Based on the boolean parameter taken by \resize method, it decides the hash table either to grow or shrink. The \initbucket method ensures all \vnodes are physically present in the buckets. It relocates the \hnodes to the hash table which are in the predecessor's list. 

The $i^{th}$ bucket of a given \hnode $h$ is initialized by \initbucket method, 
by splitting or merging the buckets of $h'$s predecessor \hnode $s$, if $s$ 
exists. The sizes of $h$ and $s$ are compared and then this method decides 
whether $h$ is shrinking or expanding with reference to $s$. Then it freezes 
the respective bucket(s) of $s$ before copying the \vnodes. If $h$ halves the 
size of $s$, then $i^{th}$ and $(i + h.\Hsize)^{th}$ buckets of $s$ are merged 
together to form the $i^{th}$ bucket of $h$. Otherwise, $h$ doubles the size of 
$s$, then approximately half of the \vnodes in the ($i$ \Mod $h.\Hsize)^{th}$ 
bucket of $s$  	relocate to the $i^{th}$ bucket of $h$. To avoid any races 
with the other helping threads while splitting or merging of buckets a \CAS is 
used (Line \ref{lin:resize-cas}).

The \enode structure is similar to that of a \lf 
BST~\cite{Howley+:NbkBST:SPAA:2012} with an additional edge weight \eweight and 
a pointer field \pointv which points to the corresponding \vnode. This helps 
direct access to its \vnode while doing a BFS traversal and also helps in 
deletion of the incoming edges. The operation \oper field stores if any changes 
are 
being made, which affects the \enode. To avoid the overhead of another field in 
the node structure, we use bit-manipulation: last significant bits of a 
pointer $p$, which are unused because of the memory-alignment of the 
shared-memory system, are used to store information about the state of the 
pointer shared by concurrent threads and executing an operation that would 
potentially update the pointee of the pointer. More specifically, in case of an 
x86\text{-}64 bit architecture, memory has a 64-bit boundary and the last three 
least significant bits are unused. So, we use the last two significant bits, 
which are enough for our purpose, of the pointer to store auxiliary data. We 
define four different methods to change an \enode pointer: 
$\isnull(p)$ returns \tru if the last two significant bits of $p$ make 
$00$, which indicates no ongoing operation, otherwise, it returns \fal; 
$\isMarked(p)$ returns \tru if the last two significant bits of $p$ are set to 
$01$, else it returns \fal, which indicates the node is no longer in the tree 
and it should be \textit{physically} deleted; $\ischildcas(p)$ returns \tru if 
last two bits of $p$ are 
set to $10$, which indicates one of the child node is being modified, else it 
returns \fal;  $\isrelocate(p)$ returns \tru if the last two bits of  
$p$ make $11$, which indicates that the \enode is undergoing a node relocation 
operation. 

A \childcasop object holds sufficient information for another thread to finish an operation that made changes to one of the child -- right or left -- pointers of a node. A node's \oper field holds a flag indicating an active \childcasop operation. Similarly, a \relocateop object holds sufficient information for another thread to finish an operation that removes the key of a node with both the children and replaces it with the next largest key. To replace the next largest key, we need the pointer to the node whose key is to be removed, the data stored in the node's \oper field, the key to replacement and the key being removed. As we did in case of a \childcasop, the \oper field of a node holds a flag with a \relocate state indicating an active \relocateop operation.

\subsection{The Vertex Operations}
\label{subsec:nbk-vops}
The working of the \nbk vertex operations $\addv$, $\remv$, and $\conv$ are 
presented in \figref{nbk-graph-methods}. A $\addv(\vkey)$ operation, at Lines 
\ref{addvstart} to \ref{addvend}, invokes $\hashadd(\vkey)$ to perform an 
insertion of a \vnode $\vkey$ in the hash table. A $\remv(\vkey)$ operation at 
lines \ref{remvstart} to \ref{remvend} invokes $\hashrem(\vkey)$ to perform 
a deletion of \vnode $\vkey$ from the hash table. The method \apply, which tries to modify the corresponding buckets, is called by both $\hashadd$ and 
$\hashrem$, see Line \ref{lin:addv-apply} and \ref{lin:remv-apply}. It first creates a new \fsetop object consisting of the 
modification request, and then constantly tries to apply the request to the respective bucket $b$, see Lines \ref{applystart} to 
\ref{applyend}. Before applying the changes to the bucket it checks 
whether $b$ is \nul; if it is, \initbucket method is invoked to 
initialize the bucket (Line \ref{lin:apply-initbcucket}). At the end, the return 
value is stored in the \resp field.

The algorithm and the resizing hash table are orthogonal to each other, so we 
used heuristic policies to resize the hash table. As a classical heuristic we 
use a \hashadd operation that checks for the size of the hash table with some 
\textit{threshold} value, if it exceeds the threshold the size of the table is 
doubled. Similarly, a \hashrem checks the threshold value, if it falls below 
threshold, it shrinks the hash table size to halves.   

A $\conv(\vkey)$ operation, at Lines \ref{convstart} to \ref{convend}, invokes 
$\hashcon(\vkey)$ to search a \vnode $\vkey$ in the hash table. It starts by 
searching the given key $\vkey$ in the bucket $b$. If $b$ is \nul, it reads 
$t'$s predecessor (\linref{hashcon-conv-pred}) $s$ and then starts searching on 
it. At this point it could return an incorrect result as \hashcon is 
concurrently running with  resizing of $s$. So, a double check at  
\linref{hashcon-conv-doublecheck} is required to test whether $s$ is \nul 
between Lines \ref{lin:hashcon-conv-firstread} and \ref{lin:hashcon-conv-pred}. 
Then, we re-read that bucket of $t$ (\linref{hashcon-conv-secondread1} or 
\ref{lin:hashcon-conv-secondread2}), which must be initialized before $s$ 
becomes \nul, and then we perform the search in that bucket. If $b$ is not 
\nul, then we simply return the presence of the corresponding \vnode in the 
bucket $b$. Note that, at any point in time there are at most two \hnodes: only 
one when no resizing happens and another to support resizing -- halving or 
doubling -- of the hash table.

\subsection{The Edge Operations}
\label{subsec:nbk-eops}
The \nbk graph edge operations -- $\adde$, $\reme$, and $\cone$ -- are presented in \figref{nbk-graph-methods}. Before describing these operations, we detail the implementation of \find method, which is used by them. It is shown in \figref{nbk-bst-method1}. The method \find, at Lines \ref{findstart} to \ref{findend}, tries to locate the position of the key by traversing down the \elist of a \vnode. It returns the position in $pe$ and $ce$, and their corresponding \oper values in $peOp$ and $ceOp$ respectively. The result of the method \find can be one of the four values: (1) \found: if the key is present in the tree, (2) \notfoundl: if the key is not in the tree but might have been placed at the left child of $ce$ if it was added by some other threads, (3) \notfoundr: similar to \notfoundl but for the right child of $ce$, and (4) \abort: if the search in a subtree  is unable to return a usable result. 

 A $\adde(\vkey_1,\vkey_2|w)$ operation, at Lines \ref{addestart} to 
 \ref{addeend}, begins by validating the presence of $\vkey_1$ and $\vkey_2$ in 
 the \vlist. If the validations fails, it returns \eop{\fal}{$\infty$} 
 (\linref{adde:validation1}). Once the validation succeeds, \adde operation 
 invokes \find method in the \elist of the vertex with key $\vkey_1$ to 
 locate the position of the key $\vkey_2$. The position is returned in the 
 variables $pe$ and $ce$, and their corresponding \oper values are stored in 
 the $peOp$ and $ceOp$ respectively. On that, \adde checks whether 
 an \enode with the key $\vkey_2$ is 
 present. If it is present containing the same edge weight value \eweight, it 
 implies that an edge with the exact same weight is already present, therefore 
 \adde returns \eop{\fal}{$\infty$} (\linref{adde-found}). However, if it is 
 present with a different edge weight, say $z$, \adde updates 
 $ce$'s old weight $z$ to the new weight 
 $w$ and returns \eop{\tru}{$z$} (\linref{adde-update}). We update the 
 edge-weight using a \CAS to ensure the correct return in case there were 
 multiple concurrent \adde operations trying to update the same edge. Notice 
 that, here we are \textit{not} 
 freezing the \enode in anyway while updating its weight. The \lbty 
 is still ensured, which we discuss in the next section.

 If the key $\vkey_2$ is not present in the tree, a new 
 \enode and a \childcasop object are created. Then using \CAS the object is 
 inserted logically into $ce'$s \oper field (\linref{adde-casop}). If the \CAS 
 succeeds, it implies that $ce'$s \oper field hadn't been modified since the 
 first read. Which in turn indicates that all other fields of $ce$ were also 
 not changed by any other concurrent thread. Hence, the \CAS on one of the 
 $ce'$s child pointer should not fail. Thereafter, using a call to 
 \helpchildcas method the new \enode $ne$ is physically added 
 to the tree. This can be done by any thread that sees the ongoing operation in 
 $ce'$s \oper field.

A $\reme(\vkey_1,\vkey_2)$ operation, at Lines \ref{remestart} to 
\ref{remeend}, similarly begins by validating the presence of $\vkey_1$ and 
$\vkey_2$ in the \vlist. If the validation fails, it returns 
\eop{\fal}{$\infty$}. Once the validation succeeds, it invokes \find method in 
the \elist of the vertex having key $\vkey_1$ to locate the position of the key 
$\vkey_2$. If the key is not present it returns \eop{\fal}{$\infty$}. If the 
key is present, one of the two paths is followed. The first path at Lines 
\ref{lin:reme-1stpath-start} to \ref{lin:reme-1stpath-end} is followed if the 
node has less than two children. In case the node has 
both its children present a second path at Lines 
\ref{lin:reme-2ndpath-start} to \ref{lin:reme-2ndpath-end} is followed. The 
first path is relatively simpler to handle, as single \CAS 
instruction is used to mark the node from the state \none to \marked at this 
point the node is considered as logically deleted from the tree. After 
a successful \CAS, a \helpmarked method is invoked to perform the physical 
deletion. It uses a \childcasop to replace $pe'$s child pointer to $ce'$s with 
either a pointer to $ce'$s only child pointer, or a \nul pointer if $ce$ is a 
leaf node. 

 The second path is more difficult to handle, as the node has both the 
 children. Firstly, \find method only locates the children but an extra \find 
 (\linref{reme-2ndpath-start}) method is invoked to locate the node with the 
 next largest key. If the \find method returns \abort, which indicates that 
 $ce'$s \oper field was modified after the first search, so the entire \reme 
 operation is restarted. After a successful search, a \relocateop object  
 $replace$ is 
 created (\linref{reme-create-replaceop}) to replace  $ce'$s key $\vkey_2$ with 
 the node returned. This operation added to $replace'$s \oper 
 field safeguards it against a concurrent deletion while the \reme operation is 
 running by virtue of the use of a \CAS (\linref{reme-cas-replaceop}). Then 
 \helprelocate method is invoked to insert \relocateop into the node with 
 $\vkey_2'$s \oper field. This is done using a \CAS 
 (\linref{helprelocate-cas1}), after a successful \CAS the 
 node is considered as logically removed from the tree. Until the result of the 
 operation is known the initial state is set to \ongoing. If any other thread 
 either sees that the operation is completed by way of performing all the 
 required \CAS executions or takes steps to perform those \CAS operations  
 itself, it will set the operation state from \ongoing to \successful 
 (\linref{helprelocate-oning-succes}), using a \CAS. If it has seen other 
 value, it sets the operation state from \ongoing to \failed 
 (\linref{helprelocate-oning-fail}). After the successful state change, a \CAS 
 is used to update the key to new value and a second \CAS is used to delete the 
 ongoing \relocateop from the same node. Then next part of the \helprelocate 
 method performs cleanup on $replace$ by either marking it if the relocation 
 was successful or clearing its \oper field if it has failed. If the operation 
 is successful and $ce$ is marked, \helpmarked method is invoked to 
 excise $ce$ from the tree. At the end \reme returns 
 \eop{\tru}{$ce.\eweight$}
 
 Similar to \adde and \reme, a $\cone(\vkey_1,\vkey_2)$ operation, at Lines 
 \ref{conestart} to \ref{coneend}, begins by validating the presence of 
 $\vkey_1$ and $\vkey_2$ in the \vlist. If the validation fail, it returns 
 \eop{\fal}{$\infty$}. Once the validation succeeds, it invokes \find method in 
 the \elist of the vertex with key $\vkey_1$ to locate the position of the 
 key $\vkey_2$. If it finds $\vkey_2$, it checks if both the vertices are 
 not marked and also the $ceOp$ not marked; on ensuring that it returns 
 \eop{\tru}{$ce.\eweight$}, otherwise, it returns \eop{\fal}{$\infty$}.

\begin{figure*}[!thb]
\captionsetup{font=scriptsize}
\begin{multicols}{2}
\begin{scriptsize}

		\begin{tabbing}
			\hspace{0.1in} \=
			\hspace{0.1in} \= 
			\hspace{0.1in} \= 
			\hspace{0.1in} \=  
			\hspace{0.1in} \= \\
			\> {\bf struct \fsetnode \{} \\
			\> \> {int \set;} // Set of integer \\
		    \> \> {boolean \ok;} // check for the \set is mutable or not\\
		    \> \}\\
		  \> {\bf struct \fset \{} \\
			\> \> {$\fsetnode$ $\node$;} \\
		    \> \}\\
		  
		  	\> {\bf struct \fsetop \{} \\
			\> \> {int \optype;} // operation type (\add or \rem) \\
		    \> \> {int \key;} // key to insert or delete\\
		     \> \> {boolean \resp;} // holds the return value\\
		    \> \}\\  
		    
		    \> {\bf struct \hnode \{} \\
			\> \> {\fset} {\buckets;} // an array(or list) of \fset \\
		    \> \> {int \Hsize;} // array(or list) length\\
		    \> \> {\hnode} {\hpred;} // pointer that points to \\
		   \>\>\>\>\> a predecessor \hnode\\
		    \> \}\\  
		\end{tabbing}
		\vspace{-0.2in}
	\end{scriptsize}
  \hrule
\begin{algorithmic}[1]
\renewcommand{\algorithmicprocedure}{\textbf{Method}}	\scriptsize
\algrestore{reme}
\Procedure{ $\getresponse$($op$)}{}\label{getresponsestart}
\State {return $op.\resp$;}
\EndProcedure\label{getresponseend}
\algstore{getresponse}
\end{algorithmic}
\hrule
    \begin{algorithmic}
\renewcommand{\algorithmicprocedure}{\textbf{Method}}	\scriptsize
\algrestore{getresponse}
		\Procedure{ $\hasmember$($b, k$)}{}\label{hasmemberstart}
        \State{$o$ $\gets$ $b.\node$;} // local copy of b  \label{lin:hasmember-read-bkt}
		\State {return $k \in o.\set$;}
        \EndProcedure\label{hasmemberend}
		\algstore{hasmember}
\end{algorithmic}
\hrule
\begin{algorithmic}
\renewcommand{\algorithmicprocedure}{\textbf{Method}}
    \scriptsize
    \algrestore{hasmember}
	\Procedure{ $\invoke$($b$, $op$)}{}\label{invokestart}
	\State{$o$ $\gets$ $b.\node$;} // local copy of b 
	\While{($o.\ok$)}
	\If{($op.\optype$ $=$ \add)}
	\State{$\resp$ $\gets$ $op.\key$ $\notin$ $o.\set$;}
	\State{$\set$ $\gets$ $o.\set$ $\cup$ $\{ op.\key \};$}
	\Else
	\If{($op.\optype$ $=$ \rem)}
	\State{$\resp$ $\gets$ $op.\key$ $\in$ $o.\set$;}
	\State{$\set$ $\gets$ $o.\set$ $\setminus$ $\{ op.\key \};$}
	\EndIf
	\EndIf
	\State{$n$ $\gets$ new $\fsetnode$(\set, \tru);}
    \If{(\CAS(b.\node,o,n));} 
    \State{$op.\resp$ $\gets$ $\resp$;}
    \State{return \tru;}
    \EndIf
    \State{ $o$ $\gets$ $b.\node$;}
    \EndWhile
    \State{return \fal;}
	\EndProcedure\label{invokeend}
	\algstore{invoke}
\end{algorithmic}	
 \hrule 

\begin{algorithmic}
\renewcommand{\algorithmicprocedure}{\textbf{Method}}	\scriptsize
\algrestore{invoke}
		\Procedure{ $\freeze$($b$)}{}\label{freezestart}
		\State{$o$ $\gets$ $b.\node$;} // local copy of b 
		\While{($o.\ok)$)}
        \State{$n$ $\gets$ new $\fsetnode$($o$.\set, \fal);}
        \If{(\CAS(b.\node,o,n));} \label{lin:freez-false-cas}
        \State{\brk;}
        \EndIf
    \State{ $o$ $\gets$ $b.\node$;}
    \EndWhile
    \State{return $o.\set$}
        \EndProcedure\label{freezeend}
		\algstore{freeze}
\end{algorithmic}
\hrule
    \begin{algorithmic}
    \renewcommand{\algorithmicprocedure}{\textbf{Operation}}	
    \algrestore{freeze}
    \scriptsize
	\Procedure{ $\hashadd$($\key$)}{}\label{addstart}
	\State{$\resp$ $\gets$ $\apply$($\add$, $\key$);} ]\label{lin:addv-apply}
	\If{($heuristic\text{-}policy$)}
    \State{\resize(\tru);}
    \EndIf
    \State return {\resp;}
	\EndProcedure\label{addend}
	\algstore{add}
\end{algorithmic}
\hrule
    \begin{algorithmic}[1]
    \renewcommand{\algorithmicprocedure}{\textbf{Operation}}	
    \algrestore{add}
    \scriptsize
	\Procedure{ $\hashrem$($\key$)}{}\label{remstart}
	\State{$\resp$ $\gets$ $\apply$($\rem$, $\key$);} \label{lin:remv-apply}
	\If{($heuristic\text{-}policy$)}
    \State{\resize(\fal);}
    \EndIf
	\State return {\resp;}
	\EndProcedure\label{remend}
	\algstore{rem}
\end{algorithmic}
\hrule

  \begin{algorithmic}[1]
    \renewcommand{\algorithmicprocedure}{\textbf{Operation}}	
    \algrestore{rem}
    \scriptsize
	\Procedure{ $\hashcon$($\key$)}{}\label{constart}
	\State{$t \gets \head;$}
	\State{$b \gets$ $t.\buckets$[\key \Mod $t$.\Hsize];} \label{lin:hashcon-conv-firstread}
	\If{($b$ $=$ $\nul$)} 
	\State{$s$ $\gets$ $t.\hpred$;} \label{lin:hashcon-conv-pred}
	\If{($s$ $\neq$ $\nul$ )} \label{lin:hashcon-conv-doublecheck}
	\State{$b \gets$ $s.\buckets$(\key \Mod $s$.\Hsize);}\label{lin:hashcon-conv-secondread1}
	\Else
    \State{$b \gets$ $t.\buckets$(\key \Mod $t$.\Hsize);}\label{lin:hashcon-conv-secondread2}
    \EndIf
    \EndIf
    \State{ return {\hasmember($b$,\key)};}
	\EndProcedure\label{conend}
	\algstore{con}
\end{algorithmic}
\hrule
  \begin{algorithmic}[1]
\renewcommand{\algorithmicprocedure}{\textbf{Method}}		\algrestore{con}
    \scriptsize
	\Procedure{ $\resize$($\grow$)}{}\label{growstart}
    	\State{$t \gets \head;$}
    	\If{($t.\Hsize$ $>$ $1$ $\vee$ $\grow$ $=$ \tru)} 
    	\For{($i$ $\gets$ $0$ to $t.\Hsize$-$1$)}
    	\State{$\initbucket$(t,i);}
    	\EndFor
    	\State{$t.\hpred$ $\gets$ $\nul$;}
    	\State{\Hsize $\gets$ \grow $?$ $t.\Hsize$ $\star$ $2$ $:$ $t.\Hsize$$/$$2$;}
    	\State{\buckets $\gets$ new $\fset$[\Hsize];}
    	\State{$t'$ $\gets$ new $\hnode$(\buckets, \Hsize, $t$);}
    	\State{$\CAS$(\head, $t$, $t'$);}\label{lin:resize-cas}
    	\EndIf
	    \EndProcedure\label{growend}
		\algstore{resize}
\end{algorithmic}  
\hrule
	\begin{algorithmic}[1]
\renewcommand{\algorithmicprocedure}{\textbf{Method}}		\algrestore{resize}
    \scriptsize
	\Procedure{ $\apply$($\optype$, $\key$)}{}\label{applystart}
	\State {$op$ $\gets$ new $\fsetop(\optype,\key, \fal, -)$;}
	\While{(\tru)}
	\State{$t \gets \head;$}
	\State{$b \gets$ $t.\buckets$[\key \Mod $t$.\Hsize];}
	\If{($b$ $=$ $\nul$)} 
	\State{$b$ $\gets$ $\initbucket$($t$,$\key$.\Mod $t.\Hsize$);} \label{lin:apply-initbcucket}
	\EndIf
	\If{($\invoke(b,op)$)} \label{lin:apply-invoke}
	\State return {\getresponse($op$);}
	\EndIf
	\EndWhile
	\EndProcedure \label{applyend}
	\algstore{apply}
\end{algorithmic}	 
\hrule	
	\begin{algorithmic}[1]
\renewcommand{\algorithmicprocedure}{\textbf{Method}}		\algrestore{apply}
\scriptsize
		\Procedure{ $\initbucket$($t$, $\key$)}{}\label{initbucketstart}
    	\State{$b \gets t.\buckets[\key];$}
    	\State{$s \gets t.\hpred;$}
    	\If{($b$ $=$ $\nul$ $\wedge$ $s$ $\neq$ $\nul$)} 
        \If{($t.\Hsize$ $=$ $s.\Hsize$)} \State{$m$ $\gets$ $s.\buckets$[$i$ \Mod $s.\Hsize$];}
        \State{\set $\gets$ $\freeze(m)$ $\cap$ \{ $x$ $\mid$ $x$ \Mod $t.\Hsize$ $=$ $i$ \};}
        \Else
         \State{$m$ $\gets$ $s.\buckets$[$i$];}
         \State{$m'$ $\gets$ $s.\buckets$[$i$ $+$ $s.\Hsize$];}
        \State{\set $\gets$ $\freeze(m)$ $\cup$ $\freeze(m')$;}
    	\EndIf
    	\State{$b'$ $\gets$ new $\fset(\set, \tru)$;}
    	\State{$\CAS$($t.\buckets[i]$, $\nul$, $b'$);}
    	\EndIf
    	\State{ return $t.\buckets[i]$;}
	    \EndProcedure\label{initbucketend}
       \algstore{initbucket}
\end{algorithmic}
\end{multicols}
	\caption{Strucure of \fset, \fsetop and \hnode. Pseudocodes of \invoke, \freeze, \add, \rem, \con, \resize, \apply and \initbucket methods based on dynamic sized \nbk hash table\cite{Liu+:LFHash:PODC:2014}.}\label{fig:nbk-hash-methods}
\end{figure*}

\begin{figure*}[!thb]
\captionsetup{font=scriptsize}
\begin{multicols}{2}
\begin{scriptsize}
		\begin{tabbing}
			\hspace{0.1in} \=
			\hspace{0.1in} \= 
			\hspace{0.1in} \= 
			\hspace{0.1in} \=  
			\hspace{0.1in} \= \\
			\> {\bf struct \Node \{} \\
			\> \> {int \key;} //  \\
		    \> \> {\operation} {\oper;} // \\
		    \> \> {\Node} {\lft;} // \\
		    \> \> {\Node} {\rght;}// \\
		    \> \}\\
		  
		  	\> {\bf struct \relocateop \{} \\
			\> \> {int \statee;} // initial \statee=\ongoing  \\
		    \> \> {\Node} {\dest;} // \\
		    \> \> {\operation} {\destop;} // \\
		     \> \> {int \removekey;} // \\
		    \> \> {int \replacekey;} // \\
		    \> \}\\  
		    
		    \> {\bf struct \childcasop \{} \\
			\> \> {boolean \isleft;} //  \\
		    \> \> {\Node} {\expected;} // \\
		    \> \> {\Node} {\updt;} // \\
		    \> \}\\  
		\end{tabbing}
		\vspace{-0.2in}
	\end{scriptsize}

    \begin{algorithmic}[1]
    \renewcommand{\algorithmicprocedure}{\textbf{Operation}}	
    \algrestore{initbucket}
    \scriptsize
	\Procedure{ $\add$($\key$)}{}\label{bstaddstart}
    \State{\Node{} $pred$, $curr$, $newNode$;}
    \State{\operation $predOp$, $currOp$, $casOp$;}
    \State{int $result$;}
    \While{(\tru)}
    \State{$result$ $\gets$ $\find$(\key, $pred$, $predOp$, $curr$, $currOp$, $root$);}
    \If{($result$ = \found) }
    \State{return \fal;}
    \EndIf
    \State{$newNode$ $\gets$ new $\Node(\key)$;}
    \State{Boolean \isleft $\gets$ ($result$ = \notfoundl);}
    \State{\Node}  {$old$ $\gets$ \isleft $?$ $curr.\lft$ $:$ $curr.\rght$;}
    \State{$casOp$ $\gets$ new $\childcasop(\isleft, old, newNode)$;}
    \If{($\CAS$($curr.\oper$, $currOp$, $\flag$($casOp$,\childcas)))}
    \State{$\helpchildcas(casOp, curr)$;}
    \State{return \tru;}
    \EndIf
    \EndWhile
	\EndProcedure\label{bstaddend}
	\algstore{add}
\end{algorithmic}
\hrule
    \begin{algorithmic}[1]
    \renewcommand{\algorithmicprocedure}{\textbf{Operation}}	
    \algrestore{add}
    \scriptsize
	\Procedure{ $\rem$($\key$)}{}\label{bstremstart}
	\State{\Node $pred$, $curr$, $replace$;}
    \State{\operation $predOp$, $currOp$, $replaceOp$, $relocOp$;}
    \While{(\tru)}
    \If{($\find$(\key, $pred$, $predOp$, $curr$, $currOp$, $root$) $\neq$ \found)}
    \State{return \fal;}
    \EndIf
    \If{($\isnull$($curr.\rght$ $\vee$ $\isnull$( $curr$.\lft)))}
    \If{($\CAS$($curr.\oper$, $currOp$, $\flag(currOp, \marked)$))}
    \State{$\helpmarked(pred, predOp, curr);$}
    \State{return \tru;}
    \EndIf
    \Else
     \If{(($\find$(\key, $pred$, $predOp$, $replace$, $replaceOp$, $curr$) $=$ \abort) $\vee$ ($curr.\oper$ $\neq$ $currOp$)}
     \State{\cntu;}
     \EndIf
     \State{$relocOp$ $\gets$ new $\relocateop(curr, currOp, \key, replace.\key)$;}
     \If{($\CAS$($replace.\oper$, $replaceOp$, $\flag(relocOp$, \relocate))}
     \If{($\helprelocate$($relocOp$, $pred$, $predOp$, $replace$))}
     \State{return \tru;}
     \EndIf
     \EndIf
    \EndIf
    \EndWhile
	\EndProcedure\label{bstremend}
	\algstore{rem}
\end{algorithmic}
\hrule
    
  \begin{algorithmic}[1]
    \renewcommand{\algorithmicprocedure}{\textbf{Operation}}	
    \algrestore{rem}
    \scriptsize
	\Procedure{ $\con$($\key$)}{}\label{bstconstart}
	\State{\Node ~$pred$, $curr$;}
    \State{\operation $predOp$, $currOp$;}
    \If{($\find$(\key, $pred$, $predOp$, $curr$, $currOp$, $root$) $=$ \found)}
    \State{return \tru;}
    \Else
     \State{return \fal;}
    \EndIf
	\EndProcedure\label{bstconend}
	\algstore{con}
\end{algorithmic}
\hrule
  \begin{algorithmic}[1]
\renewcommand{\algorithmicprocedure}{\textbf{Method}}		\algrestore{con}
    \scriptsize
	\Procedure{ $\find$(\key, $pred$, $predOp$, $curr$, $currOp$, $root$)}{}\label{findstart}
   \State{int $result$, $currKey$};
\State{\Node} {$next$, $lastRight$};
\State{\operation $lastRightOp$};
\State{$result$ $\gets$ \notfoundr}; \label{lin:find-retry}
\State{$curr$ $\gets$ $root$};
\State{$currOp$ $\gets$ $curr.\oper$;}
\If{($\getflag(currOp)$ $\neq$ \nul) }
\If{($root$ $=$ \Root) }
\State{$\helpchildcas$($\unflag$($currOp$), $curr$)};
\State{goto Line \ref{lin:find-retry};}
\Else 
\State{return \abort};
\EndIf
\EndIf
\State{$next$ $\gets$ $curr.\rght$};
\State{$lastRight$ $\gets$ $curr$;}
\State{$lastRightOp$ $\gets$ $currOp$;}
\While{($\neg$ $\isnull(next)$)}
\State{$pred$ $\gets$ $curr$;}
\State{$predOp$ $\gets$ $currOp$;}
\State{$curr$ $\gets$ $next$;}
\State{$currOp$ $\gets$ $curr.\oper$;}
\If{($\getflag$($currOp$) $\neq$ \nul)}
\State{$\help$($pred$, $predOp$, $curr$, $currOp$);}
\State{goto Line \ref{lin:find-retry};}
\EndIf
\State{$currKey$ $\gets$ $curr.\key$;} \label{lin:find-read-curr}
\If{(\key $<$ $currKey$)}
\State{$result$ $\gets$ \notfoundl;}
\State{$next$ $\gets$ $curr.\lft$;} \label{lin:find-read-curr-left}
\Else 
\If{($\key$ $>$ $currKey$)}
\State{$result$ $\gets$ \notfoundr;}
\State{$next$ $\gets$ $curr.\rght$;} \label{lin:find-read-curr-right}
\State{$lastRight$ $\gets$ $curr$;}
\State{$lastRightOp$ $\gets$ $currOp$;}
\Else 
\State{$result$ $\gets$ \found;}
\State{\brk;}
\EndIf
\EndIf
\EndWhile
\If{(($result$ $\neq$ \found) $\bigwedge$ ($lastRightOp$ $\neq$ $lastRight.\oper$)}
\State{goto Line \ref{lin:find-retry};}
\EndIf
\If{($curr.\oper$ $\neq$ $currOp$) }
\State{goto Line \ref{lin:find-retry};}
\EndIf
\State{ return $result$;}
	    \EndProcedure\label{findend}
		\algstore{find}
\end{algorithmic}  
\end{multicols}	
	\caption{Strucure of \Node, \relocateop and \childcasop. Pseudocodes of \add, \rem, \con and \find methods based on \nbk binary search tree\cite{Howley+:NbkBST:SPAA:2012}\label{fig:nbk-bst-method1}}
\end{figure*}

\begin{figure*}[!thb]
\captionsetup{font=scriptsize}
\begin{multicols}{2}
  \begin{algorithmic}[1]
    \renewcommand{\algorithmicprocedure}{\textbf{Method}}	
    \algrestore{find}
    \scriptsize
	\Procedure{ $\helpchildcas$($\oper$, $dest$)}{}\label{helpchildcasstart}
	\State{\Node $address$ $\gets$ $\oper.\isleft$ $?$ $dest.\lft$ $:$ $dest.\rght$;}
	\State{$\CAS$($address$, $\oper.\expert$, $\oper.\updt$);}
	\State{$\CAS$($dest.\oper$, $\flag(\oper$,\caschild ),$\flag(\oper$,\none))}
	\EndProcedure\label{helpchildcasend}
	\algstore{helpchildcas}
\end{algorithmic}
\hrule
 \begin{algorithmic}[1]
    \renewcommand{\algorithmicprocedure}{\textbf{Method}}	
    \algrestore{helpchildcas}
    \scriptsize
	\Procedure{ $\helprelocate$($\oper$, $pred$, $predOp$, $curr$)}{}\label{helprelocatestart}
	\State{$seenState$ $\gets$ $\oper.\statee$;}
	\If{($seenState$ $=$ \ongoing)}
\State{\operation $seenOp$ $\gets$ $\VCAS$(\oper.$dest.\oper$, $\oper.destOp$, $\flag$(\oper, \relocate));} \label{lin:helprelocate-cas1}
 \If{(($seenOp$ = $\oper.destOp$) $\bigvee$ ($seenOp$ = $\flag$(\oper, \relocate)))}
\State{$\CAS$(\oper.\statee, \ongoing, \successful);} \label{lin:helprelocate-oning-succes}
\State{$seenState$ $\gets$ \successful;}
\Else
\State{$seenState$ $\gets$ $\VCAS$(\oper.\statee, \ongoing, \failed);}\label{lin:helprelocate-oning-fail}
\EndIf
\EndIf
\If{($seenState$ = \successful)}
\State{$\CAS$(\oper.\dest.\key, $removeKey$, $replaceKey$);}
\State{$\CAS$(\oper.\dest.\oper, $\flag$(\oper, \relocate), $\flag$(\oper, \none));}
\EndIf
\State{boolean $result$ $\gets$ ($seenState$ = \successful);}
\If{(\oper.\dest = $curr$)}
\State{return $result$;}
\EndIf
\State{$\CAS$($curr$.\oper, $\flag$(\oper, \relocate), $\flag$(\oper, $result$ $?$ \marked : \none));}
\If{($result$)}
\If{(\oper.\dest = $pred$)}
\State{$predOp$ $\gets$ $\flag$(\oper, \none);}
\EndIf
\State{\helpmarked($pred$, $predOp$, $curr$);}
\EndIf
\State{return $result$;}
	\EndProcedure\label{helprelocateend}
	\algstore{helprelocate}
\end{algorithmic}
\hrule
\begin{algorithmic}[1]
    \renewcommand{\algorithmicprocedure}{\textbf{Method}}	
    \algrestore{helprelocate}
    \scriptsize
	\Procedure{ $\helpmarked$($pred$, $predOp$, $curr$)}{}\label{helpmarkedstart}
	\State{\Node $newRef$;}
    \If{($\isnull$($curr.\lft$))}
    \If{($\isnull$($curr.\rght$))}
    \State{$newRef$ $\gets$ $\setnull$($curr$);}
    \Else
    \State{$newRef$ $\gets$ $curr.\rght$;}
    \EndIf
    \Else
    \State{$newRef$ $\gets$ $curr.\lft$;}
    \EndIf
    \State{\operation $casOp$ $\gets$ new $\childcasop$($curr$ $\gets$ $pred.\lft$,
$curr$, $newRef$);}
\If{($\CAS$($pred.\oper$, $predOp$, $\flag$($casOp$, CHILDCAS)))}
\State{$\helpchildcas$($casOp$, $pred$);}
\EndIf
	\EndProcedure\label{helpmarkedend}
	\algstore{helpmarked}
\end{algorithmic}
\hrule
\begin{algorithmic}[1]
    \renewcommand{\algorithmicprocedure}{\textbf{Method}}	
    \algrestore{helpmarked}
    \scriptsize
	\Procedure{ $\help$($pred$, $predOp$, $curr$, $currOp$)}{}\label{helpstart}
    \If{($\getflag$($currOp$) = \childcas)}
    \State{$\helpchildcas(\unflag(currOp),curr)$;}
    \Else
    \If{($\getflag$($currOp$) = \relocate)}
    \State{$\helprelocate(\unflag(currOp), pred, predOp, curr)$;}
    \Else
    \If{($\getflag$($currOp$) = \marked)}
    \State{$\helpmarked(pred, predOp, curr)$;}
    \EndIf
    \EndIf
    \EndIf
	\EndProcedure\label{helpend}
\end{algorithmic}
\end{multicols}	
	\caption{Pseudocodes of \helpchildcas, \helprelocate, \helpmarked and \help methods based on \nbk binary search tree\cite{Howley+:NbkBST:SPAA:2012}\label{fig:nbk-bst-method2}}
\end{figure*}

\section{Proof of Correctness and Progress Guarantee}\label{sec:proof}
We argue that a linearizable \cite{Herlihy+:lbty:TPLS:1990} implementation 
maintains the data structure invariant. To prove linearizability, we specify 
the atomic events corresponding to the linearization points ($\lp$)  inside the 
execution interval of each of the operations.
\subsection{Linearizability}
Each operation implemented by the \ds are represented by their invocation and 
return steps. We show that it is possible to assign an atomic step as $\lp$ 
inside the execution interval of each operation. The vertex operations have 
their \lp{s} along the similar lines as that discussed in 
\cite{Liu+:LFHash:PODC:2014}. However, the edge operations include updating the 
weights of \enodes in addition to their addition and removal. Accordingly, we 
have more execution cases compared to a set implemented by a \nbk BST as 
implemented in \cite{Howley+:NbkBST:SPAA:2012}. The specification of \lp{s} of 
\getbfs, \getsp, and \getbc are closer to that of \textsc{GetPath} operation 
of \cite{Chatterjee+:NbGraph:ICDCN-19}.
\begin{theorem}\normalfont The ADT operations implemented by the \nbk graph algorithm are linearizable.
\end{theorem}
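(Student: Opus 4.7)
The plan is to establish linearizability by assigning an atomic event as the linearization point (\lp) inside the execution interval of every operation in $\mathcal{A}$, and then showing that the total order induced by the \lp{s} yields a sequential history that respects the ADT specification and preserves the invariants fixed in Section \ref{sec:graph-ds}. I would partition $\mathcal{A}$ into three classes and handle each separately: (i) the vertex operations $\{\addv,\remv,\conv\}$, (ii) the edge operations $\{\adde,\reme,\cone\}$, and (iii) the query operations $\mathcal{Q}=\{\getbfs,\getsp,\getbc\}$.

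For the vertex operations, I would reuse the \lp argument from Liu et al. essentially as a black box, since \addv and \remv are thin wrappers around \hashadd and \hashrem (Lines \ref{addvstart}--\ref{remvend}), and \conv wraps \hashcon. The successful \CAS at \linref{freez-false-cas} inside \invoke (when it commits a new \fsetnode for the target bucket) serves as the \lp of a successful \addv or \remv; an unsuccessful \addv (resp.\ \remv) is linearized at the read of the bucket node that witnesses the key already present (resp.\ absent). For \conv the \lp is the read at Line \ref{lin:hashcon-conv-firstread} or \ref{lin:hashcon-conv-secondread1}/\ref{lin:hashcon-conv-secondread2}, whichever returns the observed set; resizing is transparent since \initbucket only copies alive keys. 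I would remark that this choice is consistent with invariant (d), because a \vnode becomes alive (or dead) exactly at the committing \CAS.

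The edge operations require more care because our \enodes carry mutable weights, which is not handled by Howley et al. The \lp of a successful \adde that \emph{inserts} a new \enode is the \CAS at \linref{adde-casop} that flags a \childcasop; for an \adde that \emph{updates} an existing weight the \lp is the \CAS at \linref{adde-updtw}; and for an \adde that returns \eop{\fal}{$w$} (weight unchanged) the \lp is the read at \linref{adde-found} of the matching weight. The \lp of a successful \reme is the \CAS at \linref{reme-1stpath-start} in the simple-deletion path, or the \CAS at \linref{reme-cas-replaceop} in the relocate path; \cone is linearized at the last of the two reads of the endpoint \vnodes (\linref{cone-check}) in its return branch. A failing edge operation due to a concurrently removed endpoint is linearized at the atomic step where the endpoint becomes not alive (if that happens within the call interval), else at the first read of that endpoint. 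I would then verify by case analysis that, at the assigned \lp, the data-structure state satisfies invariants (a)--(b) and that the observed response matches the sequential specification of $\mathcal{A}$.

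The hard part, and the core of the proof, is the queries in $\mathcal{Q}$. The key claim is: \emph{if two consecutive \treeclt/\sptclt/\bctclt invocations return snapshots that pass \comparetree at Line \ref{lp1} or \ref{lp2}, then there exists an instant inside the interval of the second collect where the observed \abg is consistent with the returned structure.} I would prove this by contradiction: suppose no such instant exists; then during the second collect every relevant \vnode/\enode was either added, removed, or had its \ecount incremented at some interleaved point. The \visitedarray-tagged BFS ensures each reachable alive \vnode is visited exactly once per collect, the \ecount bump at every \putE/\reme on $v$ detects any add-remove cycle at $v$ between the two collects, and the parent and \bnext pointers recorded in each \treenode detect any structural change to the traversed subgraph. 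Hence a successful \comparetree witnesses that the \ecount, \vnode set, and tree-order are identical at the end of both collects, and by the atomicity of \FAA on \ecount and the logical-deletion bit on \vnext/\eleft, the snapshot corresponds to a real reachable configuration; the \lp is placed at the last read of the matching \ecount inside \comparetree (Line \ref{lp1} if inside the loop, else Line \ref{lp2}). For \getsp the same argument applies to the \sptclt plus \checknegcycle result, and for \getbc the final aggregation phase (Lines \ref{lin:getbfs-bcstart}--\ref{lin:getbfs-bcend}) uses only thread-local fields, so correctness reduces to the consistency of the collected BC-tree.

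Finally, I would close the argument by noting that the \lp of every $Q\in\mathcal{Q}$ lies strictly between its invocation and response (the failing-query case linearizes at the atomic read/\CAS that witnesses $\vkey\notin V$), that the \lp{s} of updates in $\mathcal{M}$ are single atomic steps in shared memory, and therefore the induced total order is a legal sequential history of $\mathcal{A}$. The main obstacle I expect is not the placement of the \lp{s} for the update methods—those carry over fairly directly—but verifying the query-consistency lemma above in the presence of edge-weight updates and of the relocate path of \reme, since a relocate modifies the \key field of a live \enode without adding or removing it; this must be accounted for by showing that \ecount is bumped at \linref{faa-reme2} and thus any relocate interleaved with a collect is detected by \comparetree.
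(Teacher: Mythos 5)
Your overall strategy coincides with the paper's: assign per-case \lp{s} to every operation, inherit the vertex-operation argument from the hash-table and the edge-operation argument from the BST (with extra cases for the mutable weight), and linearize each query at a point inside the interval spanned by the two matching collects of the multi-scan. Two concrete placements, however, would fail under adversarial interleavings. First, for a two-children \reme you put the \lp at the \CAS of \linref{reme-cas-replaceop}, which only installs the \relocateop on the \emph{replacement} (successor) node; the key being deleted remains visible in the tree until the second \CAS at \linref{helprelocate-cas1} installs the operation on the destination \enode, and the relocation can still fail and force a retry even after \linref{reme-cas-replaceop} succeeds. A \cone that reads the key between these two {\CAS}s would then have to be ordered after a \reme you have already linearized, producing an illegal sequential history; the paper therefore places this \lp at \linref{helprelocate-cas1}.

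Second, you handle a concurrently removed endpoint only for \emph{failing} edge operations. The symmetric adjustment for \emph{successful} ones is missing: the aliveness checks at \linref{adde-check-uv} and \linref{reme-check-uv} precede the \find and the committing \CAS, so a \remv of an endpoint can take effect in between while the edge \CAS still succeeds physically. Linearizing that \adde at its \CAS would place a successful edge insertion after the removal of one of its endpoints, violating the precondition $\vkey_1\in V\land\vkey_2\in V$; the paper resolves this by moving the \lp of such a successful \adde or \reme to just before the first concurrent remove's \lp. The remainder of your proposal, including the double-collect validation lemma for the queries, matches the paper's argument; your choice of \lp inside \comparetree rather than at the last read of the penultimate collect is an acceptable alternative, since any point in the overlap of the two identical collects serves equally well.
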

\begin{proof}
Based on the return values of the operations we discuss the \lp{s}.
\begin{enumerate}[leftmargin=5.5mm]
	\item $\addv(\vkey)$: We have two cases:
	\begin{enumerate}
		\item \tru: The key $\vkey$ was not present earlier, then the \lp is 
		at  \linref{apply-invoke}, the \invoke method that returns \tru where 
		it sets \oper.\done to \tru.
		\item \fal: The key $\vkey$ was already present, then the \lp is at  
		\linref{apply-invoke}, the \invoke method that returns \tru where it 
		sets \oper.\done to \tru.
	\end{enumerate}
	\item $\remv(\vkey)$: We have two cases:
	\begin{enumerate}
		\item \tru: If the key $\vkey$ was already present, then the \lp is at  
		\linref{apply-invoke}.
		\item \fal:  If the key $\vkey$ was not present earlier, then the \lp 
		is at \linref{apply-invoke}.
	\end{enumerate}
	\item $\conv(\vkey)$:  We have two cases:
	\begin{enumerate}
		\item \tru: The \lp is at  \linref{hashcon-conv-secondread1} or 
		\ref{lin:hashcon-conv-secondread2},  second read of the bucket of $t$ 
		or $s$, respectively.
		\item \fal: The \lp is at \linref{freez-false-cas} because $b$ must 
		have been made immutable by some \freeze method that sets $b.\ok$ to 
		\fal.
	\end{enumerate}
	
	\item $\adde(\vkey_1, \vkey_2|w)$: We have four cases:
	\begin{enumerate}
	    \item \eop{\tru}{$\infty$}: New edge has been added
	    \label{step:eadd}
		\begin{enumerate}
			\item No concurrent $\remv(\vkey_1)$ or $\remv(\vkey_2)$: The \lp 
			is the successful \CAS execution at the \linref{adde-casop}.
			\item With concurrent $\remv(\vkey_1)$ or $\remv(\vkey_2)$: The \lp 
			is just before the first remove's \lp. 
		\end{enumerate}
		
	    \item \eop{\tru}{$z$}: Edge has been updated 
	    \begin{enumerate}
			\item No concurrent $\remv(\vkey_1)$ or $\remv(\vkey_2)$ or 
			$\reme(\vkey_1, \vkey_2)$: The \lp is the atomic update of the 
			edge-weight to \eweight using a successful \CAS, at 
			\linref{adde-updtw}. 
			\item With concurrent $\remv(\vkey_1)$ or $\remv(\vkey_2)$ or 
			$\reme(\vkey_1, \vkey_2)$: The \lp is just before the first 
			remove's \lp. 
		\end{enumerate}	
	    \item \eop{\fal}{$w$}: Edge already present \begin{enumerate}
			\item No concurrent $\remv(\vkey_1)$ or $\remv(\vkey_2)$ or 
			$\reme(\vkey_1, \vkey_2)$: The \lp is the atomic read of the \enode 
			$e(\vkey_2)$ has occurred at \linref{find-read-curr} in the \find 
			method.
			\item With concurrent $\remv(\vkey_1)$ or $\remv(\vkey_2)$ or 
			$\reme(\vkey_1, \vkey_2)$: The \lp is just before the first 
			remove's \lp. 
		\end{enumerate}	
	    \item \eop{\fal}{$\infty$}: Either $\vkey_1$ or $\vkey_2$ or both are not present
		\begin{enumerate}
			\item No concurrent $\remv(\vkey_1)$ or $\remv(\vkey_2)$ or 
			$\reme(\vkey_1, \vkey_2)$: The \lp is the end of the search path 
			reached by reading a \nul pointer at \linref{find-read-curr-left} 
			or \ref{lin:find-read-curr-right} in the \find method.
			\item At the time of invocation of $\adde(\vkey_1, \vkey_2|w)$ if 
			both vertices $\vkey_1$ and $\vkey_2$ were in the \vlist and a 
			concurrent \remv removed $\vkey_1$ or $\vkey_2$ or both then the 
			\lp is the just after the \lp of the earlier \remv.
			\item At the time of invocation of $\adde(\vkey_1, \vkey_2|w)$ if 
			both vertices $\vkey_1$ and $\vkey_2$ were not present in the 
			\vlist, then the \lp is the invocation point itself.
		\end{enumerate}
	\end{enumerate}			
	\item $\reme(\vkey_1, \vkey_2)$:  We have two cases:
	\begin{enumerate}
		\item \eop{\tru}{$w$}: Edge already present \begin{enumerate}
			\item No concurrent $\remv(\vkey_1)$ or $\remv(\vkey_2)$ or $\reme(\vkey_1, \vkey_2)$: We have two cases
			\begin{enumerate}
			\item With less than two children: The \lp is at 
			\linref{reme-1stpath-start}, when the \enode is set as marked. 
			\item With two children: It needs two {\CAS}s to succeed, one on 
			the node having the key to be deleted, and second which has the 
			next largest key. So a successful \reme's \lp is at 
			\linref{helprelocate-cas1}, when the \relocateop is installed by 
			the \helprelocate method.
			\end{enumerate}	
			\item With concurrent $\remv(\vkey_1)$ or $\remv(\vkey_2)$ or 
			$\reme(\vkey_1, \vkey_2)$: The \lp is just before the first 
			remove's \lp. 
		\end{enumerate}	
	    \item \eop{\fal}{$\infty$}: Either $\vkey_1$ or $\vkey_2$ or both are not present
		\begin{enumerate}
		    \item No concurrent $\remv(\vkey_1)$ or $\remv(\vkey_2)$ or 
		    $\reme(\vkey_1, \vkey_2)$: The \lp is the end of the search path 
		    reached by reading a \nul pointer at \linref{find-read-curr-left} 
		    or \ref{lin:find-read-curr-right} in the \find method.
			\item At the time of invocation of $\reme(\vkey_1, \vkey_2)$ if 
			both vertices $\vkey_1$ and $\vkey_2$ were in the \vlist and a 
			concurrent \remv removed $\vkey_1$ or $\vkey_2$ or both then the 
			\lp is the just after the \lp of the earlier \remv.
			\item At the time of invocation of $\reme(\vkey_1, \vkey_2)$ if 
			both vertices $\vkey_1$ and $\vkey_2$ were not present in the 
			\vlist, then the \lp is the invocation point itself.
		\end{enumerate}
	
	\end{enumerate}			
	
		\item $\cone(\vkey_1, \vkey_2)$: We have two cases:
	\begin{enumerate}
		\item \eop{\tru}{$w$}: Edge present \begin{enumerate}
			\item No concurrent $\remv(\vkey_1)$ or $\remv(\vkey_2)$ or 
			$\reme(\vkey_1, \vkey_2)$: The \lp is the atomic read of the key 
			that occurs at \linref{find-read-curr} in the method \find.
			\item With concurrent $\remv(\vkey_1)$ or $\remv(\vkey_2)$ or 
			$\reme(\vkey_1, \vkey_2)$: The \lp is just before the first 
			remove's \lp. 
		\end{enumerate}	
	    \item \eop{\fal}{$\infty$}: Either $\vkey_1$ or $\vkey_2$ or both or $e(\vkey_2)$ are not present
		\begin{enumerate}
		    \item No concurrent $\remv(\vkey_1)$ or $\remv(\vkey_2)$ or 
		    $\reme(\vkey_1, \vkey_2)$: The \lp is the end of the searching path 
		    reached by reading a \nul pointer at \linref{find-read-curr-left} 
		    or \ref{lin:find-read-curr-right} in the \find method.
		   	\item At the time of invocation of $\cone(\vkey_1, \vkey_2)$ if 
		   	both vertices $\vkey_1$ and $\vkey_2$ were in the \vlist and a 
		   	concurrent \remv removes $\vkey_1$ or $\vkey_2$ or both then the 
		   	\lp is just after the \lp of the earlier \remv.
			\item At the time of invocation of $\cone(\vkey_1, \vkey_2)$ if 
			both vertices $\vkey_1$ and $\vkey_2$ were not present in the 
			\vlist, then the \lp is the invocation point itself.
		\end{enumerate}
		
	\end{enumerate}

	\item $\getbfs(\vkey):$	Here, there are two cases:
	\begin{enumerate}
		\item \getbfs invoke the \bfsscan method: Assuming that \bfsscan 
		invokes $m$ (greater than equal to 2) \bfstclt procedures, it is the 
		last atomic read step of the $(m-1)^{st}$ \bfstclt call.
		\item \getbfs does not invoke the \bfsscan method: If a concurrent \remv operation $op$ removed $\vkey$. Then just after the LP of $op$. If $\vkey$ did not exist in the \vlist before the invocation then at the invocation of $\getbfs(\vkey)$.
		\label{step:getbfs-b}
    \end{enumerate}
\item $\getsp(\vkey):$	Similarly, there are two cases here as well:
	\begin{enumerate}
		\item \getsp invoke the \spscan method: Assuming that \spscan invokes 
		$m$ (greater than equal to 2) \sptclt procedures, it is the last atomic 
		read step of the $(m-1)^{st}$ \sptclt call.
		\item \getsp does not invoke the \spscan method: The \lp is the same as 
		the case \ref{step:getbfs-b} of \getbfs operation. 
    \end{enumerate}
\ignore{
\item $\getdia():$	Similar to the above two operations:
	\begin{enumerate}
		\item \getdia invoke the \diascan method: Assuming that \diascan 
		invokes $m$ (greater than equal to 2) \diametertclt procedures, it is 
		the last atomic read step of the $(m-1)^{st}$ \diametertclt call.
		\item \getdia does not invoke the \diascan method: If a concurrent 
		\remv operation $op$ removed $\vkey_i$, $\forall i\in V$, then just 
		after the LP of $op$. 
    \end{enumerate}
}    
\item $\getbc(\vkey):$	Here, there are two cases:
	\begin{enumerate}
		\item \getbc invoke the \bcscan method: Assuming that \bcscan invokes $m$ (greater than equal to 2) \bctclt procedures. Then it is the last atomic read step of the $(m-1)^{st}$ \bctclt call.
		\item \getbc does not invoke the \bcscan method: The \lp is the same as 
		the case \ref{step:getbfs-b} of \getbfs operation. 
\end{enumerate}
\end{enumerate}
From the above description, one can notice that the \lp{s} of each of the operations lie in 
the interval between their invocation and the return steps. We can observe that in any invocation of a $\addv$ or a $\remv$ operation, 
with key value $\vkey$, there is always a unique \fset object for an operation to be applied. If two 
threads $T_1$ and $T_2$ try to add or delete the same key $\vkey$, after 
indexing in the hash table using modular arithmetic and both call the \invoke method. Thus, both either hash to the same bucket $b$, or at least one of them gets mapped to an immutable bucket. This prevents multiple \invoke methods parallely add or 
remove the same key at different buckets, and thereby a possible event invalidating linearizability is avoided. The $\adde$ and $\reme$ operations are similar to \cite{Howley+:NbkBST:SPAA:2012} except the case when the edge-weight is 
updated. However, update of edge-weight does not interfere with the arrangement 
of nodes in the BST corresponding to the \elist. The non-update operations, 
\conv, \cone, \getbfs, \getsp, and \getbc do not modify the \ds. Thus, 
following from \cite{Liu+:LFHash:PODC:2014} and \cite{Howley+:NbkBST:SPAA:2012} 
we conclude that all \nbk graph operations maintain the invariant of the \ds  
across the \lp{s}. This completes the correctness proof.
\end{proof}
\subsection{Non-blocking Progress Guarantee}
\begin{theorem}\normalfont
The presented concurrent graph operations
	\begin{enumerate}[label=(\roman*)]
		\item If the set of keys is finite, the operations \conv and \cone are wait-free.\label{lflem1}
		\item The operation \getbfs, \getsp, and \getbc are \of. \label{lflem2}
		\item The operations \addv, \remv, \conv, \adde, \reme, and \cone are \lf. \label{lflem3}
	\end{enumerate}
\end{theorem}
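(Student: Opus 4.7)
The plan is to prove each of the three clauses separately, exploiting the fact that the graph data structure is built as a composition of a lock-free hash-table (from Liu et al.) and lock-free BSTs (from Howley et al.), whose progress properties are already known; our job is essentially to lift those properties through our additional logic (pointer marking of \vnode/\enode, \opstruct bookkeeping, repeated tree-collects) without introducing any new blocking dependency.

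For clause \ref{lflem1}, I would first observe that both \conv and \cone contain no \textsc{Help}-style routines and never engage in \CAS loops whose termination depends on other threads' actions: \conv reduces to \hashcon which performs at most two bucket reads and then a bounded \hasmember scan, and \cone first performs two \hashcon calls and then invokes a traversal of one \elist BST from the root. Under the finite-key hypothesis, the hash bucket sizes and the BST depth are both finite, and the read-only descent of \find along child pointers is monotone downward. The only subtle point is that \find may restart on detecting a concurrent structural modification; I would argue that in the read-only variant used by \cone the retry can be replaced (or analysed) as a bounded walk, because a \conv/\cone never itself alters the tree and the retry condition can be bounded by a constant per traversed node in a finite-size tree. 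This yields a finite per-operation step bound independent of contention, which is exactly wait-freedom.

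For clause \ref{lflem2}, obstruction-freedom is proved by showing that a query thread running in isolation always terminates. \getgphalgo performs an initial mark check, then invokes \scan (or its \bfsscan/\spscan/\bcscan specialization), which alternates \treeclt with \comparetree. Running in isolation, the second \treeclt traverses exactly the same reachable set of \vnodes and \enodes as the first and reads the same \ecount values (no concurrent \adde/\reme/\addv/\remv perturbs them), so \comparetree succeeds on Line~\ref{lp1}/\ref{lp2} and the scan returns. Termination of a single \treeclt in isolation follows from the finiteness of the reachable subgraph together with the \visitedarray-based marking, which guarantees each \vnode is enqueued at most once per call (using \tcount as the per-call freshness tag), and from the finiteness of each \elist BST traversal, which is an in-order walk on a finite tree using an explicit stack. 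The SSSP-specific \checknegcycle and BC-specific back-propagation on \predlist are similarly bounded by the size of the collected tree.

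For clause \ref{lflem3}, lock-freedom follows by a system-wide progress argument using the lock-freedom of the component structures and the composability result of Dang~\cite{Dang2011}. Concretely: \addv and \remv reduce to \hashadd/\hashrem on Liu et al.'s lock-free hash-table and inherit its lock-freedom; \conv is even wait-free (by clause~\ref{lflem1}) and is a fortiori lock-free. For \adde and \reme, the vertex-presence prologue is wait-free, and the remainder of the code is a lock-free BST operation in the spirit of Howley et al., with the one addition that edge-weight updates are performed by a \CAS on \texttt{ce.\eweight} (Line~\ref{lin:adde-updtw}); since each such failed \CAS is accompanied by another thread having made a successful weight-update or having made the \enode logically absent, at least one \adde/\reme operation returns after each failed attempt, preserving lock-freedom. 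Finally, \cone is wait-free by clause~\ref{lflem1}. Crucially, no $M\in\mathcal{M}$ ever helps a $Q\in\mathcal{Q}$ and no $Q$ helps another $Q$, so queries cannot introduce circular waiting; the only helping done by updates is on other updates, within the confines of the Howley-style \childcasop/\relocateop mechanism whose lock-freedom is already established. The main obstacle I anticipate is being precise about the edge-weight \CAS contention case and the interaction of \adde with a concurrent \reme that re-installs the same edge, because this specific combination is not present in the underlying BST paper and needs a careful step-counting argument to rule out livelock between an \adde updating \eweight and a \reme marking the same node.
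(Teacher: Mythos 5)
Your proposal follows essentially the same route as the paper's proof: clause~\ref{lflem1} via finiteness of buckets and \elist traversals, clause~\ref{lflem2} by arguing that two consecutive \treeclt collections must match (and hence \comparetree succeeds) when a query runs without concurrent updates, and clause~\ref{lflem3} by inheriting lock-freedom from the underlying hash-table and BST together with their helping mechanisms. If anything, you are more explicit than the paper about the two delicate points it elides --- the restart behaviour of \find under the wait-freedom claim for \cone, and contention on the edge-weight \CAS at Line~\ref{lin:adde-updtw} --- so the argument is sound to at least the same standard as the published one.
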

\begin{proof}
If the set of keys is finite, the size of the concurrent graph has a fixed upper bound. This implies that there are only a finite number of \vnodes in each bucket. A search for a given \vnode having key $\vkey$ is either in the bucket $b$, or if $b$ is \nul, in the predecessor's buckets, so it terminates in a finite number of steps. Similarly, a \cone operation invokes the \find method and it terminates by traversing the tree until the key is found or a null node is reached.
A \getbfs operation or a \comparetree method never returns \tru with concurrent update operations, which impose the \textbf{While} loop (Line \ref{bfswhilescan1}) in \bfsscan method to not terminate. So, unless a non-faulty thread has taken the steps in isolation a \getbfs operation will never return.  A similar argument can be brought for a \getsp, and a \getbc operation. This shows the \ref{lflem2}. 
Whenever an insertion or a deletion operation is blocked by a concurrent delete operation by the way of a marked pointer, then that blocked operations is helped  to make a safe return. Generally, insertion and lookup operations do not need help by a concurrent operation. So, if any random concurrent execution consists of any concurrent \ds operation, then at least one operation finishes its execution in a finite number of steps taken be a non-faulty thread. Therefore,  the concurrent graph operations \addv, \remv, \conv, \adde, \reme, and \cone are \lf. This shows the \ref{lflem3}.
\end{proof}





\end{document}